%% file: cphe.tex
\documentclass[a4paper,11pt,unpublished]{quantumarticle}
\pdfoutput=1

\makeatletter
\providecommand{\@afterenddocumenthook}{}
\def\@hangfroms@section#1#2{\noindent#1#2}
\makeatother

\usepackage[utf8]{inputenc}
\usepackage[english]{babel}
\usepackage[T1]{fontenc}

\usepackage{amsmath,mathtools,amsthm}
\usepackage{amssymb}
\usepackage{dsfont}

\usepackage{xurl}
\usepackage[
    colorlinks,
    linkcolor={black!30!blue},
    citecolor={black!30!blue},
    urlcolor={black!30!blue}
]{hyperref} 
\usepackage{hyperxmp} 

\usepackage{tikz}
\usepackage{lipsum}

\usepackage{aliascnt} 
\usepackage{cleveref}
\crefname{algocf}{algorithm}{algorithms}
\Crefname{algocf}{Algorithm}{Algorithms}
\crefname{appendix}{appendix}{appendices}
\Crefname{appendix}{Appendix}{Appendices}
\crefname{theorem}{theorem}{theorems}
\Crefname{theorem}{Theorem}{Theorems}
\crefname{lemma}{lemma}{lemmas}
\Crefname{lemma}{Lemma}{Lemmas}
\crefname{conjecture}{conjecture}{conjectures}
\Crefname{conjecture}{Conjecture}{Conjectures}
\crefname{corollary}{corollary}{corollaries}
\Crefname{corollary}{Corollary}{Corollaries}
\crefname{problem}{problem}{problems}
\Crefname{problem}{Problem}{Problems}

\usepackage{siunitx}
\usepackage{booktabs}

\usepackage[numbers, sort&compress]{natbib}

\usepackage{enumitem}
\usepackage[nolist]{acronym}
\newacro{LP}{linear program}
\newacro{SDP}{semidefinite program}

\usepackage[ruled]{algorithm2e}
\usepackage{calc}

\newtheorem{theorem}{Theorem}
\newaliascnt{lemma}{theorem}
\newtheorem{lemma}[lemma]{Lemma}
\aliascntresetthe{lemma}
\newaliascnt{conjecture}{theorem}

\aliascntresetthe{conjecture}
\newaliascnt{corollary}{theorem}
\newtheorem{corollary}[corollary]{Corollary}
\aliascntresetthe{corollary}
\newaliascnt{proposition}{theorem}
\newtheorem{proposition}[proposition]{Proposition}
\aliascntresetthe{proposition}
\newaliascnt{problem}{theorem}
\newtheoremstyle{problemstyle}
  {\topsep}{\topsep}
  {\itshape}{0pt}
  {\bfseries}{.}
  { }{}
\theoremstyle{problemstyle}
\newtheorem{problemenv}[problem]{Problem}

\newlist{probspec}{description}{1}
\setlist[probspec]{
  nosep, topsep=0pt, partopsep=0pt,
  align=left, labelsep=0.5em,
  labelwidth=\widthof{\normalfont Question}, leftmargin=\widthof{\normalfont Question}+0.5em,
  font=\normalfont
}
\newcommand{\instance}{\item[Instance]}
\newcommand{\question}{\item[Question]}
\newcommand{\task}{\item[Task]}

\NewDocumentEnvironment{problem}{o}
  {\IfValueTF{#1}{\begin{problemenv}[#1]}{\begin{problemenv}}%
    \leavevmode\par\nobreak
    \begin{probspec}}
  {\end{probspec}\end{problemenv}}
\aliascntresetthe{problem}

\input{definitions}
\AtBeginDocument{\normalfont}

\newcommand{\tuhh}{Institute for Quantum Inspired and Quantum Optimization, Hamburg University of Technology, Germany}

\begin{document}

    \title{Fast Hamiltonian engineering from cut polytope geometry}

    \author{Thomas Joachim Friese}
    \email{thomas.friese@tuhh.de}
    \affiliation{\tuhh}
    \orcid{0009-0009-0495-2635}

    \author{Özgün Kum}
    \affiliation{\tuhh}
    \orcid{0000-0003-1466-4900}

    \author{Aram W.\ Harrow}
    \affiliation{Center for Theoretical Physics -- a Leinweber Institute, Massachusetts Institute of Technology, USA}
    \orcid{0000-0003-3220-7682}

    \author{Martin Kliesch}
    \email{martin.kliesch@tuhh.de}
    \affiliation{\tuhh}
    \orcid{0000-0002-8009-0549}

    \maketitle

    \hypersetup{
        pdftitle = {Fast Hamiltonian engineering from cut polytope geometry},
        pdfauthor = {Thomas Joachim Friese, Özgün Kum, Aram W. Harrow, Martin Kliesch},
        pdfsubject = {Quantum simulation},
        pdfkeywords = {Hamiltonian engineering, quantum simulation, analog quantum simulator, local control, pulse sequences,
            cut polytope, complex cut polytope, elliptope, semidefinite relaxation, Krivine rounding, randomized rounding,
            linear programming, NP-hardness, approximation algorithm, quantum evolution time, time-optimal gates,
            qubits, qudits, Pauli conjugation, clock model, chiral clock model, fermions, Harper--Hofstadter model,
            artificial gauge fields, Fermi--Hubbard model, Wendel's theorem, average Hamiltonian theory}
    }
    \begin{abstract}
        Hamiltonian engineering (HE) simulates quantum dynamics under a target Hamiltonian using a native entangling system Hamiltonian and restricted control, with applications from quantum gate design to analog quantum simulation.
        Observing that, for many systems, formulating HE as a linear program only requires specific commutation relations between the control pulses and the system Hamiltonian, we provide a unified framework for broad classes of $2$-local qubit, qudit, and fermionic systems.
        We reformulate time-optimal HE as a complex $k$-cut polytope problem, with $k$ the number of distinct phases in these commutation relations, and prove it NP-complete for every finite $k$.
        We therefore relax the polytope to the elliptope and apply a Krivine-type rounding, yielding pulses informed by the system and target Hamiltonians, which we then use in a linear program.
        We derive how many informed pulses are necessary and sufficient for the linear program to be reliably feasible.
        Additionally mixing in uniformly sampled pulses guarantees a solution close to the relaxation value from $\mathrm{O}(m)$ pulses, with $m$ the number of interaction terms.
        Together with upper and lower bounds on the optimal quantum run time, this yields an $\mathrm{O}(\sqrt{m})$ approximation ratio.
        In benchmarks on a fully connected Ising model, a chiral clock model for qudits, and the fermionic Harper--Hofstadter model, our approach attains near-optimal quantum run times wherever the optimum is computable, reaches run times that saturate with the lattice size in the fermionic case, and outperforms state-of-the-art methods with $\mathrm{O}(m)$ pulses.
        This establishes a unified approach with provable guarantees to the automatic programming of analog quantum simulators.
    \end{abstract}

    \section{Introduction}
    Simulating quantum many-body systems is broadly considered the most promising arena for quantum devices to deliver a practical advantage over classical computation~\citep{georgescu2014,daley2022}.
    Analog quantum simulators~\citep{bloch2012,altman2021,flannigan2022,trivedi2024} have emerged as powerful devices enabling the simulation of native qubit and qudit Hamiltonians, as well as fermionic and bosonic ones.
    However, the class of interactions which can be natively implemented on such devices are  restricted by experimental constraints. 
    Hamiltonian engineering allows simulating broad classes of interactive dynamics by using natively available control operations. 
    An important crux is limiting the amount of noise acting on the system. 
    Therefore, compiling control operations to minimize the total operation time is a crucial challenge for accurate quantum simulation, since long operation times expose quantum devices to decoherence. 
    
    Hamiltonian engineering by sequences of local pulses has a long history in nuclear magnetic resonance, where average Hamiltonian theory~\citep{haeberlen1968} and dynamical decoupling~\citep{viola1999} describe how interleaving the native evolution with fast control operations shapes the effective interaction.
    This approach has been extended to programmable simulators of spin models with trapped ions~\citep{hayes2014}, nitrogen-vacancy ensembles~\citep{choi2017,choi2020,zhou2024}, and Rydberg atoms~\citep{geier2021,scholl2022}, and, in the language of quantum information, to the question of which Hamiltonians can be simulated from a given one with local unitaries~\citep{lloyd1996,dodd2002,nielsen2002,cubitt2018}.
    The time-optimal version of this question is classical for two qubits~\citep{dur2001,khaneja2001,bennett2002}, and constructions based on Hadamard matrices~\citep{leung2002} and Walsh sequences~\citep{hayes2014,votto2024} realize Ising-type targets on many qubits with sequences whose lengths scale with the system size.
    For $2$-local qubit Hamiltonians conjugated by layers of Pauli gates, an explicit construction~\citep{garciadeandoin2026} and a bound on the total analog time~\citep{garciadeandoin2026a} are available.
    Finally, an automatically programmable framework has been developed, first in the case of Ising Hamiltonians \cite{bassler2023,bassler2024} and Pauli-X pulse layers and the extended to an efficient and robust framework for arbitrary qubit Hamiltonians and pulse layers given by local Clifford gates \cite{bassler2025}. 
    This framework has been extended to fermionic systems recently \cite{kum2026}. 
    
    The Hamiltonian engineering approaches~\citep{bassler2023,bassler2024,bassler2025,kum2026} conjugate time evolution under a global native Hamiltonian with layers of local operations, which are called \emph{pulses}, to implement an effective time evolutions under a given target Hamiltonian. 
    One can efficiently find a suboptimal set of pulses (single-qubit gate layers \citep{bassler2025} or layers of single-mode fermionic unitaries \citep{kum2026}) by solving a ``relaxed'' \ac{LP}, which otherwise scales exponentially with the number of qubits or fermionic modes.
    This \emph{efficient relaxation} is achieved via \emph{uninformed sampling}, i.e.\ uniformly and randomly sampling from the set of all possible pulses, and it is observed that simply sampling more layers reduces the total quantum run time while increasing the time required for the classical precomputation.
    This method is the state of the art against which we compare throughout, and we refer to it as the \emph{uninformed \ac{LP} relaxation}.

    In this work, we develop a unified framework for Hamiltonian engineering on broad classes of $2$-local qubit, qudit, and fermionic systems\footnote{For qubits and qudits, we require the symplectic binary representation of all system terms to have Hamming weight $2$. For fermions, the engineered interactions must not contain number operators.}. 
    It applies whenever a pulse acts on each interaction term of the system Hamiltonian only by a relative phase between the two involved sites, with the phases restricted to $k$ equally spaced values, where we explicitly also allow for $k=\infty$, i.e., arbitrary phases. 
    This phase commutation relation is the common structure of multi-qubit gates, Pauli conjugations, and fermionic systems, which were previously treated separately, and it reduces all of them to a specific geometric problem. 
    The optimal quantum run time is determined by the point at which a ray, fixed by the system and target couplings, leaves the complex $k$-cut polytope.
    In this formulation, established tools from combinatorial optimization become directly applicable.

    We prove that deciding whether a given point on this ray lies in the polytope is \NP-hard for every $k$ and \NP-complete for finite $k$, extending the known case $k = 2$~\citep{bassler2024}.
    We therefore relax the polytope to the elliptope and apply Krivine-type rounding, inverting the systematic distortion of the randomized rounding before sampling.
    This yields pulses that are \emph{informed} of the system and target Hamiltonians, in contrast to the uniformly sampled pulses of the uninformed \ac{LP} relaxation.

    From these pulses, we construct two algorithms.
    The \emph{informed \ac{LP} algorithm} restricts the \ac{LP} to the informed pulses.
    We derive a necessary number of pulses for its robust feasibility, as well as a sufficient one, which depends on the instance and is in general not linear in the number $m$ of interaction terms.
    On typical instances, we observe numerically that the probability of finding feasible, i.e., sufficiently expressible pulses sets in sharply at the necessary number of pulses. 
    To obtain a guarantee on every instance, the \emph{mixed \ac{LP} algorithm} additionally samples uniform pulses and returns with high probability a feasible solution close to the relaxation value from $\LandauO(m)$ pulses.
    Moreover, it allows finite pulse time errors \cite{bassler2025} to be suppressed.
    For both algorithms, we derive an upper bound on the quantum run time, which on the qubit Hamiltonians captured by our framework has the same form as the best known bound on the optimal solution~\citep{garciadeandoin2026a}.
    We complement it with a lower bound on the optimal quantum run time, which generalizes a qubit result of \citet{bassler2024}.
    To our knowledge, both are the first such bounds for qudits and fermions, and together they yield an approximation ratio of $\LandauO(\sqrt{m})$.

    We benchmark the informed \ac{LP} algorithm on qubit, qudit, and fermionic systems against the uninformed \ac{LP} relaxation and, for qubits, against the explicit construction of \citet{garciadeandoin2026}.
    Using a number of pulses linear in $m$, it performs far better than the worst-case guarantees suggest and attains near-optimal quantum run times wherever the optimum is computable.
    On the fermionic Hofstadter model, its quantum run time does not grow with the lattice size, whereas that of the uninformed \ac{LP} relaxation grows linearly in the linear lattice size.

    The rest of this paper is organized as follows.
    \Cref{sec:notation} introduces the notation used throughout this work.
    \Cref{sec:engineering} formulates the Hamiltonian engineering task together with the phase commutation framework and illustrates it with motivating examples.
    In \cref{sec:approx_algo}, we recast the task as an optimization over the complex $k$-cut polytope, establish its hardness, derive the semidefinite relaxation and rounding, introduce the informed and mixed \ac{LP} algorithms, and bound their quantum run times.
    \Cref{sec:applications} applies the framework to qubits, qudits, and fermions and presents the numerical benchmarks.
    We conclude and discuss open questions in \cref{sec:conclusion}.
    The appendices contain extended proofs (\cref{sec:hardness,sec:expectation,sec:feasibility_proofs,sec:mixed_proofs}), a numerical study of the feasibility transition (\cref{sec:feasibility_numerics}), and details on the numerical methods (\cref{sec:numerics}).

    \section{Notation}
    \label{sec:notation}
    We denote the sets of natural, real, and complex numbers by $\NN$, $\RR$, and $\CC$, respectively.
    For any natural number $k$, $\NN_{\geq k}$ denotes the set of natural numbers greater or equal to $k$.
    The set of integers from $1$ to $k$ is defined to be $[k] \coloneqq \Set{1, \dots, k}$.
    We define the complex conjugate of $z \in \CC$ to be $\conj{z}$, and the principal $k$-th root of unity to be
    \begin{equation}
        \uroot_k \coloneqq \exp \left( \i \frac{2 \pi}{k} \right).
    \end{equation}
    The set of all $k$-th roots of unity is represented by
    \begin{equation}
        \urootset_k \coloneqq \Set{\uroot_k^j \given j \in [k]}.
    \end{equation}
    Moreover, for ease of notation we denote the complex unit circle by 
    \begin{equation}
        \urootset_{\infty} \coloneqq \Set{z \in \CC \given \abs{z} = 1},
    \end{equation}
    and the closed unit disk by
    \begin{equation}
        \DD \coloneqq \Set{z \in \CC \given \abs{z} \leq 1}.
    \end{equation}

    Vectors are denoted by bold letters, e.g.\ $\vec{x}$, their entries by the corresponding non-bold subscripted letter, e.g.\ $x_i$, and matrices by uppercase letters, e.g.\ $A$.
    The set of indices with nonzero entries in a vector $\vec{x}$, or more generally the support of a function $f$, is defined to be $\nz(\vec{x})$ and $\nz(f)$, respectively.
    For $p \in [1, \infty]$, we denote by $\lpnorm{\vec{x}}$ the $\ell_p$-norm of a vector $\vec{x}$, and by $\lpnorm{f} \coloneqq \big( \sum_{c \in \nz(f)} \abs{f(c)}^p \big)^{1/p}$ that of a function $f$ with finite support, where $\lpnorm[\infty]{f} \coloneqq \max_{c \in \nz(f)} \abs{f(c)}$.
    For a matrix $A$, $\lpnorm{A}$ denotes the entrywise $\ell_p$-norm, i.e.\ the $\ell_p$-norm of $A$ regarded as a vector, and $\pnorm{A}$ denotes the Schatten $p$-norm, i.e.\ the $\ell_p$-norm of the vector of singular values of $A$.
    In particular, we use the spectral norm, the Frobenius norm, and the entrywise maximum norm, given by
    \begin{equation}
        \snorm{A} = \max_{\lpnorm[2]{\vec{x}} = 1} \lpnorm[2]{A \vec{x}},
    \end{equation}
    \begin{equation}
        \fnorm{A} = \sqrt{\Tr(A^\dagger A)} = \lpnorm[2]{A},
    \end{equation}
    \begin{equation}
        \lpnorm[\infty]{A} = \max_{i,j} \abs{A_{ij}}.
    \end{equation}
    If $A$ is positive semidefinite, we indicate this property by $A \succeq 0$. 
    We write $\Herm_n$ for the Hermitian $n \times n$ matrices, $\Herm_n^{\vec{0}}$ and $\Herm_n^{\vec{1}}$ for the Hermitian matrices whose diagonal is zero (\emph{hollow}) and all-ones, respectively, $\Herm_n(\KK)$ for those with entries in a subfield $\KK \subseteq \CC$, $\Sym_n$ for real symmetric matrices, $\innerp{A}{B} \coloneqq \Tr(A^\dagger B)$ for the Hilbert--Schmidt inner product, and $\arg(z) \in [0, 2\pi)$ for the argument of $z \in \CC \setminus \Set{0}$.
    For a function $f$ on a set $\mathcal{C}$, $f^{\elementwise}(\vec{x})$ and $f^{\elementwise}(A)$ denote the elementwise application of this function on a vector $\vec{x} \in \mathcal{C}^{n}$ and a matrix $A \in \mathcal{C}^{n \times m}$, respectively.

    Finally, $\1, X, Y, Z$ will usually refer to the single-qubit Pauli matrices, unless stated otherwise.
    The fermionic creation and annihilation operators are given by $c^\dagger$ and $c$, and the fermionic number operator is defined to be $c^\dagger c$. 
    Whenever a single-site operator is supposed to act on a specific site of a larger system, we denote this by subscripting with the site's index, e.g.\ $X_i$ is Pauli $X$ acting on site $i$. 

    \section{Hamiltonian engineering} 
    \label{sec:engineering}
    We start with the general formulation of our Hamiltonian engineering framework and then explain it further with examples. 
    \subsection{General formulation}
    We consider the task of simulating the time evolution of a target Hamiltonian $H_T$ using a native system Hamiltonian $H_S$ and access to a set of unitary pulses $\Set{U_{\vec{\theta}}}_{\vec{\theta} \in \Theta}$ with parameter set $\Theta$.
    To achieve this, we seek a decomposition of the target Hamiltonian $H_T$ as a finite conical combination of conjugations of the system Hamiltonian by the pulses,
    \begin{equation}\label{eq:decomposition}
        H_T = \sum_{\vec{\theta} \in \nz(\vec{\lambda})} \lambda(\vec{\theta}) U_{\vec{\theta}}^\dagger H_S U_{\vec{\theta}},
    \end{equation}
    with $\vec{\lambda}: \Theta \to \RR_{\geq 0}$ having finite support, and
    \begin{equation}
        \sum_{\vec{\theta} \in \nz(\vec{\lambda})} \lambda(\vec{\theta}) = \lpnorm[1]{\vec{\lambda}}.
    \end{equation}
    Then the Trotter--Suzuki formula~\citep{trotter1959a,suzuki1991} yields a protocol to synthesize the time evolution of $H_T$ as
    \begin{equation}\label{eq:trotter}
        \begin{split}
            &\e^{-\i t H_T} \approx \prod_{\vec{\theta}} U_{\vec{\theta}}^\dagger \e^{-\i \lambda(\vec{\theta}) t H_S} U_{\vec{\theta}}.
        \end{split}
    \end{equation}
    The product in \cref{eq:trotter} reproduces $\e^{-\i t H_T}$ exactly only if the conjugated summands in \cref{eq:decomposition} commute pairwise, and otherwise incurs a Trotter error, which is controlled in the standard way by repeating the pulse sequence~\citep{bassler2025,kum2026}.
    Splitting the evolution into $N_{\mathrm{Tro}}$ cycles of duration $t / N_{\mathrm{Tro}}$, the first-order formula in \cref{eq:trotter} approximates the target evolution up to a spectral-norm error $\LandauO \left( (\lpnorm[1]{\vec{\lambda}} \snorm{H_S} t)^2 / N_{\mathrm{Tro}} \right)$~\citep{childs2021}, because the summands are bounded by 
    \begin{equation}
        \sum_{\vec{\theta} \in \nz(\vec{\lambda})} \snorm{\lambda(\vec{\theta}) U_{\vec{\theta}}^\dagger H_S U_{\vec{\theta}}} = \lpnorm[1]{\vec{\lambda}} \snorm{H_S}.
    \end{equation}
    Higher-order Trotter--Suzuki formulas suppress this error further, with the recursively constructed formula of order $2p$~\citep{suzuki1991} attaining an error in $\LandauO \left( (\lpnorm[1]{\vec{\lambda}} \snorm{H_S} t)^{2p+1} / N_{\mathrm{Tro}}^{2p} \right)$~\citep{childs2021}, so that any accuracy $\epsilon > 0$ is reached with 
    \begin{equation}
        N_{\mathrm{Tro}} \in \LandauO \left( \lpnorm[1]{\vec{\lambda}} \snorm{H_S} t \left( \lpnorm[1]{\vec{\lambda}} \snorm{H_S} t / \epsilon \right)^{1 / 2p} \right)
    \end{equation}
    cycles.
    For the first- and second-order formulas, the accumulated evolution time under $H_S$ equals $\lpnorm[1]{\vec{\lambda}} t$ irrespective of $N_{\mathrm{Tro}}$, so that the Trotter error is suppressed at the expense of the number of pulses, which grows as $\LandauO ( 5^{p} \abs{\nz(\vec{\lambda})} N_{\mathrm{Tro}} )$, and not at the expense of the quantum run time.
    Beyond the second order, the formulas necessarily contain negative time steps~\citep{suzuki1991}, whose realization requires access to the reversed evolution under $H_S$ and inflates the quantum run time by a constant factor depending on $p$.
    In all cases, the error is governed by the same quantity $\lpnorm[1]{\vec{\lambda}}$ that determines the quantum run time, so that minimizing the latter simultaneously tightens the Trotter approximation.
    If the conjugated summands commute, no Trotter error arises for any ordering of the factors.
    If additionally the noise during the evolution under $H_S$ is described by a Lindbladian that commutes with the evolution under $H_S$ and is invariant under conjugation with the pulses, then the generators of all noisy factors in \cref{eq:trotter} commute, and the protocol implements the target dynamics subject to the same noise at a rate amplified by $\lpnorm[1]{\vec{\lambda}}$.
    If the conjugated summands do not commute, pulse sequences with equal quantum run times may differ in their Trotter errors and have to be assessed beyond $\lpnorm[1]{\vec{\lambda}}$, for instance by numerical simulation as in~\citep{bassler2025,kum2026}.
    Our primary goal is to determine a feasible coefficient assignment $\vec{\lambda}$ that minimizes the overhead in quantum run time, given by $\lpnorm[1]{\vec{\lambda}}$.

    We observe that for a wide range of physical systems, the Hamiltonians can be decomposed into ``simple'' operators $O_{ij}=O_{ji}^\dagger$ as
    \begin{equation}
        H_S = \sum_{i \not = j \in [n]} A_{ij} O_{ij},
        \quad
        H_T = \sum_{i \not = j \in [n]} B_{ij} O_{ij},
    \end{equation}
    with Hermitian matrices $A, B \in \CC^{n \times n}$ for some $n \in \NN$, where what ``simple'' means depends on the type of system.
    For $i_1 < j_1$, $i_2 < j_2$, we additionally assume linear independence of $O_{i_1j_1}$ and $O_{i_2j_2}$.
    For instance, we may have $n$-site lattices, with the operators $O_{ij}$ corresponding to $2$-local interactions.
    
    Our solution framework is then defined for any system where the operators $O_{ij}$ commute up to equally spaced phases with the applied unitaries.
    Specifically, we require that the unitaries can be parameterized by vectors $\vec{\theta} \in \Theta_k^n$ for some constant $k \in \NN_{\geq 2} \cup \Set{\infty}$, and fulfill
    \begin{equation}
        \label{eq:phase_commutation}
        O_{ij} U_{\vec{\theta}} = \e^{\i (\theta_i - \theta_j)} U_{\vec{\theta}} O_{ij}.
    \end{equation}
    The phase set $\Theta_k$ is defined as
    \begin{equation}
        \Theta_k \coloneqq 
        \begin{cases}
            \Set*{ \frac{j}{k} 2\pi \given j \in \Set{0, \dots, k-1} }, & \textnormal{if } k \in \NN_{\geq 2}, \\
            [0, 2 \pi), & \textnormal{if } k = \infty, 
        \end{cases}
    \end{equation}
    corresponding to the unit circle subset $\urootset_k$. 
    Such a unitary $U_{\vec{\theta}}$ can be thought of as a layer of local operations on the lattice, where each entry $\theta_i$ describes the phase acquired by the $i$-th lattice site.

    Whenever this relation is satisfied, the sought-after decomposition \eqref{eq:decomposition} is fulfilled if
    \begin{equation}\label{eq:BinTermsOfA}
        B_{ij} = \sum_{\vec{\theta} \in \nz(\vec{\lambda})} A_{ij} \lambda(\vec{\theta}) \e^{\i (\theta_i - \theta_j)}
    \end{equation}
    for all $i \not = j \in [n]$.
    Clearly, ${A_{ij} = 0 \implies B_{ij} = 0}$ is a necessary condition for the existence of a feasible solution.
    Wherever $A_{ij} \not = 0$, we define $M$ as the elementwise quotient of $B$ and $A$, i.e.\
    \begin{equation}
        M_{ij} \coloneqq \frac{B_{ij}}{A_{ij}}
    \end{equation}
    for all $(i, j) \in \nz(A)$.
    The remaining constraints on a valid decomposition \eqref{eq:BinTermsOfA} are then
    \begin{equation}
        \label{eq:decomposition_constraints}
        \sum_{\vec{\theta} \in \nz(\vec{\lambda})} \lambda(\vec{\theta}) e^{\i (\theta_i - \theta_j)} = M_{ij}
    \end{equation}
    for all $(i, j) \in \nz(A)$.
    Combining this with the minimization of $\lpnorm[1]{\vec{\lambda}}$ leads to a \acf{LP}:
    \begin{equation*}
        \tag{GeneralLP}
        \label{eq:general_lp}
        \begin{alignedat}{3}
            &\textnormal{min} \enspace &&\lpnorm[1]{\vec{\lambda}} \\
            &\textnormal{s.t.} \enspace &&\sum_{\mathclap{\vec{\theta} \in \nz(\vec{\lambda})}} \lambda(\vec{\theta}) \e^{\i (\theta_i - \theta_j)} = M_{ij}, \enspace && \forall (i,j) \in \nz(A), \\
            &&& \lambda(\vec{\theta}) \geq 0, && \forall \vec{\theta} \in \Theta_k.
        \end{alignedat}
    \end{equation*}
    For special cases of systems and unitaries, this \ac{LP} was studied extensively by \citet{bassler2023,bassler2024,bassler2025,kum2026}.
    Because it has an exponential number of variables in $n$ for finite $k$, and an infinite number of variables if $k = \infty$, they \cite{bassler2025,kum2026} rely on uniformly sampling a small subset of pulses $B \subseteq \Theta_k$ and restrict the \ac{LP} to these.
    Numerical results show that for $m \coloneqq \abs{\nz(A)} / 2$ system terms, $\LandauO(m)$ sampled pulses suffice to find an exact solution to the \ac{LP}.
    However, because the sampling is uninformed of the specific system and target Hamiltonians, this relaxation yields excessive quantum run times.

    We take an alternative approach.
    By expressing the constraint as a ray over a polytope, we efficiently approximate the minimization of $\lpnorm[1]{\vec{\lambda}}$.
    Our algorithms are detailed in \cref{sec:approx_algo}.

    The introduced quantities have direct physical readings.
    The matrices $A$ and $B$ collect the coupling strengths of the system and the target Hamiltonian, so the entries of $M$ state by which factor each interaction available in the system has to be reweighted.
    Since conjugating $O_{ij}$ by $U_{\vec{\theta}}$ multiplies it by the phase $\e^{\i (\theta_i - \theta_j)} \in \urootset_k$, the attainable reweightings are exactly the conical combinations of such phases, and $\lpnorm[1]{\vec{\lambda}}$ is the total weight spent on them.
    The parameter $k$ counts the equally spaced phases that the local operations can imprint, ranging from coarse, discrete pulse sets for small $k$ to continuously tunable single-site rotations for $k = \infty$.
    Complex entries $M_{ij}$ therefore require $k > 2$, and we assume throughout that $M$ is real whenever $k = 2$.
    We further exclude the trivial case $M = 0$, for which $\vec{\lambda} = 0$ is optimal.

    So far, we have treated the pulses $U_{\vec{\theta}}$ as instantaneous.
    On actual hardware, every pulse has a finite duration $t_{\mathrm{p}} > 0$, during which the native Hamiltonian $H_S$ is always on and keeps acting on the system.
    The conjugations in \cref{eq:decomposition} are therefore not realized exactly.
    The average Hamiltonian of a pulse block acquires an additional \emph{finite pulse time error} term of first order in $t_{\mathrm{p}} \snorm{H_S}$, which accumulates over the pulses of the sequence~\citep{votto2024,bassler2025}.
    After the Trotter error, this is the dominant error source of the protocols considered here.
    Moreover, suppressing the Trotter error by increasing the number of cycles $N_{\mathrm{Tro}}$, or by passing to a higher-order Trotter--Suzuki formula, multiplies the number of applied pulses and thereby inflates the accumulated finite pulse time error.
    We return to this error, and to its interplay with our sampling scheme, at the end of \cref{sec:mixed}.

    \subsection{Motivating examples}
    \label{sec:motivating_examples}
    The simplest example of a pairing of Hamiltonians and unitaries satisfying the outlined properties is a $Z$-type Ising model with $X$-type conjugations.
    That is,
    \begin{equation}
        O_{ij} = Z_i Z_j
    \end{equation}
    on $n$ qubits, as well as
    \begin{equation}
        U_{\vec{\theta}} = \prod_{i=1}^{n} \e^{-\i \theta_i (\1 - X_i) / 2}
        = \prod_{i=1}^{n} \begin{cases}
            X_i, & \textnormal{if } \theta_i = \pi, \\
            \1, & \textnormal{if } \theta_i = 0,
        \end{cases}
    \end{equation}
    for $\vec{\theta} \in \Theta_2^n$.
    It is easy to check that \cref{eq:phase_commutation} is satisfied here because of the standard Pauli anticommutations.
    This setting for Hamiltonian engineering was extensively investigated in~\citep{bassler2023,bassler2024}.

    Using the symplectic representation of the Pauli group, we can naturally extend this formulation to also allow for other Pauli terms in the Hamiltonians.
    For $n = 2q + 1$ and $i \in [n]$, denote by
    \begin{equation}
        \label{eq:examples_single_pauli}
        P_i \coloneqq \begin{cases}
            \1, & \textnormal{if } i = 2q+1, \\
            X_i, & \textnormal{if } i \leq q, \\
            Z_{i - q}, & \textnormal{if } q+1 \leq i \leq 2q,
        \end{cases}
    \end{equation}
    a single $X$ or $Z$ operator on $q$ qubits.
    Then, we may choose 
    \begin{equation}
        \label{eq:examples_single_interaction}
        O_{ij} =
        \begin{cases}
            P_i P_j, & \textnormal{if } [P_i, P_j] = 0, \\
            \i P_i P_j, & \textnormal{if } \{P_i, P_j\} = 0 \textnormal{ and } i < j, \\
            -\i P_i P_j, & \textnormal{if } \{P_i, P_j\} = 0 \textnormal{ and } i > j.
        \end{cases}
    \end{equation}
    Notice that thus $O_{ij}$ can be either any $2$-local interaction of $X$ and $Z$ operators or a single $X$, $Y$, or $Z$ operator.
    Pairs with a $Y$ and some other nontrivial operator cannot be encoded in this way, so we cannot capture all $2$-local qubit Hamiltonians.
    The valid system terms are exactly those, that have at most Hamming weight $2$ in their binary symplectic representation.
    The unitaries are given accordingly by Pauli strings
    \begin{equation}
        \begin{aligned}
            U_{\vec{\theta}} 
            &= \prod_{i=1}^{q} \e^{-\i \theta_{i} (\1 - Z_i) / 2} \e^{-\i \theta_{i + q} (\1 - X_i) / 2}\\
            &= \prod_{i=1}^{q} \begin{cases}
                \1, & \textnormal{if } \theta_i = 0, \theta_{i+q} = 0, \\
                X_i, & \textnormal{if } \theta_i = 0, \theta_{i+q} = \pi, \\
                Z_i, & \textnormal{if } \theta_i = \pi, \theta_{i+q} = 0, \\
                \i Y_i, & \textnormal{if } \theta_i = \pi, \theta_{i+q} = \pi,
            \end{cases}
        \end{aligned}
    \end{equation}
    for $\vec{\theta} \in \Theta_2^{2q} \times \Set{0}$.
    Here, we fixed the phase on the dummy index $2q+1$ to $0$, which enables us to engineer also the single-qubit terms.
    We can do this without violating our framework, because \cref{eq:phase_commutation} is global phase invariant, so we may always shift all phases such that $\theta_{2q+1} = 0$.
    This generalized setting of Hamiltonian engineering for Pauli Hamiltonians with Pauli conjugations was, without our locality constraints, investigated in~\citep{bassler2025}.
    We discuss this and the former application of our method to qubit systems in greater detail in \cref{sec:qubit_application}.

    By replacing the notion of Pauli operators with the generalized shift and clock operators, we find the corresponding formulation for $k$-level qudits.
    That is, we define
    \begin{equation}
        X = \sum_{j=0}^{k-1} \ketbra{j+1 \mod k}{j},
        \quad
        Z = \sum_{j=0}^{k-1} \uroot_k^{j} \ketbra{j}{j}
    \end{equation}
    as acting on a single qudit.
    $P_i$ is then defined as in \cref{eq:examples_single_pauli}, using the redefined $X$ and $Z$ operators.
    The interaction terms are given on $q$ qudits with $n = 2q+1$ by 
    \begin{equation}
        O_{ij} = P_i P_j^\dagger.
    \end{equation}
    We can thus again encode all two-term interactions of $X$ and $Z$ operators, but not the full Weyl--Heisenberg group.
    For generating the unitaries, consider the qudit number and phase operators
    \begin{equation}
        N = \sum_{j=0}^{k-1} j \ketbra{j}{j},
        \quad
        \Phi = \sum_{\ell=0}^{k-1} \ell \ketbra{\phi_{\ell}}{\phi_{\ell}},
    \end{equation}
    with the Fourier basis
    \begin{equation}
        \ket{\phi_{\ell}} = \frac{1}{\sqrt{k}} \sum_{j=0}^{k-1} \uroot_k^{-\ell j} \ket{j}.
    \end{equation}
    Then,
    \begin{equation}
     \e^{\i \frac{2\pi}{k} \Phi} = X,
        \quad
     \e^{\i \frac{2\pi}{k} N} = Z,
    \end{equation}
    are the shift and clock operators, so we take
    \begin{equation}
        U_{\vec{\theta}} = \prod_{i=1}^{q} \e^{-\i \theta_{i} N_i} \e^{-\i \theta_{i+q} \Phi_i},
    \end{equation}
    for $\vec{\theta} \in \Theta_k^{2q} \times \Set{0}$.
    Because of the relation
    \begin{equation}
        Z X = \uroot_k X Z,
    \end{equation}
    it is easy to see that \cref{eq:phase_commutation} is satisfied.
    More generally, for qudits with $d \geq k$ levels, 
    restricting $\vec{\theta}$ to $\Theta_k^{2q} \times \Set{0}$ satisfies \cref{eq:phase_commutation} for every $k$ dividing $d$, since $\Theta_k \subseteq \Theta_d$.
    Conjugations of this kind, with the goal to engineer qudit systems, were studied in~\citep{alvarez-ahedo2025}.
    We provide our full discussion in \cref{sec:qudit_application}.

    As a final motivating example, consider a fermionic system with $n$ different modes.
    Then we take
    \begin{equation}
        O_{ij} = c_i^\dagger c_j,
    \end{equation}
    with the fermionic creation and annihilation operators $c^\dagger$ and $c$. 
    Taking inspiration from the qudit example, we may consider the exponentiation of the fermionic number operator, which fulfills
    \begin{equation}
        c^\dagger \e^{-\i \theta c^\dagger c} = \e^{\i \theta} \e^{-\i \theta c^\dagger c} c^\dagger,
        \quad
        c \e^{-\i \theta c^\dagger c} = \e^{-i \theta} \e^{-\i \theta c^\dagger c} c.
    \end{equation}
    Thus, we take
    \begin{equation}
        U_{\vec{\theta}} = \prod_{i=1}^{n} \e^{-\i \theta_i c_i^\dagger c_i}
    \end{equation}
    for $\vec{\theta} \in \Theta_k^n$, and \cref{eq:phase_commutation} is satisfied for all $k \in \NN_{\geq 2} \cup \Set{\infty}$.
    Hamiltonian engineering with this type of conjugations on fermionic systems was studied in~\citep{kum2026}.
    We provide a more detailed discussion in \cref{sec:fermion_application}.
    
    The previous examples were only concerned with \num{2}-local systems.
    This is expected, as the $O_{ij}$ naturally encode interactions on two sites $i$ and $j$.
    However, by encoding a superlattice over the system, higher-order terms may be engineered.
    This comes at the cost of losing control over some lower-order terms.
    As an example, consider the \num{4}-qubit Hamiltonian
    \begin{equation}
        H_{S} = Z_1 Z_2 + Z_3 Z_4 + Z_1 Z_2 Z_3 Z_4.
    \end{equation}
    Then, taking $n = 2$ with combined sites $(1, 2)$ and $(3, 4)$, together with $O_{12} = Z_1 Z_2 Z_3 Z_4$, and $U_{\vec{\theta}} = (X_1 X_2)^{\theta_1 / \pi} (X_3 X_4)^{\theta_2 / \pi}$ for $\vec{\theta} \in \Theta_2^2$ gives control over the \num{4}-local term, while keeping the \num{2}-local terms invariant.

    \section{Hardness and approximation algorithms}
    \label{sec:approx_algo}
    In this section, we develop the theoretical core of this work. 
    We first reformulate the \ac{LP} \eqref{eq:general_lp} geometrically: 
    Its constraints describe a ray in the space of Hermitian matrices, and the optimal quantum run time is determined by the point at which this ray leaves the complex $k$-cut polytope.
    Deciding whether a given point of the ray lies in this polytope turns out to be \NP-hard for every $k$, so that an exact solution is out of reach in general.
    In \cref{sec:relax}, we therefore relax the polytope to the elliptope and remove the systematic distortion of the standard randomized rounding by bending the ray.
    This relaxation yields pulses that are informed of the system and target Hamiltonians.
    \Cref{sec:informed,sec:mixed} turn these pulses into the informed and the mixed \ac{LP} algorithm and analyze how many pulses they require, and \cref{sec:bounds} bounds the quantum run times of both algorithms against the optimum.

    \subsection{Cut polytope formulation}
    We identify $\vec{\theta} \in \Theta_k^n$ with $\vec{x} \in \urootset_{k}^n$ by
    \begin{equation}
        x_i \coloneqq \e^{\i \theta_i}
    \end{equation}
    for all $i \in [n]$.
    Then the commutation factors correspond to the entries of the outer product matrix $\vec{x} \vec{x}^\dagger$, i.e.\
    \begin{equation}
        \e^{\i (\theta_i - \theta_j)} = \left( \vec{x} \vec{x}^\dagger \right)_{ij}.
    \end{equation}

    To illustrate the geometric motivation of our algorithm, assume that $A$ is nonzero for all off-diagonal entries $A_{ij}$, $i \not = j$.
    Then the constraint~\eqref{eq:decomposition_constraints} corresponds to the matrix equality
    \begin{equation}
        \sum_{\vec{x} \in \nz(\vec{\lambda})} \lambda(\vec{x}) \vec{x} \vec{x}^\dagger = \lpnorm[1]{\vec{\lambda}} \1 + M.
    \end{equation}
    Here, the identity term has to be added, because the diagonal of $M$ consists of only zeros, whereas each $\vec{x} \vec{x}^\dagger$ has only ones on the diagonal.
    Dividing the equation by $\lpnorm[1]{\vec{\lambda}}$ decouples the left- and right-hand side into independently parameterized matrices:
    \begin{equation}
        \label{eq:poly_ray}
        \sum_{\vec{x} \in \nz(\vec{\lambda})} \frac{\lambda(\vec{x})}{\lpnorm[1]{\vec{\lambda}}} \vec{x} \vec{x}^\dagger = \1 + \frac{1}{\lpnorm[1]{\vec{\lambda}}} M.
    \end{equation}
    On the left-hand side, any point in the convex hull
    \begin{equation}
        \conv \Set*{\vec{x} \vec{x}^\dagger \given \vec{x} \in \urootset_k^n} =: \CUT_k^n
    \end{equation}
    can be generated.
    We call $\CUT_k^n$ the complex $k$-cut polytope, following \citet{sinjorgo2024}, who introduced it for general $k$ and study its facets and semidefinite liftings.
    It is related to the MAX-$k$-CUT problem~\citep{frieze1997}, to complex quadratic optimization~\citep{zhang2006,so2007}, and intensively studied for the case of $k=2$, see e.g.~\citep{barahona1986,deza1997}.
    With the substitution $\gamma \coloneqq \frac{1}{\lpnorm[1]{\vec{\lambda}}} > 0$, the right-hand side $\1 + \gamma M$ of \cref{eq:poly_ray} describes a ray originating from $\1$ and propagating in the direction $M$. 
    We refer to this ray as the \emph{solution ray}.
    All feasible solutions lie in the intersection of the solution ray with $\CUT_k^n$, with the optimum being where the solution ray meets the boundary of the cut polytope, so where the total evolution time $\lpnorm[1]{\vec\lambda}$ is smallest. 
    This geometric formulation of the problem is visualized in \cref{fig:poly_ray}.

    \begin{figure}
        \centering
        \includegraphics{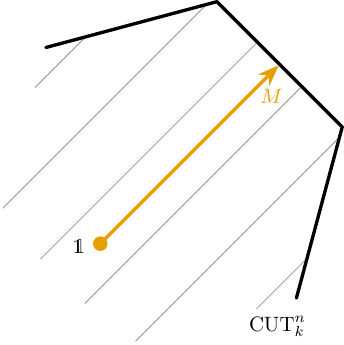}
        \caption[Visualization of the solution ray and the cut polytope.]{Visualization of the solution ray and the cut polytope.
                All solutions lie on the intersection of the ray originating from $\1$, going in the direction $M$, and $\CUT_k^n$.
                Better solutions lie further along the ray.
                The optimum is where the ray meets the boundary of $\CUT_k^n$.}
        \label{fig:poly_ray}
    \end{figure}

    If there are off-diagonal zero entries in $A$, we can extend $M$ to a full matrix by setting
    \begin{equation}
        \label{eq:effective_matrix}
        M_{ij} \coloneqq \begin{cases}
            B_{ij} / A_{ij}, & \textnormal{if } A_{ij} \not = 0, \\
            0, & \textnormal{if } A_{ij} = 0.
        \end{cases}
    \end{equation}
    Then, \cref{eq:poly_ray} no longer contains all feasible solutions, due to the added constraints where $A_{ij} = 0$.
    It however still forms the basis of our approximation algorithm.

    \Cref{eq:poly_ray} motivates a solution approach:
    First optimize the ray over $\CUT_k^n$, then convexly decompose the found point into rank-$1$ matrices.
    However, already maximizing a ray over $\CUT_k^n$ is \NP-hard.
    This was shown by \citet{bassler2024} for $k=2$, and we show the general case in the following.
    In \cref{sec:relax}, we instead introduce a semidefinite relaxation of the ray maximization.
    
    We first define the relevant families of decision problems.
    For both, we take $k$ to be an instance-independent constant, i.e.\ not part of the input.
    In the following, use $\uroot_{\infty} \coloneqq 1$, s.t.\ $\QQ(\uroot_{\infty}, \i) = \QQ(\i)$ are the Gaussian rationals. 
    The other cyclotomic fields are denoted by $\QQ(\uroot_k, \i)$. 

    \begin{problem}[$\CUT_k^n$ ray]
        \label{prob:cut_ray}
        \instance A matrix ${M \in \Herm_{n}^{\vec{0}}(\QQ(\uroot_k, \i))}$ and a distance $\gamma \in \QQ_{\geq 0}$.
        \question Is $\1 + \gamma M \in \CUT_k^n$?
    \end{problem}
    \noindent
    We will see in \cref{th:ray_membership_equiv} that \cref{prob:cut_ray} is equivalent to the following membership problem.

    \begin{problem}[$\CUT_k^n$ membership]
        \label{prob:cut_membership}
        \instance A matrix ${X \in \Herm_{n}^{\vec{1}}(\QQ(\uroot_k, \i))}$.
        \question Is $X \in \CUT_k^n$?
    \end{problem}

    It is well known that the $\CUT_2^n$ membership problem is \NP-complete:
    By \citet{pitowsky1991} the membership problem of the correlation polytope is \NP-complete, and via \citet{desimone1990} the correlation polytope on $n$ vertices is isomorphic to $\CUT_2^{n+1}$ via a polynomial-time computable map.
    See also \citep{caprara2026} for details on the combined statement.
    For $k > 2$, \citet{sinjorgo2024} state \NP-hardness of linear optimization over $\CUT_k^n$, but, to the best of our knowledge, no such statement regarding the membership problem of $\CUT_k^n$ exists currently in the literature.
    We close this gap.

    \begin{proposition}
    \label{th:membership_hardness}
    For every $k \in \NN_{\geq 2} \cup \Set{\infty}$, \cref{prob:cut_membership} is \NP-hard.
    Moreover, for finite $k$ it is \NP-complete.
    \end{proposition}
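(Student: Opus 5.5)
We reduce from the known case $k=2$ and then handle membership in NP separately.

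\emph{Hardness.} The starting point is the \NP-completeness of $\CUT_2^n$ membership (Pitowsky together with De Simone). The idea is to embed a real instance $X \in \Herm_n^{\vec{1}}(\QQ)$ into $\CUT_k^n$ without changing the answer. For even $k$, and also for $k=\infty$, we have $\pm 1 \in \urootset_k$, so $\CUT_2^n \subseteq \CUT_k^n$. We then need the converse on real matrices: if a real $X$ lies in $\CUT_k^n$, then $X \in \CUT_2^n$. We plan to prove this by symmetrization combined with a projection. Write $X = \sum_{\vec{x}} \lambda(\vec{x}) \vec{x}\vec{x}^\dagger$. Since $X$ is real, $X = \operatorname{Re} X$. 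Now average each rank-one term over a global phase $\phi$, applying randomized rounding per entry. Concretely, map $\vec{x}$ to $\vec{s} = \operatorname{sign}^{\elementwise}(\operatorname{Re}(\e^{\i\phi}\vec{x}))$. The rounded matrices do not reproduce $\operatorname{Re}(x_i\conj{x_j})$ exactly; by the Goemans--Williamson computation they give an arcsine-type distortion. So this route is unlikely to give an exact identity.

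We therefore fall back on a linear-functional argument. The real polytope $\CUT_2^n$ is described by valid inequalities $\innerp{W}{Y} \le c$ with $W$ real symmetric. It suffices to show that every such inequality remains valid on $\CUT_k^n$, i.e.\
\begin{equation}
\max_{\vec{x}\in\urootset_k^n} \vec{x}^\dagger W \vec{x} \le \max_{\vec{s}\in\{\pm1\}^n} \vec{s}^\top W \vec{s}.
\end{equation}
This inequality is false in general; the complex Grothendieck gap shows the complex maximum can be strictly larger. So a direct embedding of arbitrary real instances fails, and we need a gadget instead.

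The plan is therefore to reduce from a concrete \NP-hard problem. We choose MAX-$k$-CUT for finite $k$, and for $k=\infty$ complex quadratic maximization over the torus, which is \NP-hard by Zhang--So. The steps are as follows.
\begin{enumerate}
\item Note that linear optimization over $\CUT_k^n$ is \NP-hard, as stated by Sinjorgo et al.\ and coming from MAX-$k$-CUT.
\item Convert optimization into membership by the standard equivalence of optimization and separation/membership for polytopes with polynomially bounded encoding (Grötschel--Lovász--Schrijver).
\end{enumerate}
The vertices $\vec{x}\vec{x}^\dagger$ have entries in $\QQ(\uroot_k)$, so the facet complexity is polynomial in $n$ for fixed $k$. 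A polynomial-time membership oracle would then give polynomial-time optimization via the ellipsoid method, so membership is \NP-hard under Turing reductions. For $k=\infty$ the set is a convex body rather than a polytope, but it contains a ball around $\1$ within the affine hull of the hollow-plus-identity matrices. The weak GLS equivalence then applies, and hardness follows from \NP-hardness of approximating $\max_{\vec{x}\in\urootset_\infty^n}\vec{x}^\dagger W\vec{x}$ to within polynomially small additive error.

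\emph{Membership in NP for finite $k$.} By Carathéodory, if $X \in \CUT_k^n$ then $X$ is a convex combination of at most $\dim+1 \le n^2$ vertices $\vec{x}\vec{x}^\dagger$, with $\vec{x}\in\urootset_k^n$. The certificate consists of these vertices, each described by $n$ exponents in $\Set{0,\dots,k-1}$. Given the vertices, the weights are obtained by solving a linear system over $\QQ(\uroot_k)$, which has rational solutions of polynomial size, and the verifier checks feasibility and nonnegativity by Gaussian elimination over $\QQ$, using a basis of the cyclotomic field.

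\emph{Main obstacle.} The hardest step is the $k=\infty$ case together with the precision bookkeeping in the GLS reduction, namely rational inputs over $\QQ(\uroot_k,\i)$ and the inner ball radius. If that fails, the fallback is to cite Sinjorgo et al.\ for optimization hardness and prove an explicit polynomial-size gadget reduction.
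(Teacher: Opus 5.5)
Your \NP-membership argument for finite $k$ is essentially the paper's (a sparse basic decomposition as certificate, with weights via Cramer's rule). One slip: the weights are in general not rational but lie in $\QQ(\uroot_k,\i)\cap\RR$. The hardness part, however, has two genuine gaps.

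\textbf{Finite $k$.} The route through Gr\"otschel--Lov\'asz--Schrijver does not go through as stated. Their strong equivalences between optimization and membership are for well-described, i.e.\ rational, polyhedra, and $\CUT_k^n$ is in general not of this kind. For example, $\CUT_5^2\cong\conv\urootset_5$ is a regular pentagon, which is not even affinely equivalent to a rational polygon, since a diagonal and its parallel side have golden-ratio length ratio. A weak version with a discreteness gap might repair this. Even then, the ellipsoid method only gives a Turing reduction, so you would get hardness under Cook reductions, not the many-one \NP-completeness claimed for finite $k$.

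The missing idea is the gadget you were looking for. Embedding via $\pm1\in\urootset_k$ fails, as you noticed. The paper instead encodes the binary choice by the two adjacent roots $1,\uroot_k$ and pins it via a face. For a real $X$ with first row $\vec{u}$, and $c_k=\cos(2\pi/k)$, $s_k=\sin(2\pi/k)$, it sets
\[
Y=\tfrac{1+c_k}{2}J+\tfrac{1-c_k}{2}X+\i\tfrac{s_k}{2}\bigl(\vec{1}\vec{u}^T-\vec{u}\vec{1}^T\bigr).
\]
Its first column $Y_{i1}=1+\tfrac{1-X_{i1}}{2}(\uroot_k-1)$ lies on the edge $\conv\Set{1,\uroot_k}$ of $\conv\urootset_k$. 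Since this is a face, every decomposition of $Y$ normalized to $y_1=1$ uses only vectors with $y_i\in\Set{1,\uroot_k}$. The identity $\uroot_k^{(x_j-x_i)/2}=\tfrac12\bigl(1+c_k+(1-c_k)x_ix_j+\i s_k(x_j-x_i)\bigr)$ for $x_i,x_j\in\Set{\pm1}$ then transfers decompositions between $X$ and $Y$ in both directions; for soundness one takes real parts. This is a many-one reduction, so together with your certificate it gives \NP-completeness.

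\textbf{$k=\infty$.} Your outline coincides with the paper's: weak optimization reduces to weak membership for a centered convex body, and weak membership trivially reduces to strong membership. But you assume exactly the ingredient that carries the proof, namely that quadratic optimization over the torus is \NP-hard even to polynomially small additive accuracy. Exact \NP-hardness does not imply this. The feasible set is continuous, so a no-instance may a priori come arbitrarily close to the threshold, and the known reduction from matrix partition asserts no gap.

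The paper proves one. With $Q=A^TA$ from that reduction, every $X\in\CUT_\infty^n$ of a no-instance satisfies $\innerp{Q}{X}\ge 1/(36N\Gamma^2)$, where $G=[G_1,\dots,G_N]$ is the integer partition matrix and $\Gamma=\max_i\lpnorm[2]{G_i}$. The proof splits $\lpnorm[2]{A\vec{x}}^2$ along the two row blocks of $A$. It then distinguishes whether $T=\sum_i\abs{\cos\phi_i-\tfrac12}$ is small or large, and uses that the integral vector $\sum_i\epsilon_iG_i$, with $\epsilon_i$ the signs of $\sin\phi_i$, is nonzero. Only with this gap can $\varepsilon$ be chosen of polynomial encoding length so that the weak oracle separates yes- from no-instances. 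The inner ball you also assert is the easier part; the paper gets radius $1/2$ around $\1$ from its Krivine analysis.
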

    \noindent
    First, a many-one reduction from the $\CUT_2^n$ to the $\CUT_k^n$ membership problem for each $k \in \NN_{\geq 3}$ asserts hardness for finite $k$.
    Then, we show that for finite $k$, \cref{prob:cut_membership} lies in \NP, proving \NP-completeness.
    Finally, a Turing reduction from the optimization problem over $\CUT_{\infty}^{n}$, which is known to be \NP-hard \citep{zhang2006}, to the $\CUT_{\infty}^n$ membership problem establishes hardness for $k = \infty$.
    The full proofs of \cref{th:membership_hardness} are provided in \cref{sec:hardness}.

    \begin{lemma}
        \label{th:ray_membership_equiv}
        Over $\CUT_k^n$, the ray maximization (\cref{prob:cut_ray}) and the membership problem (\cref{prob:cut_membership}) are polynomial-time equivalent. 
    \end{lemma}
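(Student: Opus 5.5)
Both directions are direct, since both problems are stated over the same field $\QQ(\uroot_k, \i)$ and the two instance types differ only by the identity.

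\emph{Ray $\le$ membership.} Given an instance $(M,\gamma)$ of \cref{prob:cut_ray}, I set $X \coloneqq \1 + \gamma M$. Because $M$ is hollow and Hermitian and $\gamma \in \QQ_{\geq 0}$, the matrix $X$ is Hermitian with all-ones diagonal and entries in $\QQ(\uroot_k, \i)$. So $X$ is a valid instance of \cref{prob:cut_membership}, and trivially $\1 + \gamma M \in \CUT_k^n$ if and only if $X \in \CUT_k^n$. Computing $X$ takes one multiplication per entry, so its encoding length is polynomial in that of $(M,\gamma)$, including the coefficient representation in the cyclotomic basis.

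\emph{Membership $\le$ ray.} Given $X \in \Herm_n^{\vec{1}}(\QQ(\uroot_k,\i))$, I set $M \coloneqq X - \1$ and $\gamma \coloneqq 1$. Then $M$ is hollow, Hermitian, has entries in the same field, and satisfies $\1 + \gamma M = X$. This is again a polynomial-time many-one reduction.

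I expect no real obstacle; the only checks are that both instance encodings live over the same field and that $\gamma$ is a nonnegative rational. If instead ``ray maximization'' is read as computing $\gamma^\ast = \max\{\gamma : \1+\gamma M \in \CUT_k^n\}$, I would add a remark. Since $\1 \in \CUT_k^n$ (average $\vec{x}\vec{x}^\dagger$ over uniform random $\vec{x}$) and the polytope is convex, the set of feasible $\gamma$ is an interval $[0,\gamma^\ast]$. A membership oracle therefore gives $\gamma^\ast$ to any precision by binary search. For finite $k$ the polytope has rational-type vertex data, so $\gamma^\ast$ has polynomially bounded bit size, and exact recovery follows by standard rounding arguments. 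Conversely, knowing $\gamma^\ast$ decides membership of $X$ via the direction $M = X - \1$ by comparing $1 \le \gamma^\ast$. The decision version above is what is used for \cref{th:membership_hardness}, so the two many-one reductions suffice.
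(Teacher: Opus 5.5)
Your proof is correct and is the paper's proof: both directions are the many-one reductions $(M,\gamma)\mapsto \1+\gamma M$ and $X\mapsto (X-\1,\,1)$, which exchange the instance classes $\Herm_n^{\vec{0}}$ and $\Herm_n^{\vec{1}}$ over $\QQ(\uroot_k,\i)$. The closing remark on an optimization reading is not needed, since the lemma concerns only the two decision problems, and it is only sketched: for $k\ge 3$ the exit value $\gamma^\ast$ can be irrational (for $k=3$, $n=2$, $M_{12}=\i$ one gets $\gamma^\ast=1/\sqrt{3}$), so it is best left as an aside.
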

    \begin{proof}
        Let $M \in \Herm_n^{\vec{0}}(\QQ(\uroot_k, \i))$, $\gamma \in \QQ_{\geq 0}$.
        Then $(M, \gamma)$ is a yes-instance of \cref{prob:cut_ray} if and only if $X \coloneqq \1 + \gamma M \in \Herm_n^{\vec{1}}(\QQ(\uroot_k, \i))$ is a yes-instance of \cref{prob:cut_membership}.

        Conversely, let $X \in \Herm_n^{\vec{1}}(\QQ(\uroot_k, \i))$.
        Take $M \coloneqq X - \1 \in \Herm_n^{\vec{0}}(\QQ(\uroot_k, \i))$ and $\gamma \coloneqq 1$.
        Then $X$ is a yes-instance of \cref{prob:cut_membership} if and only if $(M, \gamma)$ is a yes-instance of \cref{prob:cut_ray}.

        Both maps are polynomial-time computable many-one reductions.
    \end{proof}

    \begin{corollary}
        \label{th:ray_hardness}
        For every $k\in \NN_{\geq 2} \cup \Set{\infty}$, the $\CUT_k^n$ ray problem (\cref{prob:cut_ray}) is \NP-hard. Moreover for finite $k$, it is \NP-complete.
    \end{corollary}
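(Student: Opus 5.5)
The corollary follows by combining \cref{th:membership_hardness} with \cref{th:ray_membership_equiv}, so I would argue purely by transferring hardness and membership in \NP{} along the two many-one reductions constructed in the proof of that lemma.

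\emph{Hardness.} I start with \NP-hardness for every $k \in \NN_{\geq 2} \cup \Set{\infty}$. By \cref{th:membership_hardness}, \cref{prob:cut_membership} is \NP-hard. The second map in the proof of \cref{th:ray_membership_equiv}, namely $X \mapsto (X - \1, 1)$, is a polynomial-time many-one reduction from \cref{prob:cut_membership} to \cref{prob:cut_ray}. Composing any reduction to \cref{prob:cut_membership} with this map gives a reduction to \cref{prob:cut_ray}.

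For finite $k$ this composition is many-one. For $k = \infty$, the hardness in \cref{th:membership_hardness} is only via a Turing reduction. Prepending the many-one map to each oracle query still yields a polynomial-time Turing reduction, so \NP-hardness in the Turing sense transfers as well. I would state explicitly that the corollary inherits whichever notion of hardness \cref{th:membership_hardness} gives.

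\emph{Membership in \NP.} For finite $k$, I use the first map $(M, \gamma) \mapsto \1 + \gamma M$. It is computable in polynomial time: multiplying by a rational $\gamma$ and adding the identity keeps the entries in $\QQ(\uroot_k, \i)$ with polynomially bounded encoding size. The resulting matrix lies in $\Herm_n^{\vec{1}}(\QQ(\uroot_k, \i))$, since $M$ is hollow and Hermitian and $\gamma$ is real. Deciding \cref{prob:cut_ray} therefore amounts to running the \NP{} verifier for \cref{prob:cut_membership}, which exists by \cref{th:membership_hardness}, on $\1 + \gamma M$. Hence \cref{prob:cut_ray} is in \NP, and together with hardness it is \NP-complete.

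There is no real obstacle here. The only point needing care is the remark that the $k = \infty$ hardness is a Turing reduction, so the composition must be phrased accordingly, together with the routine check that the encoding size is preserved under the maps.
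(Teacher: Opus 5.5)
Your proposal is correct and matches the paper's proof: \NP-hardness is transferred along the map $X \mapsto (X - \1, 1)$, and membership in \NP{} for finite $k$ along $(M,\gamma) \mapsto \1 + \gamma M$, both taken from \cref{th:ray_membership_equiv}. Your extra remark that for $k = \infty$ the inherited hardness holds only under Turing reductions is a precision the paper leaves implicit.
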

    \begin{proof}
        By \cref{th:membership_hardness} the membership problem (\cref{prob:cut_membership}) is \NP-hard for every $k \in \NN_{\geq 2} \cup \Set{\infty}$, and by \cref{th:ray_membership_equiv} it reduces to the ray problem (\cref{prob:cut_ray}) in polynomial time. 
        Hence, the latter is \NP-hard as well.

        For finite $k$, \cref{th:membership_hardness} moreover places \cref{prob:cut_membership} in \NP.
        Since \cref{th:ray_membership_equiv} also reduces \cref{prob:cut_ray} to \cref{prob:cut_membership} by a many-one reduction, \cref{prob:cut_ray} lies in \NP\ and is therefore \NP-complete.
    \end{proof}

    \subsection{Semidefinite relaxation and rounding}
    \label{sec:relax}
    We defined the complex $k$-cut polytope as the convex hull of rank-$1$ positive-semidefinite matrices with entries in $\urootset_k$.
    By dropping the nonconvex rank-$1$ and $\urootset_k$ constraints, we relax this set to the elliptope~\citep{laurent1995}:
    \begin{equation}
        \elliptope_n \coloneqq \Set*{ X \in \CC^{n \times n} \given X \succeq 0, \ X_{ii} = 1 \ \forall i \in [n] }.
    \end{equation}
    Notice that membership in $\elliptope_n$ can be efficiently verified.
    
    The elliptope corresponds to the set of covariance matrices, where the individual variances are all one.
    Thus, to round a relaxed matrix $X \in \elliptope_n$ to an extremal point $\vec{x} \vec{x}^\dagger$ of $\CUT_k^n$, it is natural to generate randomized samples with the covariance matrix $X$.
    For $k = 2$, the matrix $X$ is real, and we sample a real Gaussian vector $\vec{\xi} \sim \mathcal{N}(0, X)$.
    For $k \geq 3$, the Hermitian matrix $X$ only dictates the covariance $\EE[\vec{\xi} \vec{\xi}^\dagger]$ of a complex random vector $\vec{\xi}$, but not its pseudo-covariance $\EE[\vec{\xi} \vec{\xi}^T]$.
    Requiring the latter to vanish yields the \emph{circularly symmetric complex Gaussian} vectors $\vec{\xi} \sim \mathcal{CN}(0, X)$, whose real and imaginary parts are jointly Gaussian with zero mean and whose distribution is invariant under global phase rotations $\vec{\xi} \mapsto \e^{\i \varphi} \vec{\xi}$.
    Such a vector can be generated as $\vec{\xi} = X^{1/2} (\vec{g} + \i \vec{h}) / \sqrt{2}$, where $\vec{g}$ and $\vec{h}$ are independent standard real Gaussian vectors.
    For finite $k$, the entries of $\vec{\xi}$ are projected onto $\urootset_k$ by applying a rounding function $\sigma_k$.
    One possible choice is
    \begin{equation}
        \sigma_k(\xi_i) \coloneqq \begin{cases}
            1, & \textnormal{if } \arg(\xi_i) \in [ 0, \frac{1}{k} 2 \pi ), \\
            \uroot_k, & \textnormal{if } \arg(\xi_i) \in [ \frac{1}{k} 2 \pi, \frac{2}{k} 2 \pi ), \\
            \vdots \\
            \uroot_k^{j}, & \textnormal{if } \arg(\xi_i) \in [ \frac{j}{k} 2 \pi, \frac{j+1}{k} 2 \pi ), \\
            \vdots \\
            \uroot_k^{k-1}, & \textnormal{if } \arg(\xi_i) \in [ \frac{k-1}{k} 2 \pi, 2 \pi ),
        \end{cases}
    \end{equation}
    which is depicted in \cref{fig:projection}.
    If $k = \infty$, the entries of $\vec{\xi}$ only have to be normalized, such that each has modulus one.
    We set
    \begin{equation}
        \sigma_{\infty}(\xi_i) \coloneqq \frac{\xi_i}{\abs{\xi_i}}.
    \end{equation}
    \begin{figure}
        \centering
        \includegraphics{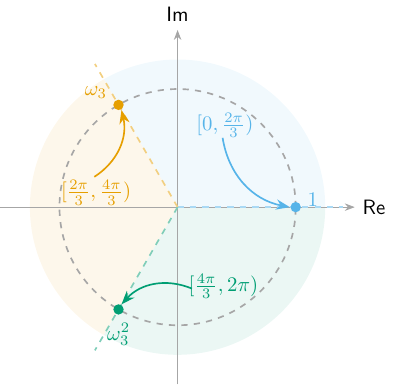}
        \caption[A possible projection onto the $k$-th roots of unity.]{A possible projection onto the $k$-th roots of unity.
                The complex plane is cut from the origin into $k$ equiangular sections.
                In counterclockwise direction, each section is assigned a $k$-th root of unity.
                Rotating the sections around the origin yields rounding functions with equivalent distributions.
        }
        \label{fig:projection}
    \end{figure}
    The resulting vector is given by $\vec{x} = \sigma_k^{\elementwise}(\vec{\xi})$.
    This randomized rounding method is typically applied to complex semidefinite programs.
    It was first introduced for the real case with $k=2$ by \citet{goemans1995} and \citet{bertsimas1998}, extended to complex values and $k=3$ by \citet{goemans2004}, and finally formalized in this general way by \citet{zhang2006}, see also \citet{so2007,huang2010}.
    Moreover, \citet{zhang2006} show that the expectation of the outer product matrix $\vec{x} \vec{x}^\dagger$ can be calculated elementwise from $X$ by
    \begin{equation}
        \EE[\vec{x} \vec{x}^\dagger] = \RDF_k^{\elementwise}(X),
    \end{equation}
    with the continuous \emph{rounding distortion function} 
    \begin{equation}
        \RDF_k : \DD \to \DD
    \end{equation}
    given by
    \begin{equation}
        \RDF_k(z) = C_k \sum_{j=0}^{k-1} \uroot_k^j \arccos^2 \left( - \Re \left( \uroot_k^{-j} z \right) \right),
    \end{equation}
    with the $k$-dependent constant
    \begin{equation}
        C_k \coloneqq \frac{k (2 - \uroot_k^{-1} - \uroot_k)}{8 \pi^2} \in \RR_{> 0},
    \end{equation}
    for finite $k$, and
    \begin{equation}
        \begin{split}
            &\RDF_{\infty}(z) \\
            &\quad = \frac{1}{4 \pi} \int_{0}^{2 \pi} \e^{\i \vartheta} \arccos^2 (- r \cos(\vartheta - \phi)) \, \dd \vartheta
        \end{split}
    \end{equation}
    with $z = r \e^{\i \phi}$ for $k=\infty$. 
    In the following lemma we collect important properties of the distortion function, that will be used extensively in the remainder of this section and \cref{sec:bounds}.
    The full proofs are provided in \cref{sec:expectation}.

    \begin{lemma}     
        \label{th:expectation}   
        For all $k \in \NN_{\geq 3} \cup \Set{\infty}$, $\RDF_k$ fulfills the following properties:
        \begin{enumerate}
            \item $\RDF_k(0) = 0$, and $\RDF_k(\e^{\i \theta}) = \e^{\i \theta}$ for all $\theta \in \Theta_k$. In particular, $\RDF_k(1) = 1$.
            \item $\RDF_k(\conj{z}) = \conj{\RDF_k(z)}$ for all $z \in \DD$.
            \item $\RDF_k$ is injective on $\DD$.
            \item $\abs{\RDF_k(z)} \geq L_k \abs{z}$ for all $z \in \DD$. 
            The $k$-dependent constant $L_k$ is given by
            \begin{equation}
                L_k \coloneqq C_k k \frac{\pi}{2} \begin{cases}
                    1, & \textnormal{if } k > 3, \\
                    1 - \frac{\pi}{4}, & \textnormal{if } k = 3,
                \end{cases}
            \end{equation}
            for finite $k$, and
            \begin{equation}
                L_{\infty} \coloneqq \frac{\pi}{4}.
            \end{equation}
            \item The image of $\RDF_k$ contains the closed disk of radius $L_k$:
            \begin{equation}
                \Set{w \in \CC \given \abs{w} \leq L_k} \subseteq \RDF_k(\DD).
            \end{equation}
            In particular, $\RDF_k^{\elementwise -1}(X)$ is well-defined for every matrix $X$ with $\lpnorm[\infty]{X} \leq L_k$.
        \end{enumerate}
        Moreover, $\RDF_2$ fulfills the following weaker properties on real values:
        \begin{enumerate}[resume]
            \item $\RDF_2(-1) = -1$, $\RDF_2(0) = 0$, and $\RDF_2(1) = 1$.
            \item $\RDF_2$ is injective on $[-1, 1]$.
            \item For all $x \in [-1, 1]$:
            \begin{equation}
                \abs{\RDF_2(x)} \geq \frac{2}{\pi} \abs{x} =: L_2 \abs{x}.
            \end{equation}
        \end{enumerate}
    \end{lemma}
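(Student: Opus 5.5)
The plan is to work from an explicit representation of $\RDF_k$ as a Fourier-type series in $z$ and $\conj z$, and read all properties off that representation. Write $z = r\e^{\i\phi}$. Since $\vec{\xi}$ is circularly symmetric, the joint law of $(\arg \xi_i, \arg\xi_j)$ depends only on $X_{ij}$. For $k=\infty$ there is the classical formula
\begin{equation}
\EE[x_i\conj{x_j}] = \frac{\pi}{4}\, z\, {}_2F_1\left(\tfrac12,\tfrac12;2;\abs{z}^2\right) = \frac{\pi}{4} z \sum_{\ell\ge 0} a_\ell \abs{z}^{2\ell},
\end{equation}
with all $a_\ell > 0$ and $a_0 = 1$. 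For finite $k$, the sector rounding $\sigma_k$ expands in angular harmonics $\e^{\i m\arg\xi}$ with $m \equiv 1 \pmod k$. Its expectation is therefore a sum over $m \equiv 1 \pmod k$ of terms $c_m\, \rho_m(r)\, \e^{\i m\phi}$, where the $\rho_m$ are radial functions with $\rho_m(r) = \LandauO(r^{\abs{m}})$. I will first verify that this series agrees with the closed form of \citet{zhang2006} via the expansion of $\arccos^2(-t)$.

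Properties 1 and 2 can be checked directly. $\RDF_k(0)=0$ holds because independent circular phases average to zero. For $\theta \in \Theta_k$, the value $z = \e^{\i\theta}$ means $\xi_i = \e^{\i\theta}\xi_j$ almost surely, so $x_i = \uroot_k^{\cdot}x_j$ exactly: a rotation by a multiple of $2\pi/k$ commutes with $\sigma_k$ up to the same root. Conjugation symmetry follows from the reality of the coefficients in the formula, or equivalently from the reflection $\xi\mapsto\conj\xi$ after accounting for the half-open sector boundaries, which is a null event. Equivariance under $z \mapsto \uroot_k z$ also follows, giving $\RDF_k(\uroot_k z) = \uroot_k \RDF_k(z)$.

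For properties 4 and 5, I would isolate the leading harmonic $m=1$. Its contribution is $c_1 \rho_1(r)\e^{\i\phi}$ with $\rho_1(r) \geq r$ times a positive constant. The remaining harmonics $m = 1-k, 1+k, \dots$ have amplitude bounded by a summable series, and I aim to show that they reduce the modulus by at most the factor appearing in $L_k$. For $k>3$, all nonleading terms should be controllable so that $\abs{\RDF_k(z)} \ge C_k k \frac{\pi}{2}\abs{z}$. For $k = 3$, the $m=-2$ term $\propto \conj z^2$ is the worst case, and bounding it by $\frac{\pi}{4}$ of the leading term gives the factor $1-\pi/4$. For $k=\infty$, the bound is immediate from positivity of the $a_\ell$. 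Property 5 then follows from a degree argument: on every circle $\abs{z}=r$, the map $\RDF_k$ winds once around $0$ (since the leading term dominates) with modulus at least $L_k r$. By continuity and $\RDF_k(\DD)$ being compact, every $w$ with $\abs{w}\le L_k$ is attained, because the winding number of $\RDF_k(\partial\DD)$ about $w$ is $1$.

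Property 3, injectivity, is the main obstacle. For $k=\infty$ it is easy: $\RDF_\infty(z) = z\,g(\abs{z})$ with $r\mapsto r g(r)$ strictly increasing, so both the argument and the modulus are determined. For finite $k$, I would try to show the Jacobian determinant is positive on the interior of $\DD$, i.e. $\abs{\partial_z \RDF_k} > \abs{\partial_{\conj z}\RDF_k}$, using the same dominance of the leading harmonic. Combined with the fact that the boundary circle is mapped injectively (to be checked from the explicit formula, probably via monotonicity of the argument along $\abs{z}=1$), a sense-preserving local homeomorphism that is injective on the boundary is globally injective by the argument principle. If the Jacobian estimate fails near $\abs{z}=1$, I would fall back to a direct monotonicity argument in polar coordinates within one fundamental sector of angle $2\pi/k$ and then use the rotation equivariance.

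The $k=2$ items are the Grothendieck identity $\RDF_2(x)=\frac{2}{\pi}\arcsin x$. That function is odd, fixes $0$ and $\pm1$, and is strictly increasing. Convexity on $[0,1]$ gives $\abs{\arcsin x}\ge\abs{x}$, hence the stated bound with $L_2 = 2/\pi$.
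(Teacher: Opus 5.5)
Your finite-$k$ injectivity argument (property 3) is not yet a proof. Both of its ingredients are left open. The injectivity on $\abs{z}=1$ is left ``to be checked''. The Jacobian estimate $\abs{\partial_z\RDF_k}>\abs{\partial_{\conj z}\RDF_k}$ cannot come from bounding the higher harmonics in modulus near $\abs{z}=1$. There their radial derivatives diverge logarithmically, while the margin $\abs{\partial_z}-\abs{\partial_{\conj z}}$ supplied by the $m=1$ term stays bounded.

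The missing idea is that $\RDF_k$ is a strictly monotone map. With $h(t)=\arccos^2(-t)$, $u_j=\Re(\uroot_k^{-j}z_1)$ and $v_j=\Re(\uroot_k^{-j}z_2)$, the closed form gives
\[
\Re\bigl((\RDF_k(z_1)-\RDF_k(z_2))\conj{(z_1-z_2)}\bigr)=C_k\sum_{j=0}^{k-1}\bigl(h(u_j)-h(v_j)\bigr)(u_j-v_j).
\]
Every summand is nonnegative because $h$ is strictly increasing on $[-1,1]$. Some $u_j\neq v_j$ whenever $z_1\neq z_2$, because for $k\geq 3$ the $k$-th roots of unity span $\CC$ over $\RR$. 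This yields injectivity on all of $\DD$, boundary included, without any topology. It also shows that the real Jacobian is symmetric positive definite in the interior, so your Jacobian claim is true, just not reachable by harmonic bookkeeping.

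For property 4 with $k>3$ there is a second gap. Normalizing $\rho_m(1)=1$, your coefficients are $c_m=2kC_k/m^2$, so the $m=1$ term is exactly $L_k\,z\,{}_2F_1(\tfrac12,\tfrac12;2;\abs{z}^2)$. The target constant $L_k$ is therefore the derivative of $\RDF_k$ at $0$, and there is no first-order slack. All higher harmonics must be absorbed by the surplus $L_k\abs{z}\bigl({}_2F_1(\tfrac12,\tfrac12;2;\abs{z}^2)-1\bigr)$ alone.

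Bounding them by their amplitudes, as you propose, cannot give the stated constant for $k=4$. At $\abs{z}=1$ it yields only $\frac{8}{\pi^2}-\bigl(1-\frac{8}{\pi^2}\bigr)=\frac{16}{\pi^2}-1\approx 0.621$, which is below $L_4=\frac{2}{\pi}\approx 0.637$. You would need phase information, or the separable closed form $\RDF_4(z)=\frac{2}{\pi}(\arcsin\Re z+\i\arcsin\Im z)$. Since your degree argument for property 5 relies on $\abs{\RDF_k}\geq L_k$ on $\abs{z}=1$, the gap propagates there.

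The paper avoids harmonics altogether. It bounds $\Re(\conj z\RDF_k(z))=C_k\sum_j x_j\arccos^2(-x_j)$, with $x_j=\Re(\uroot_k^{-j}z)$, using $\arccos(-x)=\frac{\pi}{2}+\arcsin x$, $x\arcsin x\geq x^2$ and $\sum_j x_j^2=\frac{k}{2}\abs{z}^2$. It then shows that the odd remainder $\sum_j x_j\arcsin^2 x_j$ vanishes for even $k$ and is dominated termwise by the surplus $\pi x_j(\arcsin x_j-x_j)$ for odd $k\geq 5$.

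The rest of your plan is sound:
\begin{itemize}
\item The $k=\infty$ and $k=2$ arguments are the paper's in different notation.
\item The amplitude bound does work for $k=3$, once all higher harmonics (not just $m=-2$) are included. Each $\rho_m(r)\leq r$, being a power series with nonnegative coefficients and $\rho_m(1)=1$, and their total weight $2kC_k\bigl(\frac{4\pi^2}{27}-1\bigr)$ is below $\frac{\pi^2}{8}kC_k$. This even improves $L_3$ to about $0.22$.
\item Your winding-number proof of property 5 does not need injectivity, unlike the paper's invariance-of-domain argument.
\end{itemize}
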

        
    Now we can state a relaxation scheme for maximizing the ray over $\CUT_k^n$.
    Because membership in $\elliptope_n$ can be efficiently verified, we may optimize the relaxed ray
    \begin{equation}
        \1 + \gamma M \in \elliptope_n
    \end{equation}
    efficiently, for example via a binary search.
    Applying the randomized rounding would then yield pulses $\vec{x}$ with expected outer product matrix 
    \begin{equation}
        \label{eq:systematic_deviation}
        \EE[\vec{x} \vec{x}^\dagger] = \RDF_k^{\elementwise}(\1 + \gamma M) = \1 + \RDF_k^{\elementwise}(\gamma M).
    \end{equation}
    Notice that \cref{th:expectation} allowed us to pull the identity out of the distortion function, as $\gamma M$ is zero on the diagonal.
    This approach is depicted in \cref{fig:elliptope_ray}.

    \begin{figure}
        \centering
        \includegraphics[width=0.4\textwidth]{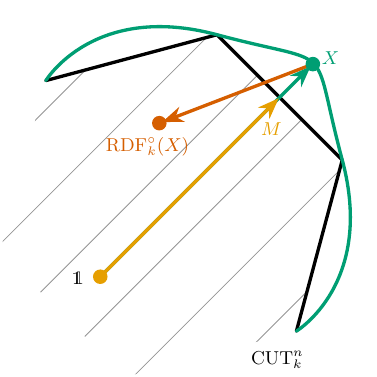}
        \caption{Intuitively, one might try to solve the elliptope relaxation of the solution ray directly.
                Then the rounding distortion would, however, introduce a systematic deviation away from the ray, leading to an ineffective selection of pulses.}
        \label{fig:elliptope_ray}
    \end{figure}

    Because the distortion function $\RDF_k$ is a nonlinear function, \cref{eq:systematic_deviation} will deviate systematically from the solution ray in general.
    To prevent this distortion, we will apply the inverse of $\RDF_k$, which we know to exist by \cref{th:expectation} on the image $\RDF_k^{\elementwise}(\elliptope_n)$, before rounding.
    This is also known as \emph{Krivine rounding}, as it was first applied by \citet{krivine1979} for improved bounds on the Grothendieck constant.
    Such Krivine schemes are the standard tool for rounding in Grothendieck-type inequalities~\citep{alon2006,briet2014,naor2014}.
    We thus optimize the ray \emph{bent} by the inverse distortion:
    \begin{equation}
        \label{eq:bended_ray}
        \RDF_k^{\elementwise -1}(\1 + \gamma M) = \1 + \RDF_k^{\elementwise -1}(\gamma M) \in \elliptope_n.
    \end{equation}
    Then, rounding to vectors $\vec{x}$ is guaranteed to yield outer product matrices with expected value on the ray
    \begin{equation}
        \EE[\vec{x} \vec{x}^\dagger] = \RDF_k^{\elementwise}\left( \1 + \RDF_k^{\elementwise -1} ( \gamma M ) \right) = \1 + \gamma M.
    \end{equation}
    We depict this method in \cref{fig:bent_ray}.

    \begin{figure}
        \centering
        \includegraphics[width=0.4\textwidth]{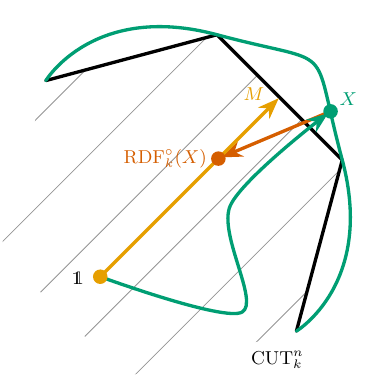}
        \caption{Instead of optimizing the relaxed solution ray directly, we can bend it by the inverse of the distortion function $\RDF_k^{\elementwise -1}$.
                Then, in the rounding stage the sampled pulses will average out back onto the solution ray.
                This fully prevents the systematic deviation observed before.}
        \label{fig:bent_ray}
    \end{figure}

    Note that we can still employ a binary search to optimize \cref{eq:bended_ray}, because we can check the existence of $\RDF_k^{\elementwise -1}(\gamma M)$, as well as verify the inclusion in $\elliptope_n$ efficiently.
    The numerical inversion of $\RDF_k$ is detailed in \cref{sec:numerics}.
    In \cref{sec:bounds}, we show that setting $\gamma$ to $\gamma_{\mathrm{lo}} \coloneqq \frac{L_k}{\fnorm{M}} \sqrt{\frac{n}{n-1}}$ satisfies both.
    Moreover, since $\vec{x} \vec{x}^\dagger$ is a random extreme point of the compact convex set $\CUT_k^n$ and its expectation therefore lies in $\CUT_k^n$,
    \begin{equation}
        \RDF_k^{\elementwise} : \elliptope_n \to \CUT_k^n,
    \end{equation}
    so a necessary condition for 
    \begin{equation}
        \RDF_k^{\elementwise -1} (\1 + \gamma M) \succeq 0
    \end{equation}
    is 
    \begin{equation}
        \1 + \gamma M \succeq 0,
    \end{equation}
    which is violated for $\gamma > - 1 / \mu_{\min}(M)$.
    Hence, the search can be constrained to the range 
    \begin{equation}
        \gamma \in \left[\frac{L_k}{\fnorm{M}} \sqrt{\frac{n}{n-1}}, - \frac{1}{\mu_{\min}(M)}\right].
    \end{equation}
    Each iteration step of the binary search then costs $\LandauO(n^3)$, yielding $\LandauO(n^3 \log(1 / \tau))$ overall where $\tau$ is the tolerance.

    We are, however, no longer guaranteed to find the maximum $\gamma$.
    The binary search might terminate, when the bent ray leaves the elliptope for the first time, say at $\hat{\gamma}$.
    However, the bent ray is no longer linear, so it could re-enter $\elliptope_n$ for a larger $\gamma > \hat{\gamma}$.
    Nonetheless, we employ the ray binary search as outlined in \cref{alg:ray_binary_search}.
    In \cref{sec:bounds}, we derive lower bounds on the solution $\hat{\gamma}$ found by this binary search.    
    One could consider alternative ray-tracing techniques, which we will not do in this work.

    \begin{algorithm}
        \caption{Ray binary search}\label{alg:ray_binary_search}
        \DontPrintSemicolon
        \SetKwInOut{Input}{Input}\SetKwInOut{Output}{Output}
        \SetKwFunction{OnRay}{OnRay}
        \SetKwProg{Fn}{Function}{:}{}
        \BlankLine
        \Input{Hollow Hermitian matrix $M$; tolerance $\tau > 0$.}
        \Output{$\hat{\gamma} > 0$ with $\1 + \RDF_k^{\elementwise -1}(\hat{\gamma} M) \in \elliptope_n$, and some $\gamma \in (\hat{\gamma}, \hat{\gamma} + \tau]$ violating this.}
        \BlankLine
        \Fn{\OnRay{$\gamma$}}{
        \lIf{$\gamma M_{ij} \notin \RDF_k(\DD)$ for some $i \neq j$}{\Return \textnormal{false}}
        \Return $\1 + \RDF_k^{\elementwise -1}(\gamma M) \succeq 0$\;
        }
        \BlankLine
        $\gamma_{\mathrm{lo}} \gets \frac{L_k}{\fnorm{M}}\sqrt{\frac{n}{n-1}}$\; 
        $\gamma_{\mathrm{hi}} \gets -1 / \mu_{\min}(M)$\;
        \lIf{\OnRay{$\gamma_{\mathrm{hi}}$}}{\Return $\gamma_{\mathrm{hi}}$}
        \While{$\gamma_{\mathrm{hi}} - \gamma_{\mathrm{lo}} > \tau$}{
            $\gamma \gets (\gamma_{\mathrm{lo}} + \gamma_{\mathrm{hi}}) / 2$\;
            \lIf{\OnRay{$\gamma$}}{$\gamma_{\mathrm{lo}} \gets \gamma$}
            \lElse{$\gamma_{\mathrm{hi}} \gets \gamma$}
        }
        \Return $\gamma_{\mathrm{lo}}$\;
    \end{algorithm}

    \subsection{Informed LP algorithm}
    \label{sec:informed}
    So far we have introduced a scheme to sample vectors $\vec{x}$, each corresponding to a conjugation of the system by some unitary, that yield expected outer product matrices on the solution ray. 
    We will now discuss how to derive solutions to the original Hamiltonian engineering problem, that is evolution times $\lambda(\vec{x})$, from the generated samples~$\vec x$. 

    An exact solution can be found similar as for the uninformed relaxation in~\citep{bassler2025,kum2026}.
    By restricting the set of pulses to the set of sampled vectors $S$, a \ac{LP} can be solved efficiently:
    \begin{equation*}
        \tag{RestrictedLP}
        \label{eq:restricted_lp}
        \begin{alignedat}{3}
            &\textnormal{min} \enspace &&\lpnorm[1]{\vec{\lambda}} \\
            &\textnormal{s.t.} \enspace &&\sum_{\vec{x} \in S} \lambda(\vec{x}) (\vec{x} \vec{x}^{\dagger})_{ij} = M_{ij}, && \enspace \forall (i,j) \in \nz(A), \\
            &&& \lambda(\vec{x}) \geq 0, && \enspace \forall \vec{x} \in S.
        \end{alignedat}
    \end{equation*}
    We call this approach the informed \ac{LP} algorithm and outline it in \cref{alg:informed_lp}.
    
    \begin{algorithm}
        \caption{Informed \ac{LP} algorithm}\label{alg:informed_lp}
        \DontPrintSemicolon
        \SetKwInOut{Input}{Input}\SetKwInOut{Output}{Output}
        \BlankLine
        \Input{Hollow Hermitian matrices $A$ and $B$; number $s > 0$ of pulses to sample.}
        \Output{Evolution times $\vec{\lambda}$ fulfilling \cref{eq:decomposition_constraints}.}
        \BlankLine
        $M \gets$ effective matrix from $A, B$\;
        $\hat{\gamma} \gets$ ray binary search (\cref{alg:ray_binary_search}) on $M$\;
        $X \gets \1 + \RDF_k^{\elementwise -1}(\hat{\gamma} M)$\;
        $S \gets s$ pulses sampled and rounded on $X$\;
        $\vec{\lambda} \gets$ solve \eqref{eq:restricted_lp} over $S$\;
        \Return $\vec{\lambda}$\;
    \end{algorithm}

    When applying the informed \ac{LP} algorithm (\cref{alg:informed_lp}), there is no a priori guarantee that the \ac{LP} is feasible.
    Numerically, we observe on generic instances, that the probability that the \ac{LP} contains a solution with quantum run time at most $\frac{1}{\hat{\gamma}}$ rises sharply from close to zero to close to one around $s \approx 4m$ ($s \approx 2m$ on $k = 2$ with real $M$).
    \Cref{sec:feasibility_numerics} examines this transition systematically.
    In the following, we will qualify these observations analytically.

    Let the ray binary search (\cref{alg:ray_binary_search}) have returned with
    \begin{equation}
        X = \1 + \RDF_k^{\elementwise -1} (\hat{\gamma} M) \in \partial \elliptope_n
    \end{equation}
    on the boundary of the elliptope, and let the informed LP algorithm (\cref{alg:informed_lp}) have sampled and rounded the set of samples $S$ on covariance matrix $X$.
    We define the modified \ac{LP}
    \begin{equation}
        \label{eq:conjecture_lp}
        \begin{alignedat}{3}
            &\textnormal{find} \enspace &&\tilde{\vec{\lambda}} : S \to \RR_{\geq 0}, \\
            &\textnormal{s.t.} \enspace &&\sum_{\vec{x} \in S} \tilde{\lambda}(\vec{x}) (\vec{x} \vec{x}^{\dagger})_{ij} = \hat{\gamma} M_{ij}, \enspace && \forall (i, j) \in \nz(A), \\
            &&& \lpnorm[1]{\tilde{\vec{\lambda}}} = 1,
        \end{alignedat}
    \end{equation}
    which is feasible if and only if the \ac{LP} \eqref{eq:restricted_lp} has a feasible solution with quantum run time $\frac{1}{\hat{\gamma}}$. 
    Hence, we may argue via the feasibility of \cref{eq:conjecture_lp}.
    Thanks to the hermiticity of the involved matrices, we may reduce the problem statement to one in $\RR^D$, where
    \begin{equation}
        D \coloneqq \begin{cases}
            m, & \textnormal{if } k=2, M \textnormal{ real}, \\
            2m, & \textnormal{if } k \in \NN_{\geq 3} \cup \Set{\infty},
        \end{cases}
    \end{equation}
    by defining the real-valued target vector $\vec{m} = \vec{m}^{\Re} \oplus \vec{m}^{\Im} \in \RR^D$ corresponding to $M$, and the correlation vectors $\vec{\chi}(\vec{x}) = \vec{\chi}^{\Re}(\vec{x}) \oplus \vec{\chi}^{\Im}(\vec{x}) \in \RR^D$ for each $\vec{x} \in S$ via
    \begin{equation}
        \label{eq:correlation_var}
        \begin{aligned}
            m^{\Re}_{\Set{i, j}} &\coloneqq \Re(M_{ij}), \\ 
            m^{\Im}_{\Set{i, j}} &\coloneqq \Im(M_{ij}), \\
            \chi^{\Re}_{\Set{i, j}}(\vec{x}) &\coloneqq \Re\left( (\vec{x} \vec{x}^\dagger)_{ij} \right), \\
            \chi^{\Im}_{\Set{i, j}}(\vec{x}) &\coloneqq \Im\left( (\vec{x} \vec{x}^\dagger)_{ij} \right),
        \end{aligned}
    \end{equation}
    for all $(i, j) \in \nz(A)$, $i < j$, where the direct sum is only carried out for $k > 2$.
    We denote by $\vec{\chi}(S)$ the equivalent of $S$, where each element is replaced by its correlation vector
    \begin{equation}
        \vec{\chi}(S) \coloneqq \Set*{\vec{\chi}(\vec{x}) \in \RR^{D} \given \vec{x} \in S}.
    \end{equation}
    Then, the \ac{LP} \eqref{eq:conjecture_lp} is feasible if and only if
    \begin{equation}
        \hat{\gamma} \vec{m} \in \conv \left(\vec{\chi}(S)\right).
    \end{equation}
    For the random correlation vector $\vec{\chi}(\vec{x})$, we define its covariance matrix ${\Sigma \coloneqq \Cov(\vec{\chi}(\vec{x}))}$ with range ${V \coloneqq \ran \Sigma}$ and rank ${D_{\mathrm{eff}} \coloneqq \rank \Sigma}$.
    Notice that for an arbitrary direction $\vec{v} \in \RR^D$, its overlap with the effect of a random pulse $\vec{x}$ is given by $\innerp{\vec{v}}{\vec{\chi}(\vec{x})}$, with variance
    \begin{equation}
        \Var \innerp{\vec{v}}{\vec{\chi}(\vec{x})} = \vec{v}^T \Sigma \vec{v}.
    \end{equation}
    Thus, for $\vec{v} \in \ker \Sigma$, this variance vanishes $\Var \innerp{\vec{v}}{\vec{\chi}(\vec{x})} = 0$, so in particular
    \begin{equation}
        \label{eq:constant_overlap}
        \innerp{\vec{v}}{\vec{\chi}(\vec{x})} = \innerp{\vec{v}}{\EE[\vec{\chi}(\vec{x})]} = \innerp{\vec{v}}{\hat{\gamma} \vec{m}}
    \end{equation}
    almost surely.
    Hence, the effect of sampleable pulses in these directions is fixed, and the relevant problem space is $(\ker \Sigma)^{\bot} = \ran \Sigma = V$ with effective dimension $D_{\mathrm{eff}}$.

    Minimizing the probability of a pulse overshooting the target over the attainable directions then yields the \emph{positivity index}
    \begin{equation}
        \beta \coloneqq \inf_{\substack{\tilde{\vec{v}} \in V \\ \lpnorm[2]{\tilde{\vec{v}}} = 1}} \PP\left[\innerp{\tilde{\vec{v}}}{\vec{\chi}(\vec{x})} > \innerp{\tilde{\vec{v}}}{\hat{\gamma} \vec{m}}\right].
    \end{equation}
    We define $\beta \coloneqq 1$ in the trivial case of $D_{\mathrm{eff}} = 0$.

    We call the \ac{LP} \eqref{eq:conjecture_lp} \emph{robustly} feasible, if
    \begin{equation}
        \hat{\gamma} \vec{m} \in \interior_{V} \left( \conv \left( \vec{\chi}(S) \right) \right),
    \end{equation}
    where $\interior_V$ denotes the interior in the topology of the affine subspace $\hat{\gamma} \vec{m} + V$, that is the \ac{LP} \eqref{eq:conjecture_lp} stays feasible under any sufficiently small perturbation in an \emph{attainable} direction.
    By \cref{eq:constant_overlap}, perturbations outside $V$ cannot be matched.
    Intuitively, a small $\beta$ means that many samples are necessary for the \ac{LP} \eqref{eq:conjecture_lp} to become robustly feasible.

    The following theorem captures the transition window of robust feasibility.
    We provide its proof in \cref{sec:feasibility_proofs}.

    \begin{theorem}
        \label{th:feasibility_window}
        Denote the event that the \ac{LP} \eqref{eq:conjecture_lp} is robustly feasible by $A$. 
        \begin{enumerate}[label=\textup{(\roman*)}]
            \item For every $\eta \in (0, 1)$, if $\abs{S} \leq (2 - \eta) D_{\mathrm{eff}}$, then the probability that the \ac{LP} \eqref{eq:conjecture_lp} is robustly feasible is upper bounded by
            \begin{equation}
                \PP[A] \leq \exp(- \eta^2 D_{\mathrm{eff}} / 4).
            \end{equation}
            \item The positivity index satisfies $\beta > 0$ for every instance.
            Moreover, for every $\delta \in (0, 1)$, if
            \begin{equation}
                \abs{S} \geq 6 \beta^{-1} (D_{\mathrm{eff}} \ln(12 / \beta) + \ln(2 / \delta)),
            \end{equation}
            then the \ac{LP} \eqref{eq:conjecture_lp} is robustly feasible with probability at least
            \begin{equation}
                \PP[A] \geq 1 - \delta.
            \end{equation}
        \end{enumerate}
    \end{theorem}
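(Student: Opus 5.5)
The plan is to recenter the problem and then reduce both parts to statements about random points surrounding the origin. Set $\vec{y}(\vec{x}) \coloneqq \vec{\chi}(\vec{x}) - \hat{\gamma}\vec{m}$. By the Krivine construction, $\EE[\vec{\chi}(\vec{x})] = \hat{\gamma}\vec{m}$, so $\EE[\vec{y}] = 0$. By \cref{eq:constant_overlap}, $\vec{y}(\vec{x}) \in V$ almost surely. Robust feasibility of \eqref{eq:conjecture_lp} is then the statement $0 \in \interior_V \conv\{\vec{y}(\vec{x}) : \vec{x} \in S\}$. By a separating-hyperplane argument inside $V$, this is equivalent to: there is no unit $\tilde{\vec{v}} \in V$ with $\innerp{\tilde{\vec{v}}}{\vec{y}(\vec{x})} \le 0$ for all $\vec{x} \in S$. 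In other words, every open half-space of $V$ bounded by a hyperplane through the origin contains at least one sample. Both parts of \cref{th:feasibility_window} will be derived from this reformulation. Throughout, $N \coloneqq \abs{S}$ and $d \coloneqq D_{\mathrm{eff}}$.

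\textbf{Part (i).} I would use Wendel's theorem, or more precisely its extension by Wagner and Welzl. For i.i.d.\ points from any absolutely continuous distribution on $\RR^d$, the probability that $0 \in \conv$ is at most the symmetric (Wendel) value
\begin{equation}
    1 - 2^{-(N-1)}\sum_{j=0}^{d-1}\binom{N-1}{j} = \PP[\mathrm{Bin}(N-1, 1/2) \ge d].
\end{equation}
Our distribution is discrete, since $k$ is finite or the vectors arise from rounding. I would handle this by perturbation.
\begin{itemize}
    \item Interior membership is an open condition. Adding independent small Gaussian noise inside $V$ to each $\vec{y}$ therefore does not destroy the event $A$ when the noise is small enough.
    \item Hence $\PP[A] \le \liminf_{\epsilon \to 0} \PP[0 \in \conv(\vec{y}_i + \epsilon \vec{g}_i)]$.
    \item The perturbed points are absolutely continuous on $V$, so the Wagner--Welzl bound applies to them.
\end{itemize}
With $N \le (2-\eta)d$, Hoeffding's inequality for the binomial tail gives
\begin{equation}
    \PP[\mathrm{Bin}(N-1, 1/2) \ge d] \le \exp\bigl(-2(d - (N-1)/2)^2/(N-1)\bigr),
\end{equation}
and the exponent is at least $\eta^2 d^2/(2N) \ge \eta^2 d/4$. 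If the Wagner--Welzl extension turns out to be unavailable for the non-centrally-symmetric case, I would fall back on the cruder fact that $N \le d$ points can never contain the origin in their relative interior, together with a Cover-type counting of linearly separable sign patterns.

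\textbf{Part (ii), positivity of $\beta$.} Consider $f(\tilde{\vec{v}}) \coloneqq \PP[\innerp{\tilde{\vec{v}}}{\vec{y}} > 0]$.
\begin{itemize}
    \item $f$ is lower semicontinuous on the unit sphere of $V$: the indicator of a strict inequality is lower semicontinuous in $\tilde{\vec{v}}$, and Fatou's lemma passes this to the probability.
    \item $f > 0$ pointwise. The variable $\innerp{\tilde{\vec{v}}}{\vec{y}}$ has mean zero and variance $\tilde{\vec{v}}^T\Sigma\tilde{\vec{v}} > 0$, since $\tilde{\vec{v}} \in V = \ran\Sigma$. A nonconstant mean-zero variable is positive with positive probability.
    \item A lower semicontinuous positive function on a compact set attains its infimum, so $\beta > 0$.
\end{itemize}
The case $d = 0$ is the defined convention $\beta = 1$.

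\textbf{Part (ii), sample complexity.} I would apply the $\varepsilon$-net theorem of Haussler--Welzl, in the version of Komlós--Pach--Woeginger, to the range space of open half-spaces through the origin in $V$. Its VC dimension is at most $d+1$; I expect it can be taken as $d$ for homogeneous half-spaces.
\begin{itemize}
    \item By the definition of $\beta$, every such half-space has probability mass at least $\beta$.
    \item A $\beta$-net therefore hits every such half-space, which by the reformulation above is exactly robust feasibility.
    \item The standard bound, that a sample of size $\max\{(4/\varepsilon)\log(2/\delta), (8d/\varepsilon)\log(8d/\varepsilon)\}$ is a net with probability $1-\delta$, or the variant $\tfrac{c}{\varepsilon}(d\ln\tfrac{c'}{\varepsilon} + \ln\tfrac{2}{\delta})$, yields a threshold of the stated form $6\beta^{-1}(d\ln(12/\beta) + \ln(2/\delta))$.
\end{itemize}

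\textbf{Main obstacle.} I expect the hardest step to be matching these exact constants. If the off-the-shelf $\varepsilon$-net constants do not give $6$ and $12$, I would instead run the double-sampling (ghost sample) argument directly. This bounds the probability that some half-space contains no point of $S$ but at least $\beta N/2$ points of a ghost sample, using the Sauer--Shelah bound $(2eN/d)^d$ on the number of half-space dichotomies and a Chernoff bound for the ghost sample, then solves for $N$. The secondary risk is justifying the perturbation step in (i) for the discrete rounding distribution, in particular confirming that Wagner--Welzl requires no central symmetry.
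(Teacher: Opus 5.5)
Your proposal follows the paper's proof essentially step for step: Gaussian smoothing inside $V$ plus the Wagner--Welzl bound and Hoeffding for (i), lower semicontinuity with the mean-zero, positive-variance argument and compactness for $\beta>0$, and the $\varepsilon$-net theorem with $\varepsilon=\beta$ on homogeneous open half-spaces of $V$ for the sample bound in (ii). On the constants, the $\max\{(4/\varepsilon)\log(2/\delta),(8d/\varepsilon)\log(8d/\varepsilon)\}$ form carries a $\log d$ and cannot give the $d$-free factor $\ln(12/\beta)$, so you do need your fallback bound $2(2\mathrm{e}s/D_{\mathrm{eff}})^{D_{\mathrm{eff}}}2^{-\beta s/2}$ with VC dimension exactly $D_{\mathrm{eff}}$ (with $D_{\mathrm{eff}}+1$ the computation does not close); then $\ln x\le\alpha x-\ln\alpha-1$ with $x=2\mathrm{e}s/D_{\mathrm{eff}}$ and $\alpha=\beta\ln 2/(8\mathrm{e})$ yields exactly the threshold $6\beta^{-1}(D_{\mathrm{eff}}\ln(12/\beta)+\ln(2/\delta))$, as in the paper.
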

        
    Part (i) provides a necessary number of pulses and closely matches our observations.
    Part (ii) is a sufficiency statement and gives pulse numbers much larger than the observed $s = 2D$.
    In fact, for instance families where $\beta$ is not bounded from below by a positive constant, the sufficient $s$ may not be linear in $m$.
    In \cref{sec:feasibility_numerics} we demonstrate this on a family of instances along which $\beta$ becomes small, and for which the required number of pulses grows accordingly.
    For arbitrary instances, $\beta$ cannot be computed directly and only estimated from generated samples.
    Hence, part (ii) establishes an asymptotic behavior but does not provide exact sample numbers to be used.
    
    Both parts of \cref{th:feasibility_window} concern robust feasibility.
    For part (ii) this is a strengthening, since robust feasibility implies feasibility, and the sufficiency statement therefore applies directly to the plain case.
    The necessity statement (i) however does not apply to the plain feasibility, as fewer pulses may be necessary.

    The related statements of \cref{th:feasibility_window} for the feasibility of the uninformed \ac{LP} have only been observed numerically in \citep{bassler2025,kum2026} and observed to be consistent with the statement of Wendel's theorem \cite{wendel1962} even though its hypothesis isn't satisfied for the relevant distributions.
    Our progress beyond such numerical observations stems from restricting to robust feasibility, which admits a smoothing argument for the necessity bound, and, for sufficiency, from using $\varepsilon$-nets to establish the asymptotic scaling of the required pulse number instead of its exact tight threshold.

    \subsection{Mixed LP algorithm}
    \label{sec:mixed}
    For ill-conditioned instances with small positivity index $\beta$, we now introduce a modified version of the informed \ac{LP} algorithm (\cref{alg:informed_lp}).
    The core intuition is that we may mix $s_{\mathrm{u}} \in \LandauO(m)$ uniformly sampled pulses, to guarantee feasibility, with $s_{\mathrm{i}} \in \LandauO(m)$ informedly sampled pulses, to provide a low quantum run time.
    The resulting algorithm yields feasibility and quantum run time guarantees on $s \in \LandauO(m)$ sampled pulses without requiring any assumptions on the instance.
    In the following, we will denote the set of informed samples by $S_{\mathrm{i}}$, and the set of uniform samples by $S_{\mathrm{u}}$.

    Key to understanding the mixed approach are two observations.
    Firstly, note that for $\vec{x}$ sampled informedly, $\EE[\vec{x} \vec{x}^\dagger] = \1 + \hat{\gamma} M$, so we may approximate a $\frac{1}{\hat{\gamma}}$ quantum run time solution by taking the observed relative frequencies of the informed pulses.
    That is, we may define the informed evolution times $\vec{\lambda}_{\mathrm{i}} : S_{\mathrm{i}} \to \RR_{\geq 0}$ by
    \begin{equation}
        \label{eq:informed_times}
        \lambda_{\mathrm{i}}(\vec{x}) \coloneqq \frac{f_{\vec{x}}}{\hat{\gamma}}
    \end{equation}
    for each sampled informed pulse, where $f_{\vec{x}}$ denotes the relative sample frequency of $\vec{x}$.
    Then, $\lpnorm[1]{\vec{\lambda}_{\mathrm{i}}} = 1 / \hat{\gamma}$,
    \begin{equation}
        \sum_{\vec{x} \in S_{\mathrm{i}}} \lambda_{\mathrm{i}}(\vec{x}) \vec{x} \vec{x}^{\dagger} = \frac{1}{\hat{\gamma}} \mean{\left(\vec{x} \vec{x}^\dagger\right)} \to \frac{1}{\hat{\gamma}} \1 + M,
    \end{equation}
    and hence
    \begin{equation}
        \sum_{\vec{x} \in S_{\mathrm{i}}} \lambda_{\mathrm{i}}(\vec{x}) \vec{\chi}(\vec{x}) = \frac{1}{\hat{\gamma}} \mean{\vec{\chi}(\vec{x})} \to \vec{m},
    \end{equation}
    as $s_{\mathrm{i}} \to \infty$.
    For finite $s_{\mathrm{i}}$ there will however always be some deviation, i.e.\ the sample mean will not correspond to a feasible solution of the \ac{LP}~\eqref{eq:restricted_lp}, and the residual
    \begin{equation}
        \vec{r} \coloneqq \vec{m} - \frac{1}{\hat{\gamma}} \mean{\vec{\chi}(\vec{x})}
    \end{equation}
    will be nonzero.
    Using concentration inequalities, the magnitude of $\vec{r}$ can be bounded.
    \begin{lemma}
        \label{th:residual}
        For $s_{\mathrm{i}}$ informed samples, and $t \geq \frac{\sqrt{m}}{\hat{\gamma} \sqrt{s_{\mathrm{i}}}}$, the magnitude of the residual is bounded by
        \begin{equation}
            \PP[\lpnorm[2]{\vec{r}} \geq t] \leq \exp \left( - \frac{s_{\mathrm{i}} (\hat{\gamma} t - \sqrt{m / s_{\mathrm{i}}})^2}{2 m} \right).
        \end{equation}
    \end{lemma}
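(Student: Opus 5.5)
Note first that the rounding always produces vectors $\vec{x} \in \urootset_k^n$, so every entry of $\vec{x}\vec{x}^\dagger$ is a unit-modulus complex number. Hence, for every sample, $\lpnorm[2]{\vec{\chi}(\vec{x})}^2 = \sum_{\{i,j\}} \abs{(\vec{x}\vec{x}^\dagger)_{ij}}^2 = m$. This holds for $k>2$, where real and imaginary parts together contribute $\abs{\cdot}^2 = 1$ per pair. It also holds for $k=2$ with real $M$, where each real coordinate is $\pm 1$. So every correlation vector lies on the sphere of radius $\sqrt{m}$ in $\RR^D$, whatever the instance.

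By the bent-ray construction, $\EE[\vec{x}\vec{x}^\dagger] = \1 + \hat{\gamma} M$, and therefore $\EE[\vec{\chi}(\vec{x})] = \hat{\gamma}\vec{m}$. Writing $\vec{Y}_\ell \coloneqq \vec{\chi}(\vec{x}_\ell) - \hat{\gamma}\vec{m}$ for the $s_{\mathrm{i}}$ i.i.d.\ informed samples, the residual becomes
\begin{equation*}
    \vec{r} = -\frac{1}{\hat{\gamma}\, s_{\mathrm{i}}} \sum_{\ell=1}^{s_{\mathrm{i}}} \vec{Y}_\ell .
\end{equation*}
The event $\lpnorm[2]{\vec{r}} \geq t$ is thus the event $\lpnorm[2]{\sum_\ell \vec{Y}_\ell} \geq \hat{\gamma} t\, s_{\mathrm{i}}$. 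The vectors $\vec{Y}_\ell$ are independent and centered, and $\EE\lpnorm[2]{\vec{Y}_\ell}^2 = \EE\lpnorm[2]{\vec{\chi}}^2 - \hat{\gamma}^2\lpnorm[2]{\vec{m}}^2 \leq m$.

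The plan is a standard vector concentration argument in two steps.

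\textbf{Step 1 (mean).} By Jensen and independence,
\begin{equation*}
    \EE\lpnorm[2]{\textstyle\sum_\ell \vec{Y}_\ell} \leq \bigl(\textstyle\sum_\ell \EE\lpnorm[2]{\vec{Y}_\ell}^2\bigr)^{1/2} \leq \sqrt{m s_{\mathrm{i}}} .
\end{equation*}
After dividing by $s_{\mathrm{i}}$, this gives the shift $\sqrt{m/s_{\mathrm{i}}}$ appearing in the statement. The hypothesis $t \geq \sqrt{m}/(\hat{\gamma}\sqrt{s_{\mathrm{i}}})$ is exactly the condition that the deviation $u \coloneqq \hat{\gamma} t s_{\mathrm{i}} - \sqrt{m s_{\mathrm{i}}}$ above the mean is nonnegative.

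\textbf{Step 2 (bounded differences).} Consider $F(\vec{x}_1,\dots,\vec{x}_{s_{\mathrm{i}}}) \coloneqq \lpnorm[2]{\sum_\ell \vec{Y}_\ell}$ and apply McDiarmid's inequality. Replacing one sample changes $F$ by at most $\lpnorm[2]{\vec{\chi}(\vec{x}) - \vec{\chi}(\vec{x}')}$. This gives
\begin{equation*}
    \PP[F - \EE F \geq u] \leq \exp\bigl(-2u^2 / \textstyle\sum_\ell c_\ell^2\bigr).
\end{equation*}

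The main obstacle is matching the constant $2m$ in the exponent. The crude difference bound $c_\ell = 2\sqrt{m}$ yields only $\exp(-u^2/(2 m s_{\mathrm{i}}))$. Substituting $u = s_{\mathrm{i}}(\hat{\gamma} t - \sqrt{m/s_{\mathrm{i}}})$ gives exactly
\begin{equation*}
    \exp\bigl(-s_{\mathrm{i}}(\hat{\gamma}t - \sqrt{m/s_{\mathrm{i}}})^2/(2m)\bigr),
\end{equation*}
which is the claimed bound. So McDiarmid with $c_\ell = 2\sqrt{m}$ suffices. Should one prefer a Hilbert-space martingale bound instead (Pinelis-type, with $\lpnorm[2]{\vec{Y}_\ell} \leq 2\sqrt{m}$), one would need to check that its constants are no worse. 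I would therefore commit to McDiarmid. The remaining care point is to verify that the sphere-radius fact holds for all $k \in \NN_{\geq 2}\cup\{\infty\}$, including the real $k=2$ encoding, and that the mean bound uses $\EE\lpnorm[2]{\vec{Y}}^2 \leq m$, not $4m$.
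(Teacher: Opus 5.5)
Your proof is correct and takes essentially the same route as the paper. You bound the expected deviation by $\sqrt{m/s_{\mathrm{i}}}$ via Jensen's inequality with $\Tr(\Sigma) = m - \hat{\gamma}^2 \lpnorm[2]{\vec{m}}^2 \leq m$, then apply McDiarmid's inequality with bounded differences from $\lpnorm[2]{\vec{\chi}(\vec{x}) - \vec{\chi}(\vec{y})} \leq 2\sqrt{m}$. Working with the unnormalized sum (differences $2\sqrt{m}$) rather than the sample mean (differences $2\sqrt{m}/s_{\mathrm{i}}$) is only a rescaling and gives the identical exponent, so the ``obstacle'' you flag is not actually one.
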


    The second key observation is captured by the following lemma, and states that $s_{\mathrm{u}} \in \LandauO(m)$ \emph{uniform} samples suffice to guarantee the enclosure of a closed $\ell_2$-ball $\bar{B}(\vec{0}, \rho)$ in the convex hull of the samples, where $\rho$ is an instance-independent constant radius.
    Remember that $m$ is the number of interaction terms and $D$ the real dimension of the constraint space given by $M$. 
    Their ratio $c \coloneqq m / D$ is a constant $c\in \Set{1,1/2}$ depending on whether $k=2$ or $k>2$.

    \begin{lemma}
        \label{th:uniform_ball}
        Take $\kappa \coloneqq 15$, $c = m / D$, and 
        \begin{equation}
            \rho_0 \coloneqq \frac{\sqrt{c}}{4 \sqrt{\kappa}}, \quad C \coloneqq 200 \kappa.
        \end{equation}
        Then for every $\delta \in (0, 1)$, generating 
        \begin{equation}
            s_{\mathrm{u}} \geq C (D + \ln(1 / \delta))
        \end{equation}
        uniform samples gives a probability of the convex hull containing the closed $\rho$-ball lower bounded by
        \begin{equation}
            \PP[\bar{B}(\vec{0}, \rho_0) \subseteq \conv \vec{\chi}(S_{\mathrm{u}})] \geq 1 - \delta.
        \end{equation}        
    \end{lemma}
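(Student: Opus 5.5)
The plan is to characterize ball containment through support functions and to get a uniform lower bound on the support function from a one-sided anti-concentration estimate combined with an $\varepsilon$-net argument over halfspaces. We have $\bar{B}(\vec{0}, \rho_0) \subseteq \conv \vec{\chi}(S_{\mathrm{u}})$ if and only if, for every unit vector $\vec{v} \in \RR^D$, some sample $\vec{x} \in S_{\mathrm{u}}$ satisfies $\innerp{\vec{v}}{\vec{\chi}(\vec{x})} \geq \rho_0$. Otherwise a separating hyperplane would exist.

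Therefore it suffices to show two things. First, each closed halfspace $H_{\vec{v}} \coloneqq \Set{\vec{y} \given \innerp{\vec{v}}{\vec{y}} \geq \rho_0}$ has probability at least $p \coloneqq 1/(16\kappa)$ under the uniform pulse distribution. Second, $s_{\mathrm{u}} \geq C(D + \ln(1/\delta))$ samples hit every halfspace of probability at least $p$ simultaneously, with probability at least $1-\delta$.

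\textbf{Step 1 (moments of a uniform pulse).} For $\vec{x}$ uniform on $\urootset_k^n$, I would compute the second and fourth moments of $Y \coloneqq \innerp{\vec{v}}{\vec{\chi}(\vec{x})}$ directly, by expanding in the monomials $x_i \conj{x_j}$ and using that $\EE\bigl[\prod_\ell x_{i_\ell}^{a_\ell}\bigr]$ vanishes unless every exponent is $\equiv 0 \pmod k$.
\begin{itemize}
\item First moment: $\EE[\vec{\chi}] = 0$, since $\EE[x_i\conj{x_j}] = 0$ for $i \neq j$.
\item Second moment: distinct edges are uncorrelated, and the real and imaginary parts of one edge are uncorrelated with variance $1/2$ each for $k \geq 3$ (variance $1$ for $k=2$). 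Hence $\Cov(\vec{\chi}) = c\,\1_D$ and $\EE Y^2 = c$.
\item Fourth moment: $Y$ is a degree-$2$ polynomial in independent uniform phases. Counting the index configurations of four edges whose phases cancel (pairings, plus triangle/cycle-type cancellations and, for $k=2,3,4$, extra coincidences such as $x^k = 1$) should give $\EE Y^4 \leq \kappa c^2$ with $\kappa = 15$.
\end{itemize}
This fourth-moment bound is the main obstacle. The small-$k$ cases admit extra nonvanishing terms, so the bookkeeping has to be uniform in $k$. I would first try to bound the expansion by the sum over all pairings (Gaussian-like contribution $3c^2$) plus a crude count of the additional cyclic patterns, each controlled by Cauchy--Schwarz in terms of $\lpnorm[2]{\vec{v}}^4$. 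As a fallback I would use Bonami-type hypercontractivity for degree-$2$ functions, $\EE Y^4 \leq 9^2 (\EE Y^2)^2$ up to adjustments. That yields a worse constant, which would only change $\kappa$.

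\textbf{Step 2 (one-sided anti-concentration).} Since $\EE Y = 0$, we have $\EE Y_+ = \tfrac12 \EE\abs{Y}$. Hölder's inequality gives $\EE\abs{Y} \geq (\EE Y^2)^{3/2}/(\EE Y^4)^{1/2} \geq \sqrt{c}/\sqrt{\kappa}$, so $\EE Y_+ \geq \sqrt{c}/(2\sqrt{\kappa})$. Then
\begin{equation}
\EE[Y \mathbf{1}\{Y \geq \rho_0\}] \geq \EE Y_+ - \rho_0 = \tfrac{\sqrt{c}}{4\sqrt{\kappa}}.
\end{equation}
By Cauchy--Schwarz the left-hand side is at most $\sqrt{c\,\PP[Y \geq \rho_0]}$. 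Combining the two bounds gives $\PP[Y \geq \rho_0] \geq 1/(16\kappa) = p$ for every unit $\vec{v}$, which explains the choice of $\rho_0$.

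\textbf{Step 3 (uniformity over all directions).} The family $\Set{H_{\vec{v}}}$ consists of halfspaces in $\RR^D$, whose VC dimension is at most $D+1$. By the $\varepsilon$-net theorem (Haussler--Welzl), with $\varepsilon = p$, a sample of size
\begin{equation}
O\bigl(p^{-1}(D \ln(1/p) + \ln(1/\delta))\bigr)
\end{equation}
hits every halfspace of mass at least $p$, with probability at least $1-\delta$. Since $p$ is an absolute constant, the factor $\ln(1/p)$ is absorbed. Tracking the explicit constants of the $\varepsilon$-net theorem, or alternatively a Chernoff bound on a fixed net of directions combined with the VC uniform-convergence bound, should give $s_{\mathrm{u}} \geq 200\kappa(D + \ln(1/\delta))$. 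If the explicit constant turns out too tight, I would enlarge $C$ while keeping the argument unchanged. Finally, by the characterization in the first paragraph, the event from Step 3 implies the ball containment, which proves the lemma.
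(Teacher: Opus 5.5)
The overall architecture matches the paper: moments of $Y=\langle\vec v,\vec\chi(\vec x)\rangle$, one-sided anti-concentration at level $\rho_0$, then the $\varepsilon$-net theorem over the half-spaces $H_{\vec v}$. But the lemma fixes $\kappa=15$, $\rho_0=\sqrt c/(4\sqrt\kappa)$ and $C=200\kappa$. With the $\varepsilon$-net bound $2(2\e s/d)^d2^{-\varepsilon s/2}$, your Steps~2--3 do not reach this $C$, and ``enlarging $C$'' proves a different statement.

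\emph{Why your constants fail.} Cauchy--Schwarz gives only $p=1/(16\kappa)=1/240$. At $s_{\mathrm u}=3000(D+\ln(1/\delta))$ the exponent is $\tfrac{ps}{2}\ln2\approx4.33\,(D+\ln(1/\delta))$, whereas the entropy term is at least $D\ln(2\e\cdot3000)\approx9.7D$. So for $\ln(1/\delta)\le D$ the failure bound exceeds $1$, even with VC dimension $D$. A union bound over a fixed net of directions is worse still: since $\lpnorm[2]{\vec\chi(\vec x)}=\sqrt m$, the mesh must be $\lesssim\rho_0/\sqrt m$, which costs a factor $\ln m$.

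\emph{The two refinements you are missing.} First, bound $\EE[Y\mathbf 1_A]$, with $A=\{Y\ge\rho_0\}$, by H\"older with exponents $4$ and $4/3$, reusing the fourth moment. This gives $\EE[Y\mathbf 1_A]\le\kappa^{1/4}\sqrt c\,\PP[A]^{3/4}$, hence $\PP[A]\ge4^{-4/3}/\kappa$, a gain of $4^{2/3}\approx2.5$. Second, since $\rho_0>0$, no $H_{\vec v}$ contains $\vec 0$. So this family has VC dimension at most $D$, not $D+1$: if it shattered $D+1$ points, half-spaces and their complements would shatter these points together with $\vec 0$.

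\emph{The resulting check.} With $\varepsilon=4^{-4/3}/\kappa$ one needs $\tfrac{\varepsilon s}{2}\ln2\ge D\ln(2\e s/D)+\ln2+\ln(1/\delta)$. At $s_0=200\kappa(D+\ln(1/\delta))$ the left side exceeds $10.9\,(D+\ln(1/\delta))$. Using $D\ln(1+\ln(1/\delta)/D)\le\ln(1/\delta)$, the right side is at most $D\ln(800\kappa\e)+2\ln(1/\delta)$, with $\ln(800\kappa\e)<10.4$. The difference increases in $s$ for $s\ge2D/(\varepsilon\ln2)$. The margin, about $0.5$ per unit of $D$, is thin enough that both refinements are needed: with VC dimension $D+1$ the check already fails at $D=1$ for $\delta$ close to $1$.

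The other gap is Step~1. The fourth-moment bound is asserted rather than derived, and it has no slack. Take $k=2$ and $\vec v$ spread uniformly over a complete graph. Then $Y=\binom{n}{2}^{-1/2}\bigl((\sum_ix_i)^2-n\bigr)/2\to(G^2-1)/\sqrt2$ with $G$ standard normal, so $\EE Y^4\to15=\kappa c^2$.

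\emph{Consequences.} The Bonami fallback ($\EE Y^4\le81c^2$) changes $\rho_0$ and $C$, so it cannot prove the lemma as stated. Any crude treatment of the cyclic terms also overshoots, because for $k=2$ the $4$-cycle term can be four times the pairing term.

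\emph{The exact count.} Write $Y=\tfrac12\sum_{i\ne j}Q_{ij}x_i\conj{x_j}$ for a hollow Hermitian $Q$ with $\fnorm{Q}^2=2$. For $k=2$ one gets exactly $\EE Y^4=q+3(1-q)+3S_4$. Here $q=\sum_{i<j}\abs{Q_{ij}}^4$, and $S_4$ is the sum of $Q_{ab}Q_{bc}Q_{cd}Q_{da}$ over distinct $a,b,c,d$. Inclusion--exclusion gives $S_4=\Tr(Q^4)-2\sum_i(Q^2)_{ii}^2+2q$. Using $\Tr(Q^4)\le\fnorm{Q}^4$ and $(Q^2)_{ii}^2\ge\sum_j\abs{Q_{ij}}^4$, this is at most $4-2q$, so $\EE Y^4\le15-8q$.

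For $k\ge3$, only monomials whose degree-difference vector $\vec d$ is divisible by $k$ survive. The case $\vec d=\vec 0$ contributes at most $9c^2$. The extra patterns $\vec d=\pm k(\vec e_i-\vec e_j)$ occur for $k=3,4$; they are controlled by Cauchy--Schwarz and by $2q$ respectively, giving totals of $13c^2$ and $9c^2$.
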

    \noindent
    We provide the proofs of \cref{th:residual,th:uniform_ball} in \cref{sec:mixed_proofs}.

    The mixed LP algorithm then utilizes \cref{th:uniform_ball} by taking $s_{\mathrm{i}}$ large enough, s.t.\ $\lpnorm[2]{\vec{r}}$ is small and can be corrected cheaply from $\bar{B}(\vec{0}, \rho_0)$.
    We outline the approach in \cref{alg:mixed}.
    \Cref{th:mixed_algo} establishes that, independent of the specific instance, $\LandauO(m)$ samples suffice for a feasible solution with quantum run time \emph{close} to $\frac{1}{\hat{\gamma}}$ with high probability.

    \begin{algorithm}
        \caption{Mixed LP algorithm}\label{alg:mixed}
        \DontPrintSemicolon
        \SetKwInOut{Input}{Input}\SetKwInOut{Output}{Output}
        \BlankLine
        \Input{Hollow Hermitian matrices $A$ and $B$; numbers $s_{\mathrm{i}}, s_{\mathrm{u}} > 0$ of informed and uninformed pulses to sample.}
        \Output{Evolution times $\vec{\lambda}$ fulfilling \cref{eq:decomposition_constraints}.}
        \BlankLine
        $M \gets$ effective matrix from $A, B$\;
        $\hat{\gamma} \gets$ ray binary search (\cref{alg:ray_binary_search}) on $M$\;
        $X \gets \1 + \RDF_k^{\elementwise -1}(\hat{\gamma} M)$\;
        $S_{\mathrm{i}} \gets s_{\mathrm{i}}$ pulses sampled and rounded on $X$\;
        $S_{\mathrm{u}} \gets s_{\mathrm{u}}$ pulses sampled uniformly from $\urootset_k^n$\;
        $\vec{\lambda} \gets$ solve \eqref{eq:restricted_lp} over $S_{\mathrm{i}} \cup S_{\mathrm{u}}$\;
        \Return $\vec{\lambda}$;
    \end{algorithm}

    \begin{theorem}
        \label{th:mixed_algo}
        For any $\delta, \delta', \varepsilon \in (0, 1)$, let 
        \begin{equation}
            s_{\mathrm{u}} \geq C (D + \ln(1 / \delta)),
        \end{equation}
        and 
        \begin{equation}
            s_{\mathrm{i}} \geq \frac{m}{\rho_0^2 \varepsilon^2} \left(1 + \sqrt{2 \ln(1 / \delta')}\right)^2.
        \end{equation}
        Then, with probability at least $1 - \delta - \delta'$, the mixed \ac{LP} algorithm (\cref{alg:mixed}) returns a feasible solution with quantum run time $\lpnorm[1]{\vec{\lambda}} \leq (1 + \varepsilon) / \hat{\gamma}$.
    \end{theorem}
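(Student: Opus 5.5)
We intersect two high-probability events and exhibit an explicit feasible point of \eqref{eq:restricted_lp} over $S_{\mathrm{i}} \cup S_{\mathrm{u}}$. Its $\ell_1$-norm upper bounds the LP optimum. Let $E_{\mathrm{u}}$ be the event $\bar{B}(\vec{0}, \rho_0) \subseteq \conv \vec{\chi}(S_{\mathrm{u}})$. By \cref{th:uniform_ball} and the assumption on $s_{\mathrm{u}}$, $\PP[E_{\mathrm{u}}] \geq 1 - \delta$. Let $E_{\mathrm{i}}$ be the event $\lpnorm[2]{\vec{r}} < \rho_0 \varepsilon / \hat{\gamma}$, with $\vec{r}$ the residual of the informed times $\vec{\lambda}_{\mathrm{i}}$ from \cref{eq:informed_times}. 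A union bound gives $\PP[E_{\mathrm{u}} \cap E_{\mathrm{i}}] \geq 1 - \delta - \delta'$ once we show $\PP[E_{\mathrm{i}}^c] \leq \delta'$.

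For the latter we apply \cref{th:residual} with $t = \rho_0\varepsilon/\hat{\gamma}$. The hypothesis $t \geq \sqrt{m}/(\hat{\gamma}\sqrt{s_{\mathrm{i}}})$ reads $\rho_0 \varepsilon \geq \sqrt{m/s_{\mathrm{i}}}$. This follows from the lower bound on $s_{\mathrm{i}}$, which gives $\sqrt{m/s_{\mathrm{i}}} \leq \rho_0\varepsilon/(1+\sqrt{2\ln(1/\delta')})$. The lemma then yields
\begin{equation}
\PP[\lpnorm[2]{\vec{r}} \geq t] \leq \exp\Bigl(-\tfrac{s_{\mathrm{i}}}{2m}\bigl(\rho_0\varepsilon - \sqrt{m/s_{\mathrm{i}}}\bigr)^2\Bigr).
\end{equation}
Write $a = 1+\sqrt{2\ln(1/\delta')}$, so that $\sqrt{s_{\mathrm{i}}/m}\,\rho_0\varepsilon \geq a$. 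Then
\begin{equation}
\sqrt{s_{\mathrm{i}}/m}\,\bigl(\rho_0\varepsilon - \sqrt{m/s_{\mathrm{i}}}\bigr) = \sqrt{s_{\mathrm{i}}/m}\,\rho_0\varepsilon - 1 \geq \sqrt{2\ln(1/\delta')},
\end{equation}
so the exponent is at most $-\ln(1/\delta')$ and the bound is at most $\delta'$. The boundary case $\lpnorm[2]{\vec r}=t$ is harmless, since we only need $\lpnorm[2]{\vec r}\le t$ below.

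On $E_{\mathrm{u}} \cap E_{\mathrm{i}}$ we construct the correction. Set $\vec{w} \coloneqq \hat{\gamma}\vec{r}/\varepsilon$, so that $\lpnorm[2]{\vec{w}} \leq \rho_0$ and hence $\vec{w} \in \conv\vec{\chi}(S_{\mathrm{u}})$. Thus there are weights $\mu(\vec{x}) \geq 0$ on $S_{\mathrm{u}}$ with $\sum \mu = 1$ and $\sum \mu(\vec{x})\vec{\chi}(\vec{x}) = \vec{w}$. Define
\begin{equation}
\vec{\lambda} \coloneqq \vec{\lambda}_{\mathrm{i}} + \tfrac{\varepsilon}{\hat{\gamma}}\vec{\mu}.
\end{equation}
This is nonnegative, with
\begin{equation}
\sum \lambda(\vec{x})\vec{\chi}(\vec{x}) = \tfrac{1}{\hat{\gamma}}\mean{\vec{\chi}(\vec{x})} + \vec{r} = \vec{m}.
\end{equation}
These are exactly the constraints of \eqref{eq:restricted_lp} on $\nz(A)$, rewritten through the real encoding \eqref{eq:correlation_var}. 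The diagonal carries no constraint, and Hermiticity handles the pairs $(j,i)$. If a pulse lies in both sample sets, its weights simply add. Finally, $\lpnorm[1]{\vec{\lambda}} = 1/\hat{\gamma} + \varepsilon/\hat{\gamma}$, so the LP optimum, which the algorithm returns, is feasible with value at most $(1+\varepsilon)/\hat{\gamma}$.

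The only delicate points are the arithmetic verifying both conditions of \cref{th:residual} from the stated $s_{\mathrm{i}}$, and the consistency check that $\vec{\chi}$ and $\vec{m}$ live in the same $\RR^D$ as in \cref{th:uniform_ball}, so that the ball containment directly absorbs the residual. No positivity index or instance-dependent quantity enters. This is why the argument is uniform over instances.
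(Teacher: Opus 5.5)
Your proof is correct and follows the paper's argument: a union bound over the residual event from \cref{th:residual} with $t = \rho_0\varepsilon/\hat{\gamma}$ and the ball event from \cref{th:uniform_ball}, then correcting $\vec{\lambda}_{\mathrm{i}}$ with a combination of uniform pulses. The only difference is cosmetic. You rescale $\vec{r}$ by the fixed factor $\hat{\gamma}/\varepsilon$ into the ball, at cost exactly $\varepsilon/\hat{\gamma}$, whereas the paper rescales it onto the sphere of radius $\rho_0$, at cost $\lpnorm[2]{\vec{r}}/\rho_0$; your version therefore also covers $\vec{r} = \vec{0}$ without dividing by zero.
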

    \begin{proof}
        Take $\vec{\lambda}_{\mathrm{i}}$ as in \cref{eq:informed_times}.
        By \cref{th:residual}, since 
        \begin{equation}
            \frac{\rho_0 \varepsilon}{\hat{\gamma}} \geq \frac{\sqrt{m}}{\hat{\gamma}} \cdot \frac{\rho_0 \varepsilon}{\sqrt{m} (1 + \sqrt{2 \ln(1 / \delta')})} \geq \frac{\sqrt{m}}{\hat{\gamma} \sqrt{s_{\mathrm{i}}}},
        \end{equation}
        the probability of a large residual is upper bounded by
        \begin{equation}
            \begin{split}
                &\PP[\lpnorm[2]{\vec{r}} \geq \rho_0 \varepsilon / \hat{\gamma}] \\
                &\quad\leq \exp\left( - \frac{s_{\mathrm{i}} (\rho_0 \varepsilon - \sqrt{m / s_{\mathrm{i}}})^2}{2m} \right) 
                \leq \delta'.
            \end{split}
        \end{equation}
        By \cref{th:uniform_ball}, with probability at least $1 - \delta$, $(\rho_0 / \lpnorm[2]{\vec{r}})\vec{r}$ can be realized from the uniform pulses for unit cost, i.e.\ there is $\vec{\lambda}_{\mathrm{u}}$, s.t.\
        \begin{equation}
            \sum_{\vec{x} \in S_{\mathrm{u}}} \lambda_{\mathrm{u}}(\vec{x}) \vec{\chi}(\vec{x}) = \vec{r},
        \end{equation}
        and $\lpnorm[1]{\vec{\lambda}_{\mathrm{u}}} = \frac{\lpnorm[2]{\vec{r}}}{\rho_0}$.
        Hence with probability at least $1 - \delta - \delta'$, taking $\vec{\lambda} \coloneqq \vec{\lambda}_{\mathrm{i}} + \vec{\lambda}_{\mathrm{u}}$ is a feasible solution the \ac{LP}~\eqref{eq:restricted_lp} and fulfills
        \begin{equation}
            \lpnorma[1]{\vec{\lambda}_{\mathrm{i}} + \vec{\lambda}_{\mathrm{u}}} = \frac{1}{\hat{\gamma}} + \frac{\lpnorm[2]{\vec{r}}}{\rho_0} < \frac{1 + \varepsilon}{\hat{\gamma}}.
        \end{equation}
    \end{proof}

    The mixed \ac{LP} algorithm also opens a route to suppressing the finite pulse time errors discussed in \cref{sec:engineering}.
    We sketch the approach in the following, leaving a detailed analysis for future work.
    \Citet{bassler2025,kum2026}, generalizing \citet{votto2024}, observe that in the first-order Magnus expansion the error term inherits the locality of $H_S$.
    Whenever it moreover lies in the span of the $O_{ij}$, it can be absorbed into \cref{eq:decomposition_constraints} by shifting the target to $\vec{m} - \vec{\Delta}$, with an offset $\vec{\Delta} \in \RR^D$ that is fixed once the pulse set has been drawn.
    Solving for the shifted target then cancels the error exactly at the level of the first-order average Hamiltonian and leaves only the second-order Magnus truncation error.
    The modification keeps the program linear, but every drawn pulse contributes to $\vec{\Delta}$ and must consequently be applied, even those to which the solution assigns zero evolution time.

    Whether the error term lies in that span depends on $k$.
    A pulse of finite duration passes through the partial rotations $U_{u \vec{\theta}}$ with $u \in [0, 1]$, so that the error term integrates $U_{u\vec{\theta}}^\dagger H_S U_{u\vec{\theta}}$ over $u$.
    If the commutation relations \eqref{eq:phase_commutation} hold for all phases, that is if $k = \infty$ is admissible, as for the fermionic conjugations of \cref{sec:motivating_examples}, then these partial rotations are themselves admissible and merely attach a $u$-dependent phase to each $O_{ij}$.
    The error then remains in the span throughout and the shift absorbs it entirely.
    If the relation instead holds only on a proper subset $\Theta_k$, as for the $Z$-type Ising model with $X$-type conjugations or for the qudit shift and clock operators, the partial rotations take $H_S$ out of the span and leave a remainder that no choice of $\vec{\lambda}$ can reach.
    This remainder can, however, be addressed by an additional degree of freedom, not visible to the \ac{LP}.
    $U_{\vec{\theta}}$ depend on $\vec{\theta}$ only modulo $2\pi$ since the pulse generators have integer spectrum.
    The trajectory during a finite pulse, however, does depend on the winding.
    Winding a component as $\theta_i \to \theta_i + 2\pi w_i$ with $w_i \in \ZZ$ therefore leaves both the commutation relations \eqref{eq:phase_commutation} and the \ac{LP} untouched, while changing the remainder.
    For $k = 2$ the two windings $\pm \pi$ contribute with equal magnitude and opposite sign.
    \Citet{bassler2025} exploit this freedom in the rotation directions to cancel the remainder over the sequence.
    For larger $k$ the admissible windings generally contribute with differing magnitudes, so that cancellation instead requires combining several of them with suitable multiplicities, at a corresponding cost in pulses.

    Since only the right-hand side is modified, the technique is compatible with the \ac{LP}~\eqref{eq:restricted_lp} and can be combined with the informed sampling developed here.
    The offset $\vec{\Delta}$ is of first order in $t_{\mathrm{p}} \snorm{H_S}$, and hence small compared to $\vec{m}$ whenever the first-order treatment is justified.
    Its direction, however, is dictated by $H_S$ and by the drawn pulses, so that the informed pulses, which are sampled to cover the direction of $\vec{m}$, account for the bulk of the shifted target but not for the small off-direction component.
    Supplying the latter is instead done by the uniform pulses. 
    By \cref{th:uniform_ball} they realize a correction in an arbitrary direction at a cost of its $\ell_2$-norm over $\rho_0$, so that absorbing $\vec{\Delta}$ into the residual $\vec{r}$ in the proof of \cref{th:mixed_algo} bounds the quantum run time by
    \begin{equation}
        \lpnorm[1]{\vec{\lambda}} \leq \frac{1 + \varepsilon}{\hat{\gamma}} + \frac{\lpnorm[2]{\vec{\Delta}}}{\rho_0}
    \end{equation}
    with probability at least $1 - \delta - \delta'$.
    Robustness against the finite pulse time error therefore increases the run time by an additive term of first order in the pulse duration, retaining the leading $1 / \hat{\gamma}$ term of the informed sampling.
    As $\vec{\Delta}$ in turn depends on the drawn pulses, we leave a detailed analysis and a systematic treatment of the winding freedom for $k > 2$ to future work.

    As a final remark, we want to emphasize that the uniformly sampled pulses serve two distinct purposes.
    For safeguarding against infeasibility they are rarely necessary:
    On most instances the informed LP algorithm (\cref{alg:informed_lp}) with $s = (2 + \eta)D$, and without any uniform samples, is efficient, feasible, and provides low quantum run time solutions (see \cref{sec:feasibility_numerics}), so that the mixed LP algorithm (\cref{alg:mixed}) is only required if feasibility issues are actually encountered.
    As a means of realizing corrections in arbitrary directions, however, they are what renders the mitigation of the finite pulse time error discussed above applicable, and are therefore needed whenever the pulse duration cannot be neglected.
    
    The constants as stated in \cref{th:feasibility_window,th:uniform_ball,th:mixed_algo} are proof artifacts and likely far from optimal.
    Hence, the statements should be read as proofs of the asymptotic scalings and not as optimal sample numbers.
    Moreover, unless this mitigation is applied, the number of sampled pulses does not correspond to the number of pulses that have to be implemented, since only those in the support of $\vec{\lambda}$ enter the sequence.
    In that case, $s$ merely serves as an indicator of the classical overhead.

    \subsection{Quantum run time bounds}
    \label{sec:bounds}
    As established in \cref{sec:informed,sec:mixed}, the quantum run times of the informed LP algorithm (\cref{alg:informed_lp}) and the mixed LP algorithm (\cref{alg:mixed}) depend on the solution $\hat{\gamma}$ of the ray binary search (\cref{alg:ray_binary_search}).
    The following lemma shows that the previously claimed $\gamma_{\mathrm{lo}}$ is in fact valid, establishing a lower bound on $\hat{\gamma}$ in terms of the Frobenius norm, and thus allows us to derive solution guarantees.

    \begin{lemma}
        \label{th:ray_bound}
        \begin{equation}
            \gamma_{\mathrm{lo}} = \frac{L_k}{\fnorm{M}} \sqrt{\frac{n}{n-1}}
        \end{equation}
        is a valid lower seeding for the ray binary search (\cref{alg:ray_binary_search}).
        In particular, the solution $\hat{\gamma}$ of the search (\cref{alg:ray_binary_search}) is lower bounded by $\hat{\gamma} \geq \gamma_{\mathrm{lo}}$.
    \end{lemma}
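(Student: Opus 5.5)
The plan is to verify directly that \texttt{OnRay}$(\gamma_{\mathrm{lo}})$ returns true, i.e.\ that both checks of \cref{alg:ray_binary_search} pass at $\gamma = \gamma_{\mathrm{lo}}$. Since the search only ever sets $\gamma_{\mathrm{lo}}$ to points where \texttt{OnRay} is true (or returns $\gamma_{\mathrm{hi}} > \gamma_{\mathrm{lo}}$), the returned $\hat{\gamma}$ is then at least the initial seed $\gamma_{\mathrm{lo}}$. So the statement splits into two claims: \emph{existence} of $\RDF_k^{\elementwise -1}(\gamma_{\mathrm{lo}} M)$, and \emph{positive semidefiniteness} of $\1 + \RDF_k^{\elementwise -1}(\gamma_{\mathrm{lo}} M)$.

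For existence I would use that $M$ is hollow and Hermitian. Every off-diagonal modulus $\abs{M_{ij}}$ appears twice in $\fnorm{M}^2$, so $\abs{M_{ij}} \leq \fnorm{M}/\sqrt{2}$. Since $n \geq 2$ gives $n/(n-1) \leq 2$, this yields
\begin{equation}
\gamma_{\mathrm{lo}} \abs{M_{ij}} \leq L_k \sqrt{\tfrac{n}{2(n-1)}} \leq L_k .
\end{equation}
For $k \geq 3$, property 5 of \cref{th:expectation} then gives a preimage in $\DD$, which is unique by injectivity (property 3). For $k = 2$ with real $M$, $\RDF_2$ is continuous and injective on $[-1,1]$ with $\RDF_2(\pm 1) = \pm 1$ (properties 6 and 7), so it maps $[-1,1]$ onto $[-1,1] \supseteq [-L_2, L_2]$.

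Next, set $Y \coloneqq \RDF_k^{\elementwise -1}(\gamma_{\mathrm{lo}} M)$ off the diagonal and $Y_{ii} = 0$. I would show that $Y$ is Hermitian: by property 2 and injectivity, $\RDF_k(\conj{Y_{ij}}) = \conj{\gamma M_{ij}} = \gamma M_{ji}$, hence $Y_{ji} = \conj{Y_{ij}}$. For $k = 2$ it is simply real symmetric. Inverting the bound of property 4 (respectively property 8) entrywise gives $\abs{Y_{ij}} \leq \gamma_{\mathrm{lo}} \abs{M_{ij}} / L_k$. Therefore
\begin{equation}
\fnorm{Y} \leq \frac{\gamma_{\mathrm{lo}} \fnorm{M}}{L_k} = \sqrt{\tfrac{n}{n-1}} .
\end{equation}

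The final step, which I expect to be the only genuinely non-routine one, is an eigenvalue bound for hollow Hermitian matrices. The eigenvalues $\mu_1,\dots,\mu_n$ of $Y$ satisfy $\sum_\ell \mu_\ell = \Tr Y = 0$ and $\sum_\ell \mu_\ell^2 = \fnorm{Y}^2$. Write $\mu_{\min} = -a$. The remaining $n-1$ eigenvalues sum to $a$, so by Cauchy--Schwarz their squares sum to at least $a^2/(n-1)$. This gives
\begin{equation}
\fnorm{Y}^2 \geq a^2 \Bigl(1 + \tfrac{1}{n-1}\Bigr) = a^2 \tfrac{n}{n-1},
\end{equation}
that is, $\mu_{\min}(Y) \geq -\fnorm{Y}\sqrt{(n-1)/n} \geq -1$. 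Hence $\1 + Y \succeq 0$. Its diagonal is all ones, so $\1 + Y \in \elliptope_n$ and \texttt{OnRay}$(\gamma_{\mathrm{lo}})$ is true, which proves the lemma. This also explains the factor $\sqrt{n/(n-1)}$ in $\gamma_{\mathrm{lo}}$: it is exactly the slack provided by the trace-zero constraint.
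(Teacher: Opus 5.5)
Your proposal is correct and follows essentially the same route as the paper: entrywise existence of the inverse from $\gamma_{\mathrm{lo}}\lpnorm[\infty]{M}\le L_k\sqrt{n/(2(n-1))}\le L_k$ together with property 5, then the Frobenius bound $\fnorm{\RDF_k^{\elementwise -1}(\gamma_{\mathrm{lo}}M)}\le\sqrt{n/(n-1)}$ from property 4, and finally the trace-zero Cauchy--Schwarz eigenvalue bound giving $\mu_{\min}\ge -1$. Your explicit checks of the Hermiticity of $Y$, of the $k=2$ case via properties 6--8, and of the binary-search bookkeeping are details the paper leaves implicit.
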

    \begin{proof}
        We prove the claim by showing that $\1 + \RDF_k^{\elementwise -1}(\gamma_{\mathrm{lo}} M)$ is positive semidefinite and hence in the elliptope.
        First, note that for $n \geq 2$, $\RDF_k^{\elementwise -1}(\gamma_{\mathrm{lo}} M)$ is well-defined, since
        \begin{equation}
            \begin{split}
                \gamma_{\mathrm{lo}} \lpnorm[\infty]{M} &\leq L_k \sqrt{\frac{n}{n-1}} \frac{\lpnorm[\infty]{M}}{\fnorm{M}} \\
                &\leq L_k \sqrt{\frac{n}{2 (n - 1)}} \\
                &\leq L_k,
            \end{split}
        \end{equation}
        so by \cref{th:expectation}, $\RDF_k^{-1}$ exists on each entry of $\gamma_{\mathrm{lo}} M$.
        Moreover, recall from \cref{th:expectation} that
        \begin{equation}
            \abs{\RDF_k(z)} \geq L_k \abs{z},
        \end{equation}
        so
        \begin{equation}
            \abs{\RDF_k^{-1}(z)} \leq \frac{1}{L_k} \abs{z}.
        \end{equation}
        Then, we can bound the Frobenius norm of $\RDF_k^{\elementwise -1}(\gamma_{\mathrm{lo}} M)$ as
        \begin{equation}
            \begin{aligned}
                \fnorm*{\RDF_k^{\elementwise -1}(\gamma_{\mathrm{lo}} M)}
                &= \sqrt{ \sum_{i,j=1}^{n} \abs*{\RDF_k^{-1}(\gamma_{\mathrm{lo}} M_{ij})}^2 } \\
                &\leq \frac{\gamma_{\mathrm{lo}}}{L_k} \sqrt{ \sum_{i,j=1}^{n}  \abs{M_{ij}}^2 } \\
                &\leq \sqrt{\frac{n}{n-1}}.
            \end{aligned}
        \end{equation}

        Denote by $\mu_{\min} < 0$ the smallest eigenvalue, and by $\mu_{i}$ for $i \in [n-1]$ the other eigenvalues of $\RDF_k^{\elementwise -1}(\gamma_{\mathrm{lo}} M)$.
        The trace of $\RDF_k^{\elementwise -1}(\gamma_{\mathrm{lo}} M)$ vanishes, so
        \begin{equation}
            -\mu_{\min} = \sum_{i=1}^{n-1} \mu_i.
        \end{equation}
        Because $\RDF_k^{\elementwise -1}(\gamma_{\mathrm{lo}} M)$ is Hermitian, we can also express its Frobenius norm via its eigenvalues:
        \begin{equation}
            \begin{aligned}
                \fnorm*{\RDF_k^{\elementwise -1}(\gamma_{\mathrm{lo}} M)}
                &= \sqrt{\mu_{\min}^2 + \sum_{i=1}^{n-1} \mu_i^2} \\
                &\geq \sqrt{\mu_{\min}^2 + \frac{1}{n-1} \left( \sum_{i=1}^{n-1} \mu_i \right)^{2}} \\
                &= \sqrt{\frac{n}{n-1}} \abs{\mu_{\min}}.
            \end{aligned}
        \end{equation}
        Thus,
        \begin{equation}
            \abs{\mu_{\min}} \leq 1,
        \end{equation}
        and
        \begin{equation}
            \1 + \RDF_k^{\elementwise -1}(\gamma_{\mathrm{lo}} M) \succeq 0.
        \end{equation}
    \end{proof}

    \begin{corollary}
        \label{th:opt_upper}
        For every instance, $\1 + \hat{\gamma} M \in \CUT_k^n$.
        Consequently, there is a feasible solution to the \ac{LP}~\eqref{eq:general_lp} with $\lpnorm[1]{\vec{\lambda}} = 1 / \hat{\gamma}$ supported on at most $D + 1$ pulses, and the optimum $\vec{\lambda^*}$ satisfies
        \begin{equation}
            \lpnorm[1]{\vec{\lambda^*}} \leq \frac{1}{\hat{\gamma}} \leq \frac{\fnorm{M}}{L_k} \sqrt{\frac{n - 1}{n}}.
        \end{equation}
    \end{corollary}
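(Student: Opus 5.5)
The plan has three parts: show that $\1 + \hat{\gamma} M$ lies in $\CUT_k^n$ because it is the expectation of rounded outer products, obtain a small support from Carathéodory's theorem, and read off the run-time bound from \cref{th:ray_bound}.

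\textbf{Membership.} By \cref{th:ray_bound}, the binary search (\cref{alg:ray_binary_search}) returns $\hat{\gamma} \geq \gamma_{\mathrm{lo}}$. Its output guarantee gives
\begin{equation}
    X \coloneqq \1 + \RDF_k^{\elementwise -1}(\hat{\gamma} M) \in \elliptope_n ,
\end{equation}
where the inverse is well defined entrywise because \texttt{OnRay} checks this. Now sample $\vec{\xi} \sim \mathcal{CN}(0, X)$, or $\mathcal{N}(0, X)$ for $k=2$, and round to $\vec{x} = \sigma_k^{\elementwise}(\vec{\xi})$.

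Every realization $\vec{x}\vec{x}^\dagger$ is an extreme point of $\CUT_k^n$, and $\CUT_k^n$ is compact and convex. Hence $\EE[\vec{x}\vec{x}^\dagger] \in \CUT_k^n$. Two facts then give $\EE[\vec{x}\vec{x}^\dagger] = \1 + \hat{\gamma} M$:
\begin{itemize}
    \item the expectation formula of \citet{zhang2006} gives $\EE[\vec{x}\vec{x}^\dagger] = \RDF_k^{\elementwise}(X)$;
    \item \cref{th:expectation} gives $\RDF_k(1) = 1$, and $\RDF_k \circ \RDF_k^{-1} = \mathrm{id}$ on the off-diagonal entries.
\end{itemize}

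For $k=\infty$ the rounding distribution is continuous rather than finitely supported, so I would spell out the convexity step. Membership in a compact convex set is preserved under expectation by a separating-hyperplane argument: for any linear functional $\ell$, $\ell(\EE[\vec{x}\vec{x}^\dagger]) \le \max_{Y \in \CUT_\infty^n} \ell(Y)$. For $k=2$, the real entries and properties 6--7 of \cref{th:expectation} play the same role.

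\textbf{Support size.} Write $\1 + \hat{\gamma} M$ as a finite convex combination $\sum_{\vec{x}} p(\vec{x})\, \vec{x}\vec{x}^\dagger$. Pass to the correlation-vector picture. The diagonal is always all ones, so the constraints indexed by $\nz(A)$ live in $\RR^D$ via \cref{eq:correlation_var}. Then $\hat{\gamma}\vec{m} \in \conv \vec{\chi}(\urootset_k^n)$.

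Carathéodory's theorem in $\RR^D$ yields a convex combination using at most $D+1$ points $\vec{\chi}(\vec{x})$. Setting $\lambda(\vec{x}) = p(\vec{x})/\hat{\gamma}$ then satisfies the constraints \eqref{eq:decomposition_constraints}, and $\lpnorm[1]{\vec{\lambda}} = 1/\hat{\gamma}$.

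When $A$ has zero off-diagonal entries, the extended $M$ in \cref{eq:effective_matrix} imposes extra constraints. Dropping them only relaxes the problem, so feasibility for \eqref{eq:general_lp} still holds.

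\textbf{Run-time bound.} Optimality gives $\lpnorm[1]{\vec{\lambda}^*} \le 1/\hat{\gamma}$. Substituting $\hat{\gamma} \ge \gamma_{\mathrm{lo}}$ from \cref{th:ray_bound} gives
\begin{equation}
    \frac{1}{\hat{\gamma}} \le \frac{\fnorm{M}}{L_k}\sqrt{\frac{n-1}{n}} .
\end{equation}

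\textbf{Main obstacle.} The only delicate point is justifying that the rounding expectation lies in $\CUT_k^n$ when $k=\infty$, since the distribution is continuous there. The separating-hyperplane argument settles this, and everything else is direct substitution.
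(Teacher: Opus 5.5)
Your proposal is correct and follows the paper's proof: the Krivine-rounded outer products have expectation $\1 + \hat{\gamma} M$, which therefore lies in the compact convex set $\CUT_k^n$. Carath\'eodory's theorem in the $D$-dimensional space of constrained entries then gives the support of at most $D+1$ pulses, and \cref{th:ray_bound} gives the final inequality. Your additions, the separating-hyperplane justification for $k=\infty$ and the remark that entries with $A_{ij}=0$ are not constraints of \eqref{eq:general_lp}, only make explicit what the paper's three-line proof leaves implicit.
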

    \begin{proof}
        The rounding samples from the extreme points of the compact convex set $\CUT_k^n$, so the expectation $\1 + \hat{\gamma} M$ lies in $\CUT_k^n$.
        By Carathéodory's theorem in the $D$-dimensional real space of the constrained entries, it is a convex combination of at most $D + 1$ extreme points.
        The second inequality is \cref{th:ray_bound}.
    \end{proof}

    From \cref{th:ray_bound}, we can then straightforwardly derive an upper bound on the quantum run times.
    \begin{theorem}
        \label{th:qtime_bound}
        If $\vec{\lambda}$ was determined via the informed \ac{LP} algorithm (\cref{alg:informed_lp}), with $s$ and $\delta$ fulfilling \cref{th:feasibility_window}, then with probability $1 - \delta$ the quantum run time can be upper bounded by
        \begin{equation}
            \lpnorm[1]{\vec{\lambda}} 
            \leq \frac{\fnorm{M}}{L_k} \sqrt{\frac{n - 1}{n}}.
        \end{equation}
        Similarly, if $\vec{\lambda}$ resulted from the mixed \ac{LP} algorithm (\cref{alg:mixed}), with $s_{\mathrm{i}}, s_{\mathrm{u}}, \varepsilon, \delta,$ and $\delta'$ fulfilling \cref{th:mixed_algo}, then with probability $1 - \delta - \delta'$ the quantum run time can be upper bounded by
        \begin{equation}
            \lpnorm[1]{\vec{\lambda}}
            \leq \frac{(1 + \varepsilon) \fnorm{M}}{L_k} \sqrt{\frac{n - 1}{n}}.
        \end{equation}
    \end{theorem}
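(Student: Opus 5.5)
The proof chains the feasibility and mixed-algorithm guarantees with the lower bound on $\hat{\gamma}$ from \cref{th:ray_bound}. In both cases the quantum run time is controlled by $1/\hat{\gamma}$, and \cref{th:ray_bound} then converts $1/\hat{\gamma}$ into the Frobenius-norm expression.

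For the informed \ac{LP} algorithm (\cref{alg:informed_lp}), assume $s$ and $\delta$ satisfy part (ii) of \cref{th:feasibility_window}. Then, with probability at least $1-\delta$, the \ac{LP}~\eqref{eq:conjecture_lp} is robustly feasible and in particular feasible. By construction, a feasible $\tilde{\vec{\lambda}}$ with $\lpnorm[1]{\tilde{\vec{\lambda}}} = 1$ rescales to $\vec{\lambda} \coloneqq \tilde{\vec{\lambda}}/\hat{\gamma}$. This $\vec{\lambda}$ is feasible for the \ac{LP}~\eqref{eq:restricted_lp}, since its constraints are exactly those of \eqref{eq:conjecture_lp} divided by $\hat{\gamma}$, and it has $\lpnorm[1]{\vec{\lambda}} = 1/\hat{\gamma}$. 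Since \cref{alg:informed_lp} returns the minimizer of \eqref{eq:restricted_lp} over $S$, its output has run time at most $1/\hat{\gamma}$.

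For the mixed \ac{LP} algorithm (\cref{alg:mixed}), \cref{th:mixed_algo} directly provides, with probability at least $1-\delta-\delta'$, a feasible point of \eqref{eq:restricted_lp} over $S_{\mathrm{i}} \cup S_{\mathrm{u}}$ with $\ell_1$-norm at most $(1+\varepsilon)/\hat{\gamma}$. The LP optimum returned by the algorithm is no larger.

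In both cases, \cref{th:ray_bound} gives $\hat{\gamma} \geq \gamma_{\mathrm{lo}} = \frac{L_k}{\fnorm{M}}\sqrt{\frac{n}{n-1}}$, hence $1/\hat{\gamma} \leq \frac{\fnorm{M}}{L_k}\sqrt{\frac{n-1}{n}}$. Substituting this yields both claimed inequalities.

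The only delicate point is the first step: the equivalence between feasibility of \eqref{eq:conjecture_lp} and a run-time-$1/\hat{\gamma}$ solution of \eqref{eq:restricted_lp}. This rests on the paper's assumption that \cref{alg:ray_binary_search} has returned $X = \1 + \RDF_k^{\elementwise -1}(\hat{\gamma} M) \in \elliptope_n$, so that the sampling is well defined. I would also note that the minimization in \eqref{eq:restricted_lp} only improves on the exhibited feasible point, which is why the bound is an inequality.
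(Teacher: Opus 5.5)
Your proof is correct and follows the paper's argument. The paper reads off $\lpnorm[1]{\vec{\lambda}} \leq 1/\hat{\gamma}$ (respectively $(1+\varepsilon)/\hat{\gamma}$) from \cref{th:feasibility_window,th:mixed_algo} and then applies $\hat{\gamma} \geq \gamma_{\mathrm{lo}}$ from \cref{th:ray_bound}; your explicit rescaling $\vec{\lambda} = \tilde{\vec{\lambda}}/\hat{\gamma}$ from \eqref{eq:conjecture_lp} to \eqref{eq:restricted_lp} only spells out a step the paper leaves implicit.
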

    \begin{proof}
        By \cref{th:feasibility_window,th:mixed_algo}, $\lpnorm[1]{\vec{\lambda}} \leq 1 / \hat{\gamma}$ or $\lpnorm[1]{\vec{\lambda}} \leq (1 + \varepsilon) / \hat{\gamma}$, accordingly.
        In either case, \cref{th:ray_bound} implies the claim directly.
    \end{proof}

    With \cref{th:qtime_bound} we have shown that the quantum run times of our algorithms upper bounded linearly in the Frobenius norm of $M$.
    A lower bound on the optimal quantum run time for the application to qubit systems was derived by \citet{bassler2024}.
    Here, we extend this bound to our more general problem setup.

    \begin{theorem}
        \label{th:opt_bound}
        Let $\vec{\lambda}$ be any feasible solution to the \ac{LP}~\eqref{eq:general_lp}.
        Then, the quantum run time is lower bounded by
        \begin{equation}
            \lpnorm[1]{\vec{\lambda}} \geq \lpnorm[\infty]{M}.
        \end{equation}
    \end{theorem}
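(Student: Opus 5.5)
The bound follows directly from the triangle inequality applied to a single constraint of \eqref{eq:general_lp}, using that every commutation factor has modulus one.

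First, pick an index pair $(i^*, j^*) \in \nz(A)$ at which $\abs{M_{i^*j^*}} = \lpnorm[\infty]{M}$. Such a pair exists because, by \cref{eq:effective_matrix}, $M$ vanishes outside $\nz(A)$, and because $M \neq 0$ is excluded as trivial. (If $M = 0$ the claim $\lpnorm[1]{\vec{\lambda}} \geq 0$ holds anyway.)

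Second, feasibility of $\vec{\lambda}$ gives the constraint
\begin{equation}
    M_{i^*j^*} = \sum_{\vec{\theta} \in \nz(\vec{\lambda})} \lambda(\vec{\theta}) \e^{\i (\theta_{i^*} - \theta_{j^*})}.
\end{equation}
Take absolute values and apply the triangle inequality. Each phase factor has unit modulus and each $\lambda(\vec{\theta})$ is nonnegative, so
\begin{equation}
    \lpnorm[\infty]{M} = \abs{M_{i^*j^*}} \leq \sum_{\vec{\theta} \in \nz(\vec{\lambda})} \lambda(\vec{\theta}) = \lpnorm[1]{\vec{\lambda}}.
\end{equation}
This is the claimed bound.

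No step here is a real obstacle. The only point needing care is that the maximum entry of the effective matrix $M$ is attained on a constrained index pair. This holds because entries outside $\nz(A)$ were set to zero, so any nonzero maximal entry lies in $\nz(A)$.

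An equivalent geometric phrasing uses \cref{eq:poly_ray}: every element of $\CUT_k^n$ has entries in $\DD$, so the point $\1 + M/\lpnorm[1]{\vec{\lambda}}$ can have off-diagonal entries of modulus at most one. The entrywise argument above avoids needing the full matrix equality, which fails when $A$ has zero off-diagonal entries.
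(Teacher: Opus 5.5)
Your proof is correct and matches the paper's: the paper also takes the constraint of \eqref{eq:general_lp} at each $(i,j) \in \nz(A)$ and applies the triangle inequality with unit-modulus phases to obtain $\abs{M_{ij}} \leq \lpnorm[1]{\vec{\lambda}}$. You additionally note that the maximal entry of $M$ lies in $\nz(A)$, because $M$ vanishes elsewhere by \cref{eq:effective_matrix}; the paper leaves that step implicit.
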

    \begin{proof}
        By \cref{eq:decomposition_constraints}, $M_{ij} = \sum_{\vec{\theta} \in \nz(\vec{\lambda})} \lambda(\vec{\theta}) e^{\i (\theta_i - \theta_j)}$ on all $(i, j) \in \nz(A)$.
        Applying the triangle inequality then yields $\abs{M_{ij}} \leq \lpnorm[1]{\vec{\lambda}}$ on all entries, which is the claim.
    \end{proof}

    With the standard norm inequality
    \begin{equation}
        \label{eq:fnorm_to_infnorm}
        \fnorm{M} \leq \lpnorm[\infty]{M} \sqrt{\abs{\nz(M)}},
    \end{equation}
    we can derive a nonconstant approximation ratio from \cref{th:qtime_bound,th:opt_bound}.

    \begin{corollary}
        \label{th:approx_ratio}
        The informed \ac{LP} algorithm (\cref{alg:informed_lp}), with $s$ and $\delta$ fulfilling \cref{th:feasibility_window} and the mixed LP algorithm (\cref{alg:mixed}) with $s_{\mathrm{i}}, s_{\mathrm{u}}, \varepsilon, \delta$, and $\delta'$ fulfilling \cref{th:mixed_algo} are randomized $\LandauO(\sqrt{m})$-approximation algorithms with success probabilities $1 - \delta$ and $1 - \delta - \delta'$, respectively. 
    \end{corollary}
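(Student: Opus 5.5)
The plan is to combine the upper bound of \cref{th:qtime_bound} with the lower bound of \cref{th:opt_bound} through the norm inequality \eqref{eq:fnorm_to_infnorm}, and then count the nonzero entries of $M$ in terms of $m$.

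\textbf{Step 1: the returned run time.} Fix either algorithm and assume its success event. By \cref{th:qtime_bound}, the returned $\vec{\lambda}$ is feasible for the \ac{LP}~\eqref{eq:restricted_lp}, and hence for \eqref{eq:general_lp}. Its run time satisfies $\lpnorm[1]{\vec{\lambda}} \leq (1+\varepsilon)\fnorm{M}/L_k$, where we set $\varepsilon = 0$ for the informed algorithm and drop the harmless factor $\sqrt{(n-1)/n} \leq 1$. This event has probability at least $1-\delta$ for the informed algorithm and at least $1-\delta-\delta'$ for the mixed one.

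\textbf{Step 2: the support of $M$.} By \cref{eq:effective_matrix}, $M$ vanishes wherever $A$ does, and its diagonal is zero. Hence $\nz(M) \subseteq \nz(A)$, so $\abs{\nz(M)} \leq 2m$. Then \eqref{eq:fnorm_to_infnorm} gives $\fnorm{M} \leq \sqrt{2m}\,\lpnorm[\infty]{M}$.

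\textbf{Step 3: comparison with the optimum.} By \cref{th:opt_bound}, every feasible solution, in particular the optimum $\vec{\lambda^*}$, satisfies $\lpnorm[1]{\vec{\lambda^*}} \geq \lpnorm[\infty]{M}$. This requires that the entry attaining $\lpnorm[\infty]{M}$ lies in $\nz(A)$, which holds because $M$ is zero off $\nz(A)$. We also have $\lpnorm[\infty]{M} > 0$, since the case $M=0$ is excluded. Chaining the three steps gives
\begin{equation}
\frac{\lpnorm[1]{\vec{\lambda}}}{\lpnorm[1]{\vec{\lambda^*}}} \leq \frac{(1+\varepsilon)\sqrt{2}}{L_k}\sqrt{m}.
\end{equation}

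\textbf{Constants.} It remains to check that the prefactor is a constant. For fixed $k$, $L_k$ is a positive constant given in \cref{th:expectation}. If $\varepsilon<1$, the prefactor is at most $2\sqrt{2}/L_k$, which gives the claimed $\LandauO(\sqrt{m})$ ratio.

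The main point of care is this constant bookkeeping. In particular, for $k=2$ the bound $L_2 = 2/\pi$ is only available on real entries, which is why $M$ is assumed real in that case. Otherwise the argument is a direct chaining of earlier results. One could also note that both algorithms run in polynomial time, with the sample sizes given in \cref{th:feasibility_window,th:mixed_algo}, but the substance of the statement is the ratio bound.
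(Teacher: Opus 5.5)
Your proposal is correct and follows the paper's proof: both chain the upper bound of \cref{th:qtime_bound} with the lower bound of \cref{th:opt_bound} through \cref{eq:fnorm_to_infnorm}, using $\abs{\nz(M)} \leq \abs{\nz(A)} = 2m$. Your additional remarks, that the maximal entry of $M$ lies in $\nz(A)$ and that $M \neq 0$, are valid bookkeeping that the paper leaves implicit.
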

    \begin{proof}
        We denote by $\vec{\lambda}$ the corresponding algorithm solution and by $\vec{\lambda^*}$ the optimal solution to the \ac{LP}~\eqref{eq:general_lp}.
        Then, by applying \cref{th:qtime_bound,th:opt_bound}, and \cref{eq:fnorm_to_infnorm}, we can bound the approximation ratio
        \begin{equation}
            \begin{aligned}
                \frac{\lpnorm[1]{\vec{\lambda}}}{\lpnorm[1]{\vec{\lambda^*}}}
                & \leq \frac{(1 + \varepsilon) \fnorm{M} \sqrt{\frac{n - 1}{n}}}{L_k \lpnorm[\infty]{M}} \\
                & \leq \frac{1 + \varepsilon}{L_k} \sqrt{\abs{\nz(M)} \frac{n-1}{n}} \\
                & \leq \frac{1 + \varepsilon}{L_k} \sqrt{\abs{\nz(A)} \frac{n-1}{n}} \in \LandauO(\sqrt{m}).
            \end{aligned}
        \end{equation}
    \end{proof}

    \section{Application to concrete systems}
    \label{sec:applications}
    Throughout this section we benchmark the informed \ac{LP} algorithm (\cref{alg:informed_lp}) against the uninformed \ac{LP} relaxation of~\citep{bassler2024,bassler2025,kum2026} and for qubit systems against the explicit construction of~\citep{garciadeandoin2026} as well.
    As reference, we additionally report the quantum run time $1 / \hat{\gamma}$ returned by the ray binary search (\cref{alg:ray_binary_search}) and the exact optimum of \eqref{eq:general_lp}, wherever it is computable.
    The guarantees of \cref{th:feasibility_window,th:mixed_algo} are stated in terms of $1 / \hat{\gamma}$, so we also report its value on concrete instances.

    The mixed \ac{LP} algorithm (\cref{alg:mixed}) is not included in the comparison.
    Its performance is governed by the split of the pulse budget into informed and uninformed samples.
    If all pulses are sampled informedly, it reduces to the informed \ac{LP} algorithm, and if all of them are sampled uniformly, to the uninformed \ac{LP} relaxation.
    At a fixed total number of samples, and on an instance for which the informed \ac{LP} algorithm is feasible, a benchmark would therefore display the chosen split instead of a property of the algorithm.
    Choosing $s_{\mathrm{i}}$ and $s_{\mathrm{u}}$ large enough for the guarantees of \cref{th:mixed_algo} to apply, on the other hand, introduces a sample overhead that is not comparable to the pulse numbers used by the other methods.

    The same reasoning applies to the mitigation of the finite pulse time error discussed in \cref{sec:mixed}.
    On the platforms addressed by the applications below one would in general want to suppress this error, but its magnitude depends on the pulse duration and on the drawn pulses, which adds further degrees of freedom that a direct comparison could not control for.
    We instead rely on the analytical statement of \cref{sec:mixed}, by which the additional cost is of first order in the pulse duration, so that for short pulses the run time stays close to the one attained by the informed \ac{LP} algorithm, which we therefore benchmark in its place.

    All quantum run times are reported as $\lpnorm[1]{\vec{\lambda}}$, that is in units of the evolution time under the native system Hamiltonian.
    The solver, the tolerances, the computation of the exact optimum, and the numerical inversion of $\RDF_k$ are collected in \cref{sec:numerics}.

    \subsection{Qubits}
    \label{sec:qubit_application}
    As introduced in \cref{sec:motivating_examples}, our method applies to qubit systems with Pauli conjugations in two ways.
    The $Z$-type Ising model with $X$-type conjugations is captured directly with one site per qubit, so that $n = q$ for a system of $q$ qubits.
    The symplectic encoding of \cref{eq:examples_single_pauli,eq:examples_single_interaction} instead takes $n = 2q + 1$ and reaches any $2$-local interaction of $X$ and $Z$ operators, as well as any single $X$, $Y$, or $Z$ term, at the price of excluding pairs that combine a $Y$ with another nontrivial operator.
    In both cases the commutation relations \eqref{eq:phase_commutation} are satisfied with $k = 2$, so that we cannot engineer complex coefficients and require $M$ to be a real-valued matrix.
    Consequently, $D = m$, and the pulse ratio $s / m$ coincides with the ratio $s / D$ that governs the feasibility window of \cref{th:feasibility_window}.
    The benchmarks reported below use the first encoding on $q = \num{20}$ qubits, so that $n = \num{20}$ and $m = \num{190}$.
    Since all $O_{ij} = Z_i Z_j$ commute in this encoding, the Trotter approximation \eqref{eq:trotter} is exact, i.e., every feasible pulse set realizes the target without Trotter error.

    Out of the three applications we consider, this one is the most studied, and two efficient methods are available for comparison.
    The first is the uninformed \ac{LP} relaxation, which was introduced as an efficient heuristic for Pauli conjugations by \citet{bassler2025}, and carried over to fermionic systems by \citet{kum2026}.
    It draws $s$ pulses uniformly at random and solves the \ac{LP}~\eqref{eq:restricted_lp} on them, and thus differs from the informed \ac{LP} algorithm (\cref{alg:informed_lp}) only in the sampling distribution.
    If $\hat{\gamma}$ were zero, that is if our sampling were centered on the identity, the two approaches would coincide.
    The second method is the explicit construction of \citet{garciadeandoin2026}, which determines a solution directly from a spectral decomposition of $M$ and requires no sampling at all.
    Whenever $M$ has degenerate eigenvalues, however, the spectral decomposition and with it the resulting quantum run time are not unique.
    We therefore evaluate the construction on \num{2000} Haar-random orthonormal bases of each degenerate eigenspace.
    It assumes the ability to implement arbitrary single-qubit gates instead of restricting to Pauli conjugations as we do, and engineers any $2$-local qubit Hamiltonian from an all-to-all connected Ising-type system, a strictly larger class of targets than ours.
    The comparison below is therefore not like for like, and the run times we report for this method should be read with that difference in mind.

    Before discussing the numerics, we evaluate our bounds in this setting.
    Previously, there were two upper bounds on the quantum run time of the optimal solution the \ac{LP}~\eqref{eq:general_lp} in the qubit case.
    A bound in $\lpnorm[\infty]{M}$ was conjectured by \citet{bassler2024}:
    \begin{equation}
        \lpnorm[1]{\vec{\lambda^*}} \leq \lpnorm[\infty]{M} \cdot \begin{cases}
            q, & \textnormal{for odd } q, \\
            q-1, & \textnormal{for even } q.
        \end{cases}
    \end{equation}
    A tighter bound in the Frobenius norm was proven by \citet{garciadeandoin2026a}:
    \begin{equation}
        \label{eq:previously_tight_bound}
        \lpnorm[1]{\vec{\lambda^*}} \leq \sqrt{3/2} \fnorm{M}.
    \end{equation}
    Evaluating \cref{th:ray_bound} for $k=2$ and $n = 2q + 1$ yields that there exists a solution $\vec{\lambda}$ fulfilling
    \begin{equation}
        \lpnorm[1]{\vec{\lambda}} \leq \frac{\pi}{2} \sqrt{\frac{2q}{2q + 1}} \fnorm{M}.
    \end{equation}
    Hence, this bound must in particular hold for the optimal solution $\vec{\lambda^*}$.
    Since $\frac{\pi}{2} > \sqrt{3/2}$, our bound is weaker than the bound \eqref{eq:previously_tight_bound} by a factor of at most $\pi / \sqrt{6} \approx 1.28$.
    Note moreover that the bound \eqref{eq:previously_tight_bound} holds for arbitrary $2$-local qubit Hamiltonians, and thus for a slightly larger class of problems.
    In the opposite direction, \cref{th:opt_bound} bounds the optimal quantum run time from below by $\lpnorm[\infty]{M}$, which in this setting recovers the qubit bound of \citet{bassler2024} that \cref{th:opt_bound} generalizes.

    We first consider the canonical instance shown in \cref{fig:complete_bipartite}, where an Ising-type target on the complete bipartite graph $K_{10,10}$ is to be engineered from an all-to-all connected Ising system on the same \num{20} qubits, with all couplings of both Hamiltonians set to one.
    The horizontal axis starts at $s / m = 2$ because, with $D = m$ in this setting, \cref{th:feasibility_window}~(i) rules out a feasible restricted linear program below that ratio with high probability, and the numerical study of \cref{sec:feasibility_numerics} locates the transition there as well.
    With $m_T = \num{100}$ target edges we have $\fnorm{M} = \sqrt{200}$, so that the guarantee of \cref{th:qtime_bound} evaluates to $\approx \num{21.7}$ (the bound of \citet{garciadeandoin2026a} gives $\approx \num{17.3}$), while the optimum is $\approx \num{10}$.
    The informed \ac{LP} algorithm attains $\approx \num{13}$ already at the smallest pulse ratio shown, in the runs where the restricted linear program is feasible, and settles within a few percent of the optimum from $s / m \approx 3$ onward, with a narrow spread across runs.
    It stays below $1 / \hat{\gamma} \approx \num{15.7}$ over the entire range shown.
    The uninformed \ac{LP} relaxation, in contrast, starts an order of magnitude above the optimum at $s / m = 2$, with a spread covering the full plot range, becomes competitive only beyond $s / m \approx 3$, and remains above the informed \ac{LP} algorithm at $s / m = 4$.
    On this instance, $M$ has an 18-fold degenerate eigenvalue, and depending on the chosen basis of its eigenspace the explicit construction yields between $\approx \num{35.9}$ and $\approx \num{59.0}$, with a median of $\approx \num{44.7}$, that is roughly four to six times the optimum.

    \begin{figure}
        \centering
        \includegraphics{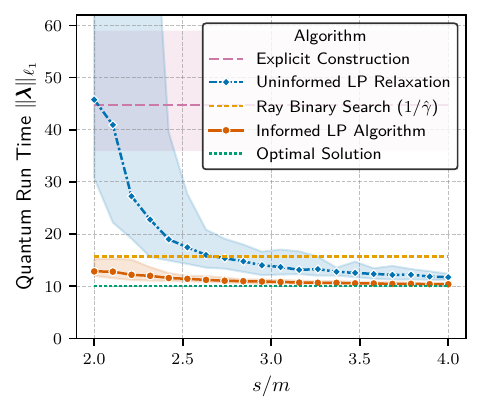}
        \caption[Comparison of algorithm performances on a canonical instance.]{
            Comparison of algorithm performances on a canonical instance.
            The system Hamiltonian is given by an all-to-all connected Ising model on \num{20}~qubits.
            The target is also an Ising-type Hamiltonian with a connectivity graph given by a complete bipartite graph~$K_{10, 10}$.
            All couplings of both Hamiltonians are set to~\num{1}.
            The pulse ratio is varied over \num{20} equally spaced values between \num{2.0} and \num{4.0}.
            For the randomized algorithms, the data is aggregated over \num{50}~independent runs per pulse ratio~$s/m$.
            The markers indicate the median value, the shaded areas range from the minimal to the maximal attained quantum run time.
            Runs with an infeasible restricted linear program are excluded, which affects only $s/m < \num{2.3}$ and, at $s/m = 2$, \num{34}~runs of the informed \ac{LP} algorithm and \num{37}~runs of the uninformed \ac{LP} relaxation.
            For the explicit construction, which does not depend on $s/m$, the line and the shaded area show the median and the range over \num{2000} random orthonormal bases of the degenerate eigenspace of~$M$.
        }
        \label{fig:complete_bipartite}
    \end{figure}

    To probe how the methods respond to the structure of the target, we next vary the target density, as shown in \cref{fig:edge_surpression}.
    Since all couplings are again set to one, $M$ is the adjacency matrix of the target graph, so that $\lpnorm[\infty]{M} = 1$ and \cref{th:opt_bound} places a floor of $\lpnorm[1]{\vec{\lambda^*}} \geq 1$ on every instance of the sweep.
    The optimum is unimodal in the target density:
    It rises from $\approx 2$ on the sparsest targets to $\approx \num{4.3}$ near a density ratio of $\num{0.6}$, decreases again, and collapses to the floor at full density, where the target coincides with the system Hamiltonian and the identity pulse alone is optimal.
    The upper bound of \cref{th:qtime_bound} grows as $\sqrt{m_T}$ over the same sweep and therefore does not reproduce this turnaround.
    The informed \ac{LP} algorithm follows the optimum to within \qty{15}{\percent} across the entire range, including the collapse at full density, and has the narrowest spread of all methods considered.

    \begin{figure}
        \centering
        \includegraphics{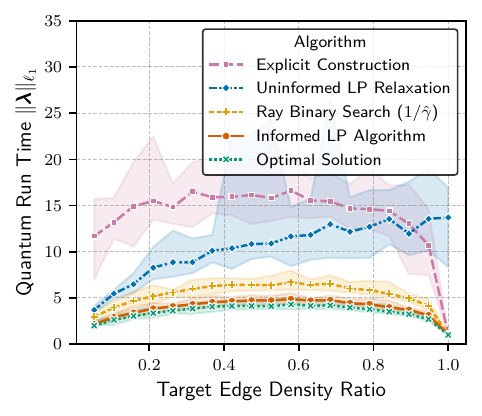}
        \caption[Comparison of algorithm performances on random instances of varying target density.]{
            Comparison of algorithm performances on random instances of varying target density.
            The system Hamiltonian is given by an all-to-all connected Ising model on \num{20}~qubits.
            The target is also an Ising-type Hamiltonian with the connectivity given by a random Erdős--Rényi graph with $m_T$ edges.
            All couplings of both Hamiltonians are set to~\num{1}.
            $m_T$ is varied from~\num{10} to the fully connected~\num{190} edges in a step size of~\num{10}, and the horizontal axis shows the resulting target edge density ratio $m_T / \num{190}$.
            For each target density, \num{20} independent instances were generated.
            The markers show the median quantum run time of each algorithm over the random instances.
            The shaded areas indicate the minimal and maximal quantum run time attained for any instance.
            For the randomized algorithms, the best of \num{20} independent runs with pulse ratio $s/m = 3$ is considered for each instance.
            For the explicit construction, the minimal quantum run time over \num{2000} random orthonormal bases of the degenerate eigenspaces of~$M$ is considered for each instance.
        }
        \label{fig:edge_surpression}
    \end{figure}

    The two established methods behave qualitatively differently from each other.
    The uninformed \ac{LP} relaxation increases monotonically with the target density, from $\approx \num{3.7}$ on the sparsest targets to $\approx \num{13.7}$ at full density, where it does not collapse.
    It thus requires roughly fourteen times the optimal quantum run time on what is the easiest instance of the entire sweep.
    The single optimal pulse is the identity, which is drawn by the uninformed sampling with vanishing probability $2^{- q + 1}$.
    The explicit construction, in contrast, starts at $\approx \num{11.7}$ on the sparsest targets, roughly six times the optimum, rises to a plateau between \num{15} and \num{16.5}, and declines only toward the densest targets.
    It is overtaken by the uninformed \ac{LP} relaxation only in the last few percent of the sweep.
    It does recover the trivial solution at full density, but its spread is the widest of all methods on sparse targets and comparable to that of the uninformed \ac{LP} relaxation elsewhere.
    Both the informed \ac{LP} algorithm and $1 / \hat{\gamma}$ reproduce the unimodal shape of the optimum across the sweep, whereas the uninformed \ac{LP} relaxation does not and the explicit construction follows it only qualitatively, at four to six times the optimum.

    The two benchmarks show that on qubit systems the informed \ac{LP} algorithm reaches near-optimal quantum run times with a number of pulses linear in the number of system interaction terms, and that it stays below both established methods at every point of both sweeps.
    The worst-case approximation ratio of $\LandauO(\sqrt{m})$ established in \cref{sec:bounds} overestimates the quantum run times for our example system.

    \subsection{Qudits}
    \label{sec:qudit_application}
    As introduced in \cref{sec:motivating_examples}, our method applies to $q$ qudits with $d$ levels through the interaction terms $O_{ij} = P_i P_j^\dagger$ on $n = 2q + 1$ sites, where \cref{eq:phase_commutation} holds for every $k$ dividing $d$, so that $D = 2m$ for $k \geq 3$.
    In contrast to the fermionic case, the admissible phase sets are thus restricted by the hardware, with $k = d$ giving the finest one.

    Hamiltonian engineering for qudits by conjugation with Weyl operators was introduced by \citet{alvarez-ahedo2025}.
    They show that a decomposition with nonnegative evolution times exists among all $d^{2q}$ Weyl conjugations, by extending an earlier qubit construction~\citep{garcia-de-andoin2024}, but give no procedure that efficiently returns such a decomposition or minimizes the quantum run time.
    Robust Hamiltonian engineering for strongly interacting qudits was moreover demonstrated on spin-$1$ nitrogen-vacancy centers by \citet{zhou2024}.

    In our framework, the admissible terms are the clock couplings $Z_a Z_b^\dagger$, the shift couplings $X_a X_b^\dagger$, and the mixed couplings $X_a Z_b^\dagger$ between qudits $a \not = b$, the terms $X_a Z_a^\dagger$ on a single qudit, and the fields $X_a$ and $Z_a$, each together with its Hermitian conjugate.
    This excludes, for instance, the spin-$1$ coupling $S^z_a S^z_b$ considered as native Hamiltonian by \citet{alvarez-ahedo2025}, which also contains the terms $Z_a Z_b$ and $Z_a^\dagger Z_b^\dagger$.
    As native Hamiltonian, we take all admissible terms that act on two sites of different qudits with equal strength,
    \begin{equation}
        H_S = - J \sum_{\Set{i, j} \in E} \left( P_i P_j^\dagger + P_j P_i^\dagger \right),
    \end{equation}
    where $E$ contains all pairs of distinct sites $i \not = j \in [2q+1]$, except the $q$ single-site pairs $\Set{a, q + a}$.
    We take the chiral clock model~\citep{huse1981,ostlund1981} on an open chain as target,
    \begin{equation}\label{eq:ChiralClockModel}
        \begin{split}
            H_T &= - J \sum_{a = 1}^{q - 1} \left( \e^{\i \varphi} Z_a Z_{a+1}^\dagger + \e^{- \i \varphi} Z_a^\dagger Z_{a+1} \right) \\
            &\quad - g \sum_{a = 1}^{q} \left( \e^{\i \vartheta} X_a + \e^{- \i \vartheta} X_a^\dagger \right),
        \end{split}
    \end{equation}
    with chiral phases $\varphi$ and $\vartheta$ and transverse field $g$.
    It reduces to the quantum clock model for $\varphi = \vartheta = 0$ and to the transverse field Ising model for $d = 2$.
    The instance thus combines the suppression of interactions, as in \cref{fig:edge_surpression}, with complex target phases.

    \begin{figure*}
        \centering
        \includegraphics{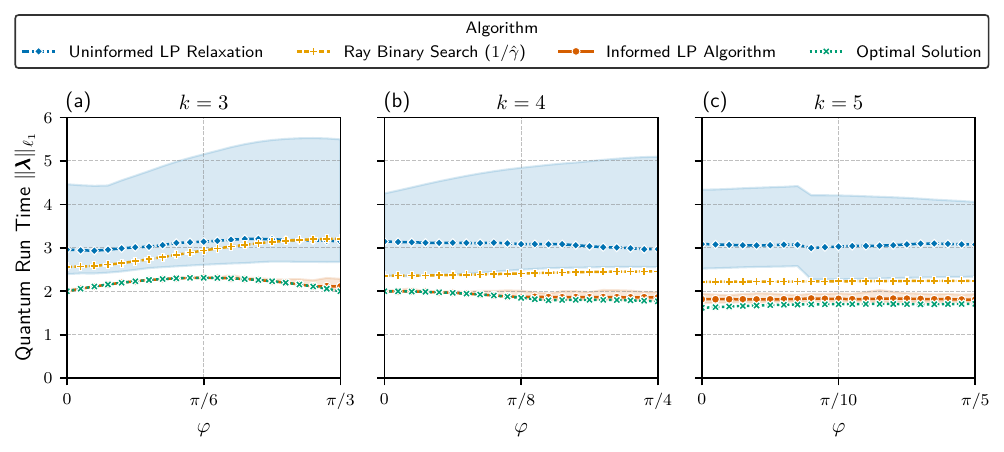}
        \caption[Comparison of algorithm performances on the chiral clock model for varying chiral phase.]{
            Comparison of algorithm performances on the chiral clock model for varying chiral phase.
            The system Hamiltonian contains all admissible two-site terms between different qudits of a register of $q = \num{4}$~qudits, all with strength~$J$.
            The target is the chiral clock model \eqref{eq:ChiralClockModel} on an open chain at vanishing transverse field $g = 0$, so that only the clock couplings $Z_a Z_{a+1}^\dagger$ with chiral phase~$\varphi$ are engineered. 
            All other terms of the system Hamiltonian are suppressed.
            The chiral phase is varied over \num{21} equally spaced values in $[0, \pi / k]$.
            The panels correspond to the phase sets $k = 3$, $4$, and $5$.
            For the randomized algorithms, the data is aggregated over \num{50}~independent runs per value of~$\varphi$ at a pulse ratio of $s / D = 3$.
            The markers show the median value, the shaded areas range from the minimal to the maximal attained quantum run time.
        }
        \label{fig:qudit_chiral_clock_phi}
    \end{figure*}

    \begin{figure*}
        \centering
        \includegraphics{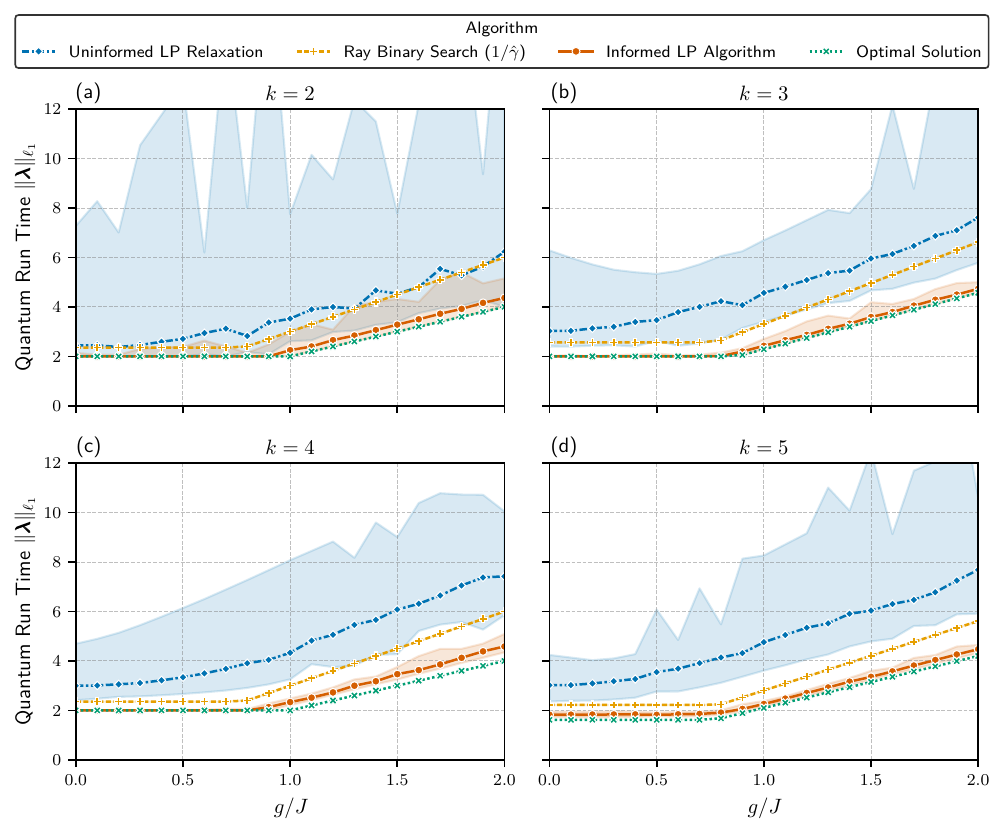}
        \caption[Comparison of algorithm performances on the quantum clock model for varying transverse field.]{
            Comparison of algorithm performances on the quantum clock model for varying transverse field.
            The system Hamiltonian contains all admissible two-site terms between different qudits of a register of $q = \num{4}$~qudits, all with strength~$J$.
            The target is the chiral clock model on an open chain at $\varphi = \vartheta = 0$, that is, the quantum clock model, with the transverse field~$g$ varied over \num{21} equally spaced values of $g / J \in [0, 2]$.
            The panels correspond to the phase sets $k = d = 2$, $3$, $4$, and $5$, where $k = 2$ is the transverse field Ising model.
            For the randomized algorithms, the data is aggregated over \num{50}~independent runs per value of~$g$ at a pulse ratio of $s / D = 3$.
            Runs in which the restricted linear program was infeasible are excluded.
            This occurred only for $k = 2$, in \num{21} of the \num{1050} runs of the informed \ac{LP} algorithm and in \num{13} of the \num{1050} runs of the uninformed \ac{LP} relaxation.
            The markers show the median value, the shaded areas range from the minimal to the maximal attained quantum run time.
        }
        \label{fig:qudit_chiral_clock_g}
    \end{figure*}

    Shifting the pulse phases by suitable elements of $\Theta_k$, or reflecting them, shows that the optimal quantum run time and the distributions of the quantum run times attained by both \ac{LP} methods are periodic in $\varphi$ and $\vartheta$ with period $\frac{2 \pi}{k}$ and invariant under $(\varphi, \vartheta) \to (- \varphi, - \vartheta)$, so that we fix $\vartheta = 0$ and consider $\varphi \in [0, \frac{\pi}{k}]$.
    To our knowledge, \cref{th:opt_bound,th:qtime_bound} provide the first bounds on the quantum run time for qudits.

    We benchmark the methods on $q = \num{4}$ qudits, so that $m = \num{32}$, with $D = \num{64}$ for $k \geq 3$ and $D = m = \num{32}$ for $k = 2$, at the pulse ratio $s / D = 3$.
    At this size, the exact optimum of the \ac{LP}~\eqref{eq:general_lp} follows from enumerating all $k^{2q} \leq 5^8 \approx \num{3.9e5}$ pulses for $k \leq 5$.
    With all system couplings of equal strength, $\lpnorm[\infty]{M} = \max \Set{1, g / J}$ and $\fnorm{M} = \sqrt{2 (q - 1) + 2 q g^2 / J^2}$.
    \Cref{th:opt_bound} therefore places a floor of $\max \Set{1, g / J}$ on the optimal quantum run time.
    At $g = 0$, the guarantee of \cref{th:qtime_bound} evaluates to $\approx \num{3.6}$ for $k = 2$ and $k = 4$, to $\approx \num{3.4}$ for $k = 5$, and to $\approx \num{20.0}$ for $k = 3$, as the constant $L_3 \approx \num{0.12}$ is considerably smaller than $L_4 = 2 / \pi$ and $L_5 \approx \num{0.69}$.

    We first vary the chiral phase at $g = 0$, where only the chiral clock couplings are to be engineered and all other terms of $H_S$ are to be suppressed.
    We then vary the transverse field at $\varphi = 0$, that is, for the quantum clock model with $k = d$, which for $d = 2$ is the transverse field Ising model.
    Our results are shown in \cref{fig:qudit_chiral_clock_phi,fig:qudit_chiral_clock_g}.

    \begin{table*}
        \centering
        \sisetup{range-phrase = --, range-units = single}
        \caption[Summary of the algorithm performances on the qudit benchmarks.]{
            Summary of the algorithm performances on the qudit benchmarks of \cref{fig:qudit_chiral_clock_phi,fig:qudit_chiral_clock_g}.
            Each row aggregates the \num{21}~values of the swept parameter for one phase set~$k$.
            The third column gives the range of the optimal quantum run time $\lpnorm[1]{\vec{\lambda^*}}$ over the sweep.
            All further entries are ratios to the optimal quantum run time at the same parameter value.
            The ``Median'' columns give the range of the ratio of the median over the \num{50}~runs, and the ``Worst'' columns the largest ratio attained by any run, both over the entire sweep.
            The column ``Opt.'' counts the parameter values at which the median of the informed \ac{LP} algorithm equals the optimum.
            The column $1 / \hat{\gamma}$ gives the range over all runs, and the column ``Guarantee'' the range of the guarantee of \cref{th:qtime_bound}.
        }
        \label{tab:qudit_summary}
        \begin{tabular}{llcccccccc}
            \toprule
            & & & \multicolumn{3}{c}{Informed \ac{LP}} & & & \multicolumn{2}{c}{Uninformed \ac{LP}} \\
            \cmidrule(lr){4-6} \cmidrule(lr){9-10}
            Sweep & $k$ & $\lpnorm[1]{\vec{\lambda^*}}$ & Median & Worst & Opt. & $1 / \hat{\gamma}$ & Guarantee & Median & Worst \\
            \midrule
            $\varphi$ & $3$ & \numrange{2.00}{2.31} & \numrange{1.00}{1.06} & \num{1.14} & \num{18} & \numrange{1.21}{1.60} & \numrange{8.68}{10.02} & \numrange{1.34}{1.57} & \num{2.75} \\
            & $4$ & \numrange{1.77}{2.00} & \numrange{1.00}{1.05} & \num{1.13} & \num{9} & \numrange{1.18}{1.39} & \numrange{1.81}{2.05} & \numrange{1.57}{1.72} & \num{2.88} \\
            & $5$ & \numrange{1.62}{1.71} & \numrange{1.06}{1.12} & \num{1.20} & \num{0} & \numrange{1.31}{1.37} & \numrange{1.97}{2.08} & \numrange{1.76}{1.91} & \num{2.68} \\
            \midrule
            $g$ & $2$ & \numrange{2.00}{4.00} & \numrange{1.00}{1.13} & \num{1.55} & \num{10} & \numrange{1.18}{1.50} & \numrange{1.81}{2.77} & \numrange{1.19}{1.77} & \num{8.99} \\
            & $3$ & \numrange{2.00}{4.57} & \numrange{1.00}{1.07} & \num{1.24} & \num{8} & \numrange{1.28}{1.45} & \numrange{10.02}{14.05} & \numrange{1.51}{2.12} & \num{3.32} \\
            & $4$ & \numrange{2.00}{4.00} & \numrange{1.00}{1.17} & \num{1.32} & \num{9} & \numrange{1.18}{1.50} & \numrange{1.81}{2.77} & \numrange{1.50}{2.19} & \num{4.04} \\
            & $5$ & \numrange{1.62}{4.20} & \numrange{1.06}{1.15} & \num{1.21} & \num{0} & \numrange{1.34}{1.37} & \numrange{2.01}{2.72} & \numrange{1.79}{2.46} & \num{4.31} \\
            \bottomrule
        \end{tabular}
    \end{table*}

    \Cref{tab:qudit_summary} summarizes both sweeps.
    Over the chiral phase sweep of \cref{fig:qudit_chiral_clock_phi}, the optimum stays within a factor of \num{2.3} of the floor of \cref{th:opt_bound} for all three phase sets.
    For $k = 3$, it rises from $\varphi = 0$ to a maximum at $\varphi = \pi / 6$ and returns to its initial value at $\varphi = \pi / 3$, whereas it decreases over the sweep for $k = 4$ and increases slightly for $k = 5$.
    The informed \ac{LP} algorithm attains the optimum in the median for $\varphi \leq 3 \pi / 10$ at $k = 3$ and for $\varphi \lesssim \pi / 9$ at $k = 4$, and even its worst run stays within \qty{21}{\percent} of the optimum across the entire sweep.
    The quantity $1 / \hat{\gamma}$ lies above every run of the informed \ac{LP} algorithm.
    The uninformed \ac{LP} relaxation is nearly insensitive to both $\varphi$ and $k$, with medians between $\approx \num{2.9}$ and $\approx \num{3.2}$.
    Its spread is by far the widest, and for $k = 4$ and $k = 5$ its worst runs exceed the guarantee of \cref{th:qtime_bound} for the informed \ac{LP} algorithm on every instance of the sweep.
    In particular, the uninformed \ac{LP} relaxation does not profit from the finer phase set $k = 5$, which lowers the optimum at $\varphi = 0$ from \num{2} to $\approx \num{1.62}$ and the median of the informed \ac{LP} algorithm from \num{2} to $\approx \num{1.82}$.

    Over the transverse field sweep of \cref{fig:qudit_chiral_clock_g}, the optimum is constant for small fields and grows approximately linearly in $g / J$ beyond a crossover slightly below $g = J$, again staying within a factor of \num{2.3} of the floor of \cref{th:opt_bound}.
    The finer phase set $k = 5$ yields the shortest optimal quantum run time only at small fields, while for $g \gtrsim J$ the phase sets $k = 2$ and $k = 4$ do.
    The informed \ac{LP} algorithm again attains the optimum in the median for $g \lesssim \num{0.8} J$ at $k = 2$, $3$, and $4$, and for $k \geq 3$ it stays below $1 / \hat{\gamma}$ in every run.
    The uninformed \ac{LP} relaxation again has the widest spread, and for $k = 2$ its worst run requires almost nine times the optimal quantum run time.

    The small constant $L_3$ renders the guarantee of \cref{th:qtime_bound} for $k = 3$ considerably looser than for the other phase sets.
    On the qudit benchmark the informed \ac{LP} algorithm attains quantum run times within \qty{17}{\percent} of the optimum in the median, across both sweeps and all phase sets, whereas the uninformed \ac{LP} relaxation requires up to \num{2.5} times the optimum.
    Only the informed \ac{LP} algorithm follows the reduction of the optimal quantum run time that the finer phase set $k = 5$ allows at small fields.

    \subsection{Fermions}
    \label{sec:fermion_application}
    As introduced in \cref{sec:motivating_examples}, our method applies to fermionic systems on $n$ modes with $O_{ij} = c_i^\dagger c_j$, conjugated by the exponentiated number operators $U_{\vec{\theta}} = \prod_{i=1}^{n} \e^{-\i \theta_i c_i^\dagger c_i}$.
    In contrast to the qubit and the qudit cases, the commutation relations \eqref{eq:phase_commutation} are satisfied for every $k \in \NN_{\geq 2} \cup \Set{\infty}$. 
    Finite $k$ correspond to discrete sets of $k$ equally spaced on-site phases, $k = \infty$ to continuously tunable local potentials.
    For $k \geq 3$, complex entries of $M$ can be engineered, so that $D = 2m$ and the ratio $s / D$ that governs the feasibility window of \cref{th:feasibility_window} equals $s / (2m)$.

    Hamiltonian engineering with these conjugations was investigated by \citet{kum2026}, who engineer arbitrary number operator free Hamiltonians and adapt the uninformed \ac{LP} relaxation of \citet{bassler2025} to this setting.
    Within our framework, only tunneling terms of the form $c_i^\dagger c_j$ can be engineered.
    Hamiltonians consisting only of such terms describe free fermions and can be simulated efficiently on classical computers~\citep{terhal2002}.
    The relevance of engineering them stems from the terms they may be combined with.
    Consider the Hubbard model~\citep{hubbard1963} on spinful modes $c_{i\sigma}$ with $\sigma \in \Set{\uparrow, \downarrow}$, where the system Hamiltonian
    \begin{equation}
        H_S = H_{\mathrm{hop}} + U \sum_{i} c_{i\uparrow}^\dagger c_{i\uparrow} c_{i\downarrow}^\dagger c_{i\downarrow}
    \end{equation}
    consists of a tunneling part $H_{\mathrm{hop}}$ with uniform hopping amplitude $J$ and an on-site interaction of strength $U$.
    Every number operator $c_{i\sigma}^\dagger c_{i\sigma}$ commutes with every $U_{\vec{\theta}}$, so that the interaction term is invariant under all conjugations.
    Any feasible $\vec{\lambda}$ for the tunneling part therefore yields
    \begin{equation}
        \begin{split}
            &\sum_{\vec{\theta} \in \nz(\vec{\lambda})} \lambda(\vec{\theta}) U_{\vec{\theta}}^\dagger H_S U_{\vec{\theta}} \\
            &\quad = H_{T} + \lpnorm[1]{\vec{\lambda}} \, U \sum_{i} c_{i\uparrow}^\dagger c_{i\uparrow} c_{i\downarrow}^\dagger c_{i\downarrow},
        \end{split}
    \end{equation}
    where $H_T$ is the engineered tunneling Hamiltonian.
    For a target with uniform hopping amplitude $\lpnorm[\infty]{M} J$, the engineered model hence has the interaction ratio
    \begin{equation}
        \label{eq:hubbard_ratio}
        \frac{U_T}{J_T} = \frac{\lpnorm[1]{\vec{\lambda}}}{\lpnorm[\infty]{M}} \frac{U}{J},
    \end{equation}
    which is invariant under a rescaling of $M$.
    By \cref{th:opt_bound}, the prefactor is at least one, so that at a fixed native interaction strength the ratio can only be raised, and the smallest ratio that is accessible for a given target geometry is set by the attained quantum run time.
    In this application, $\lpnorm[1]{\vec{\lambda}}$ thus additionally determines which regime of the interacting model can be reached.
    This use of Hamiltonian engineering to reach $U/J$ ratios beyond those realized natively was established by \citet{kum2026} and is the primary application of their work.

    The spinful tunneling part consists of two decoupled copies of the lattice, one per spin species.
    Restricting the pulses of a feasible spinful solution to one species yields a feasible spinless solution with the same quantum run time, and applying the phases of a spinless solution to both species yields a spinful one.
    The solutions of both problems therefore coincide, and it suffices to benchmark the spinless problem with $n$ equal to the number of lattice sites.

    The only efficient method available for comparison is the uninformed \ac{LP} relaxation in the form of \citet{kum2026}.
    The explicit construction of \citet{garciadeandoin2026} is formulated for qubits, and the exact optimum of the \ac{LP}~\eqref{eq:general_lp} is out of reach, since it requires $k^n$ variables for finite $k$ and infinitely many for $k = \infty$.
    We therefore report the informed \ac{LP} algorithm, the uninformed \ac{LP} relaxation, and $1 / \hat{\gamma}$.

    To our knowledge, no bounds on the optimal quantum run time were previously available in this setting.
    For targets with unit modulus entries on all system and target interactions, as considered below, $\lpnorm[\infty]{M} = 1$, so that \cref{th:opt_bound} yields $\lpnorm[1]{\vec{\lambda^*}} \geq 1$, and \cref{eq:hubbard_ratio} reduces to $U_T/J_T = \lpnorm[1]{\vec{\lambda}} U/J$.
    In the other direction, $\fnorm{M} = \sqrt{2m}$, and the guarantee of \cref{th:qtime_bound} evaluates to
    \begin{equation}
        \lpnorm[1]{\vec{\lambda}} \leq \frac{\sqrt{2m}}{L_k} \sqrt{\frac{n - 1}{n}}.
    \end{equation}

    We benchmark the methods on the Hofstadter model~\citep{harper1955,hofstadter1976}, which describes charged particles hopping on a square lattice in a perpendicular magnetic field.
    The system is an $L \times L$ square lattice with open boundary conditions, $n = L^2$ modes, and all $m = 2L(L-1)$ nearest-neighbor hopping amplitudes set to one.
    The target describes a uniform magnetic flux $\Phi$ per plaquette, encoded in Peierls phases in the Landau gauge, that is $M_{ij} = \e^{\i \Phi y}$ on the horizontal bond between $i$ and $j$ in row $y$, and $M_{ij} = 1$ on all vertical bonds.
    We choose $\Phi$ to be the inverse golden mean in units of flux quantum $\Phi = \frac{\sqrt{5} - 1}{2} \cdot 2\pi$.
    The number of flux quanta per plaquette is thus irrational, and the target admits no magnetic unit cell, that is, no two rows of horizontal bonds carry the same phase.
    For a rational flux $\Phi = 2\pi p/q$, in contrast, the target is periodic in $y$ with period $q$ and consists of a fixed pattern repeated across the lattice.
    The scaling of the quantum run time with $L$ could then reflect this repetition instead of the behavior on targets without such structure \cite{kum2026}. 
    On a finite lattice, the target is approximately periodic whenever $\Phi$ is close to $2\pi p/q$ with $q \leq L$, so that irrationality alone does not rule out this effect.
    The golden mean and its inverse are the irrational numbers least well approximated by rationals, so that the target stays maximally far from periodic for every considered $L$.
    Moreover, the Peierls phases $\Phi y \bmod 2\pi$ are spread nearly uniformly over the unit circle, and none of them lies in $\Theta_k$ for $y \geq 1$, so that the target favors no particular phase set.
    This is in contrast to the rational flux $\Phi = 2\pi/3$ considered by \citet{kum2026}, whose target repeats every three rows and whose Peierls phases all lie in $\Theta_3$.
    Additionally, the Hofstadter model at this flux has been studied extensively, in particular its fractal spectrum and critical eigenstates~\citep{ostlund1983,tang1986,hiramoto1989}, typically via Fibonacci approximants~\citep{thouless1983}.
    
    Although the free model can be simulated efficiently on classical computers~\citep{terhal2002}, one may still wish to implement it on a quantum simulator, possibly combined with the on-site interaction discussed above.
    Such an implementation requires large lattices, which makes the scaling of the quantum run time with $L$ a relevant metric.
    Since the lattice has open boundary conditions, any two gauges of the same flux are related by a conjugation with some $U_{\vec{\theta}}$.
    Shifting all pulses of a solution by $\vec{\theta}$ preserves its quantum run time, and the sampling distributions of both \ac{LP} methods transform covariantly under this shift.
    For $k = \infty$, the optimal quantum run time and the distributions of the quantum run times attained by all considered methods are therefore gauge invariant.
    For finite $k$, this holds only for gauge transformations with phases in $\Theta_k$, so that the reported quantum run times refer to the Landau gauge with a fixed origin of $y$, and a different gauge, or merely a shifted origin, may yield different values.
    With $m$ system terms, the guarantee of \cref{th:qtime_bound} grows linearly in $L$ and evaluates at $L = \num{20}$ to $\approx \num{338}$, \num{61}, and \num{50} for $k = 3$, $4$, and $\infty$, respectively.
    The pulse ratio is fixed to $s / m = 6$, that is $s / D = 3$, which matches the ratio used in \cref{fig:edge_surpression}.
    Our results are shown in \cref{fig:hofstadter}.

    \begin{figure*}
        \centering
        \includegraphics{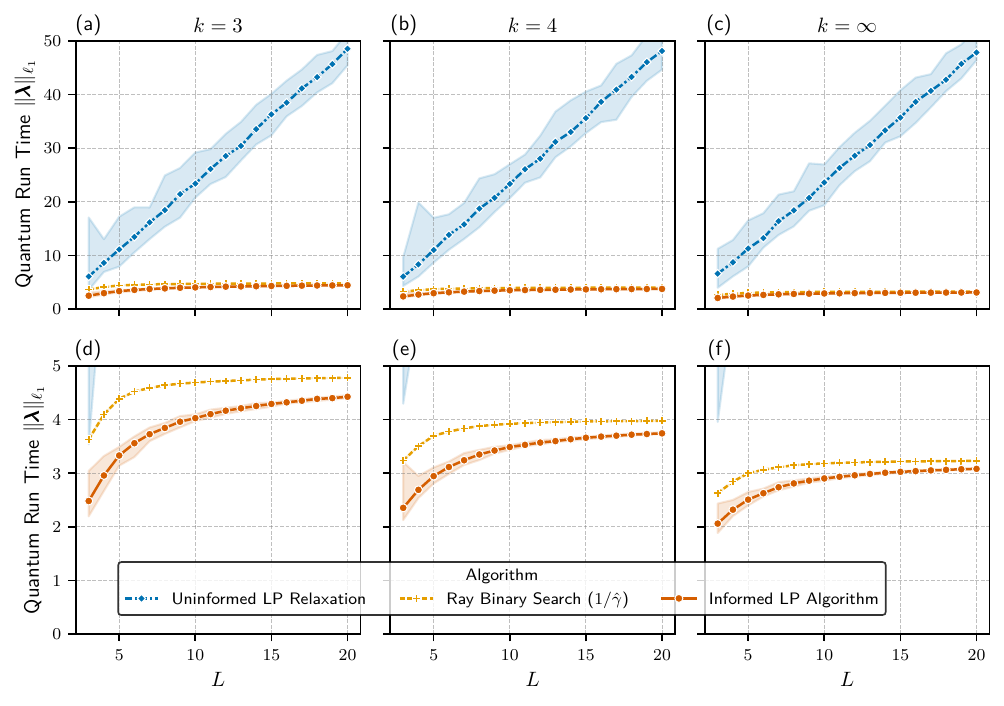}
        \caption[Comparison of algorithm performances on the Hofstadter model on varying sizes.]{
            Comparison of algorithm performances on the Hofstadter model on varying sizes.
            The system is given by an $L \times L$ square lattice with open boundary conditions on $n = L^2$ fermionic modes, with all hopping terms in the lattice set to~\num{1}.
            The target is to engineer a Hofstadter gauge field in the Landau gauge with a magnetic flux of $\Phi = \frac{\sqrt{5} - 1}{2} \cdot 2 \pi$ per plaquette.
            The linear lattice size $L$ is varied from~\num{3} to~\num{20}.
            The columns correspond to the phase sets $k = 3$, $k = 4$, and $k = \infty$, and panels~(d)--(f) show the same data as panels~(a)--(c) on a reduced vertical range.
            For the randomized algorithms, the data is aggregated over \num{50}~independent runs per lattice size at a pulse ratio of $s/m = 6$, that is $s / D = 3$.
            The markers show the median value, the shaded areas range from the minimal to the maximal attained quantum run time.
            No exact optimum is shown, since it is not computable at these system sizes.
        }
        \label{fig:hofstadter}
    \end{figure*}

    The uninformed \ac{LP} relaxation grows approximately linearly in $L$ and reaches $\approx \num{48}$ at $L = \num{20}$, with essentially no dependence on $k$.
    Its spread is widest on the smallest lattices.
    The informed \ac{LP} algorithm, in contrast, converges with increasing $L$ to $\approx \num{4.4}$, \num{3.7}, and \num{3.1} for $k = 3$, $4$, and $\infty$, respectively, with a narrow spread across runs.
    On the largest lattices, it thus requires between roughly one eleventh and one fifteenth of the quantum run time of the uninformed \ac{LP} relaxation, and its run time is effectively independent of the system size, while the guarantee of \cref{th:qtime_bound} grows as $\LandauO(L) = \LandauO(\sqrt{m})$.
    The quantity $1 / \hat{\gamma}$ converges as well, to $\approx \num{4.8}$, \num{4.0}, and \num{3.2}, and stays above the informed \ac{LP} algorithm throughout, with a gap that narrows as $L$ grows.

    Coarser phase sets require longer quantum run times, an ordering that the constants $L_k$ in the guarantee of \cref{th:qtime_bound} share.
    By \cref{eq:hubbard_ratio}, passing from $k = 3$ to continuously tunable local potentials lowers the smallest accessible interaction ratio $U_T / J_T$ by roughly \qty{30}{\percent}.

    On a lattice model with complex targets the informed \ac{LP} algorithm attains quantum run times that do not grow with the system size, using a number of pulses linear in $D$, whereas the quantum run time of the uninformed \ac{LP} relaxation grows linearly in $L$.
    
    \section{Conclusion and outlook}
    \label{sec:conclusion}
    We have introduced a unifying framework for Hamiltonian engineering that captures qubit, qudit, and fermionic systems through a single phase commutation relation between the interaction terms and the control pulses.
    Under this relation, the optimal quantum run time is determined by the point at which a ray originating from~$\1$ in the problem-specific direction~$M$ leaves the complex $k$-cut polytope~$\CUT_k^n$.
    We have proved the associated decision problem \NP-hard for all $k$ and \NP-complete for finite $k$ (\cref{th:ray_hardness}), extending the known case $k = 2$.
    We therefore relaxed the problem to the elliptope~$\elliptope_n$ and applied Krivine rounding, inverting the systematic distortion of the randomized rounding before sampling, which yields pulses that are \emph{informed} of the system and target Hamiltonians.

    Based on these pulses we derived two algorithms.
    The informed \ac{LP} algorithm (\cref{alg:informed_lp}) restricts a linear program to the sampled pulses, which is robustly feasible with nonnegligible probability only for roughly $2 D_{\mathrm{eff}}$ or more pulses (\cref{th:feasibility_window}).
    The sufficient number of pulses depends on the instance through the positivity index~$\beta$ and is in general not linear in $m$.
    On typical instances, however, we observed numerically that feasibility sets in sharply at the necessary number of pulses.
    The mixed \ac{LP} algorithm (\cref{alg:mixed}) additionally samples uniform pulses and, on any instance, returns a feasible solution with quantum run time at most $(1 + \varepsilon) / \hat{\gamma}$ from $\LandauO(m)$ pulses with high probability (\cref{th:mixed_algo}), while allowing the finite pulse time error to be suppressed at an additive cost of first order in the pulse duration.
    For both algorithms, an upper bound on the quantum run time linear in $\fnorm{M}$ (\cref{th:qtime_bound}) and a lower bound on the optimal quantum run time in $\lpnorm[\infty]{M}$ (\cref{th:opt_bound}) yield a worst-case approximation ratio of $\LandauO(\sqrt{m})$.
    The lower bound generalizes a result previously restricted to qubits, the upper bound has the same form as the best known qubit bound~\citep{garciadeandoin2026a} and extends it to every phase set $k$, and to our knowledge both are the first such bounds for qudits and fermions.

    In numerical benchmarks, the informed \ac{LP} algorithm performed far better than the established worst-case guarantees, using a number of pulses linear in the number of interaction terms.
    It attained near-optimal quantum run times wherever the optimum was computable, within \qty{15}{\percent} on random qubit targets and within \qty{17}{\percent} in the median on the qudit benchmark, and outperformed the uninformed \ac{LP} relaxation of \citet{bassler2025} as well as, on qubits, the explicit construction of \citet{garciadeandoin2026}.
    On the fermionic Hofstadter model, its quantum run time did not grow with the lattice size, whereas that of the uninformed \ac{LP} relaxation grew linearly in the linear lattice size. 
    By \cref{eq:hubbard_ratio}, a short quantum run time directly lowers the smallest interaction ratio accessible in an engineered Hubbard model.

    Several questions remain open.
    The gap between the worst-case guarantees and the observed performance, both in the approximation ratio and in the number of pulses required by \cref{th:feasibility_window}~(ii), may be closed via instance-dependent guarantees, for example by bounding the positivity index~$\beta$.
    Ray-tracing techniques that account for a re-entry of the bent ray into~$\elliptope_n$ could tighten the solutions, and a sharper bound on the constant $L_3$, which numerically appears to be close to $1/2$, compared to the proven $\approx 0.12$, would tighten the guarantees for $k = 3$.
    A detailed analysis of the finite pulse time error, including the winding freedom for $k > 2$, and of the Trotter error of noncommuting solutions with equal quantum run time is left to future work.
    Finally, extending the framework to qubit interactions involving $Y$ operators, to the full Weyl--Heisenberg group for qudits, or to higher-order terms in fermionic systems would broaden the class of engineerable Hamiltonians.

    \section*{Acknowledgments}
    We are grateful to Matthias Zipper, Mirko Arienzo, and Finn Müller for fruitful discussions on the project.
    Friese and Kliesch are funded by the Hamburg Quantum Computing project, which is co-financed by the ERDF of the European Union and the Fonds of the Hamburg Ministry of Science, Research, Equalities and Districts (BWFGB);
    Kum is
    funded by the TouQan project within the QuantERA~II Programme that has received funding from the EU's H2020 research and innovation programme under Grant Agreement No.\ 101017733,
    and from the Deutsche Forschungsgemeinschaft (DFG, German Research Foundation) under the grant number 532779266; 
    Harrow is funded by US NSF grant PHY-2325080 and a Fulbright Scholar Grant; and
    Kliesch is also funded by the Fujitsu Germany GmbH and Dataport as part of the endowed professorship ``Quantum Inspired and Quantum Optimization.''

    Google Gemini 3.1 Pro was used to explore the properties of the rounding distortion function, and to assist with the proofs in \cref{sec:expectation}.
    Anthropic Claude Opus 5 was used to explore approaches to the proofs in \cref{sec:hardness,sec:feasibility_proofs,sec:mixed_proofs}.
    Anthropic Claude Opus 5.5 was used to improve the usability of the code repository.
    Anthropic Claude Opus 5.5 and Fable 5.1 were used for general copyediting and improving the clarity of the paper.
    All AI-generated code and all technical and literature claims were thoroughly verified by the authors.
    The authors take full responsibility for the content of this submission.

    \section*{Code availability}
    The source code used to generate all numerical results and figures in this paper is publicly available under the MIT license~\citep{friese2026}.

    \bibliographystyle{quantum-natbib}
    \bibliography{cphe.bib}

    \clearpage
    \onecolumn
    \appendix
    \crefalias{section}{appendix} 
    \part*{Appendices}
    \section{Hardness of the cut polytope membership problem}
    \label{sec:hardness}
    In this section, we prove \cref{th:membership_hardness}.
    That is, we show \NP-hardness of the $\CUT_k^n$ membership problem (\cref{prob:cut_membership}) for every fixed $k \in \NN_{\geq 3} \cup \Set{\infty}$.
    Moreover, we derive \NP-completeness for finite $k$.
    Recall that \cref{prob:cut_membership} is known to be \NP-complete for $k=2$.

    \begin{proof}[Proof of \cref{th:membership_hardness} (finite $k$)]
        Let $X \in \Herm_n^{\vec{1}}(\QQ(\i))$ be an instance of the $\CUT_2^n$ membership problem.
        For fixed $k \in \NN_{\geq 3}$, we construct an instance $Y \in \Herm_n(\QQ(\uroot_k, \i))$ of the $\CUT_k^n$ membership problem, such that $X \in \CUT_2^n$ if and only if $Y \in \CUT_k^n$.
        First, verify that $X$ is real and $\lpnorm[\infty]{X} \leq 1$, as otherwise $X \not \in \CUT_2^n$, and we may construct an arbitrary invalid $Y \not \in \CUT_k^n$.
        Thus, w.l.o.g., assume in the following that $X \in \Sym_n(\QQ)$ and $\lpnorm[\infty]{X} \leq 1$.
        Denote by $J$ the $n \times n$ all-ones matrix and by $\vec{1}$ the $n$-element all-ones vector.
        Moreover, let $\vec{u}$ denote the first row of $X$, i.e.\ $u_{j} = X_{1j}$, and use the short forms $c_k \coloneqq \cos(2 \pi / k)$, and $s_k \coloneqq \sin(2 \pi / k)$.
        Then we construct
        \begin{equation}
            Y \coloneqq \frac{1 + c_k}{2} J + \frac{1 - c_k}{2} X + \i \frac{s_k}{2} (\vec{1} \vec{u}^T - \vec{u} \vec{1}^T).
        \end{equation}
        $Y$ is Hermitian by construction.
        Moreover, it consists of a constant number of algebraic operations on $\QQ(\uroot_k, i)$, so it is polynomial-time computable.

        For completeness, let $X \in \CUT_2^n$.
        Then, there is $\tilde{\vec{\lambda}} : \urootset_2^n \to [0, 1]$, $\lpnorm[1]{\tilde{\vec{\lambda}}} = 1$, s.t.\
        \begin{equation}
            X = \sum_{\vec{x} \in \urootset_2^n} \tilde{\lambda}(\vec{x}) \vec{x} \vec{x}^\dagger.
        \end{equation}
        W.l.o.g.\ $x_1 = 1$ for every $\vec{x} \in \nz(\tilde{\vec{\lambda}})$, since $\vec{x} \vec{x}^\dagger$ is phase invariant.
        Construct $\tilde{\vec{\lambda}}' : \urootset_k^n \to [0, 1]$ as follows:
        Take $\tilde{\lambda}'(\vec{y}) = 0$, if $y_i \not \in \Set{1, \uroot_k}$ for some $i \in [n]$.
        Otherwise, there is a unique $\vec{x}(\vec{y}) \in \urootset_2^n$, s.t.\
        \begin{equation}
            \label{eq:identification_binary_conv_face}
            y_i = \uroot_k^{(1 - x(\vec{y})_i) / 2}
            = \begin{cases}
                1, & \textnormal{if } x(\vec{y})_i = 1, \\
                \uroot_k, & \textnormal{if } x(\vec{y})_i = -1.
            \end{cases}
        \end{equation}
        Take $\tilde{\lambda}'(\vec{y}) = \tilde{\lambda}(\vec{x}(\vec{y}))$.
        Clearly, $\lpnorm[1]{\tilde{\vec{\lambda}}'} = \lpnorm[1]{\tilde{\vec{\lambda}}} = 1$.
        Moreover,
        \begin{equation}
            \left(\sum_{\vec{y} \in \urootset_k^n} \tilde{\lambda}'(\vec{y}) \vec{y} \vec{y}^\dagger \right)_{ij}
            = \sum_{\vec{x} \in \urootset_2^n} \tilde{\lambda}(\vec{x}) \uroot_k^{- (x_i - x_j) / 2},
        \end{equation}
        and
        \begin{equation}
            \label{eq:uroot_diff_expansion}
            \begin{split}
                \uroot_k^{-(x_i - x_j) / 2}
                &= \frac{(1 - x_i) (1 + x_j)}{4} \uroot_k + \frac{(1 + x_i) (1 - x_j)}{4} \uroot_k^{-1} + \frac{(1 + x_i)(1 + x_j) + (1 - x_i)(1 - x_j)}{4} \\
                &= \frac{1}{2} \left( 1 + c_k + (1 - c_k) x_i x_j + \i s_k (x_j - x_i) \right) \\
                &= \frac{1 + c_k}{2} J_{ij} + \frac{1 - c_k}{2} (\vec{x} \vec{x}^\dagger)_{ij} + \i \frac{s_k}{2} ((\vec{x} \vec{x}^\dagger)_{1j} - (\vec{x} \vec{x}^\dagger)_{1i})
            \end{split}
        \end{equation}
        Thus,
        \begin{equation}
            \left(\sum_{\vec{y} \in \urootset_k^n} \tilde{\lambda}'(\vec{y}) \vec{y} \vec{y}^\dagger \right)_{ij} = Y_{ij},
        \end{equation}
        and $Y \in \CUT_k^n$.

        For soundness, we will use the following fact:
        For every $i \in [n]$,
        \begin{equation}
            \begin{split}
                Y_{i1} 
                &= \frac{1 + c_k}{2} + \frac{1 - c_k}{2} X_{i1} + \i \frac{s_k}{2} (X_{11} - X_{1i}) \\
                &= \frac{1 + X_{i1} + \uroot_k (1 - X_{i1})}{2} \\
                &= 1 + t_i (\uroot_k - 1),
            \end{split}
        \end{equation}
        with 
        \begin{equation}
            t_i \coloneqq (1 - X_{i1}) / 2 \in [0, 1],
        \end{equation}
        so $Y_{i1} \in \conv \Set{1, \uroot_k}$, which is a face of $\conv \urootset_k$.
        Thus, if $Y$ is a convex combination over $\urootset_k^n$ with weights $\tilde{\vec{\lambda}}$, each vector with nonzero weight $\vec{y} \in \nz(\tilde{\vec{\lambda}})$ must fulfill $y_{i} \conj{y_{1}} \in \Set{1, \uroot_k}$.
        
        Now let $Y \in \CUT_k^n$, i.e.\ there is a $\tilde{\vec{\lambda}} : \urootset_k^n \to [0, 1]$, $\lpnorm[1]{\tilde{\vec{\lambda}}} = 1$, s.t.\
        \begin{equation}
            Y = \sum_{\vec{y} \in \urootset_k^n} \tilde{\lambda}(\vec{y}) \vec{y} \vec{y}^\dagger.
        \end{equation}
        Then, again w.l.o.g., $y_1 = 1$ for each $\vec{y} \in \nz(\tilde{\vec{\lambda}})$ by the phase invariance of $\vec{y} \vec{y}^\dagger$.
        We construct $\tilde{\vec{\lambda}}' : \urootset_2^n \to [0, 1]$ by taking $\tilde{\lambda}'(\vec{x}(\vec{y})) = \tilde{\lambda}(\vec{y})$, with the unique identification from \cref{eq:identification_binary_conv_face}.
        By the previous claim, $y_i \in \Set{1, \uroot_k}$ for each $i \in [n]$ and each $\vec{y} \in \nz(\tilde{\vec{\lambda}})$, so $\lpnorm[1]{\tilde{\vec{\lambda}}'} = \lpnorm[1]{\tilde{\vec{\lambda}}} = 1$.
        Finally, by the expansion in \cref{eq:uroot_diff_expansion},
        \begin{equation}
            \Re\left((\vec{y} \vec{y}^\dagger)_{ij}\right) = \Re\left(\uroot_k^{- (x_i - x_j) / 2}\right) = \frac{1 + c_k}{2} J_{ij} + \frac{1 - c_k}{2} (\vec{x} \vec{x}^\dagger)_{ij},
        \end{equation}
        so
        \begin{equation}
            \left( \sum_{\vec{x} \in \urootset_2^n} \tilde{\lambda}'(\vec{x}) \vec{x} \vec{x}^\dagger \right)_{ij} = \frac{2}{1 - c_k} \Re \left( \sum_{\vec{y} \in \urootset_k^n} \tilde{\lambda}(\vec{y}) \vec{y} \vec{y}^\dagger  \right)_{ij} - \frac{1 + c_k}{1 - c_k} J_{ij} = X_{ij},
        \end{equation}
        and $X \in \CUT_2^n$.
        This concludes the many-one reduction from the $\CUT_2^n$ to the $\CUT_k^n$ membership problem for finite $k$.

        If $Y \in \CUT_k^n$, then the linear feasibility system $\sum_{\vec{y} \in \urootset_{k}^{n}} \tilde{\lambda}(\vec{y}) \vec{y} \vec{y}^\dagger = Y$, $\tilde{\vec{\lambda}} \geq 0$, has a basic feasible solution with $\abs{\nz(\tilde{\vec{\lambda}})} \leq n (n - 1) + 1$, whose weights solve a nonsingular linear subsystem over $\QQ(\uroot_k, \i)$.
        By Cramer's rule \citep{korte2018}, they have polynomially bounded encoding length for fixed $k$, so this sparse decomposition is a certificate that can be verified in polynomial time.
        Hence, for finite $k$, \cref{prob:cut_membership} is in \NP\ and \NP-complete.
    \end{proof}

    The strong optimization problem over $\CUT_{\infty}^n$ is known to be \NP-hard \citep{zhang2006}.
    We show that this hardness also translates to the weak optimization problem, which we will reduce to the weak membership problem.

    In the following, let distances on $\Herm_n^{\vec{1}}$ always be Frobenius distances.
    For $\varepsilon > 0$, set
    \begin{equation}
        \begin{split}
            S(\CUT_{\infty}^{n}, \varepsilon)
            &\coloneqq \Set*{ X \in \Herm_n^{\vec{1}} \given \dist_F(X, \CUT_{\infty}^{n}) \leq \varepsilon }, \\
            S(\CUT_{\infty}^{n}, - \varepsilon)
            &\coloneqq \Set*{ X \in \Herm_n^{\vec{1}} \given B_{\Herm_n^{\vec{1}}}(X, \varepsilon) \subseteq \CUT_{\infty}^{n} },
        \end{split}
    \end{equation}
    where $B_{\Herm_n^{\vec{1}}}$ is the Frobenius ball, \emph{taken within $\Herm_n^{\vec{1}}$}.
    We state the two weak problems, which we will use as intermediate steps in our reduction.

    \begin{problem}[$\CUT_{\infty}^n$ weak optimization]
        \label{prob:weak_cut_opt}
        \instance A matrix $C \in \Herm_{n}(\QQ(\i))$ and a rational $\varepsilon \in \QQ_{> 0}$.
        \task Find $X \in \Herm_{n}^{\vec{1}}(\QQ(\i))$ with $X \in S(\CUT_{\infty}^{n}, \varepsilon)$ and $\innerp{C}{X} \geq \innerp{C}{Y} - \varepsilon$ for all $Y \in S(\CUT_{\infty}^n, - \varepsilon)$.
    \end{problem}

    \begin{problem}[$\CUT_{\infty}^n$ weak membership]
        \label{prob:weak_cut_membership}
        \instance A matrix $X \in \Herm_{n}^{\vec{1}}(\QQ(\i))$ and a rational $\varepsilon \in \QQ_{> 0}$.
        \task Assert that $X \in S(\CUT_{\infty}^{n}, \varepsilon)$, or assert that $X \not \in S(\CUT_{\infty}^{n}, - \varepsilon)$.
    \end{problem}

    To show hardness of \cref{prob:weak_cut_opt}, we additionally need a bound on the inradius of $\CUT_{\infty}^{n}$.
    Luckily, such a bound naturally follows as corollary from the analysis of our algorithms.
    \begin{corollary}
        \label{th:inradius}
        $\CUT_{\infty}^{n}$ has Frobenius inradius in $\Herm_n^{\vec{1}}$ and around $\1$ of at least $r \coloneqq 1/2$.
    \end{corollary}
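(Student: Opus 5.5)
The plan is to read the inradius bound directly off the analysis of the ray binary search, namely \cref{th:ray_bound} together with \cref{th:opt_upper}, specialized to $k=\infty$. Concretely, I will show that every $X \in \Herm_n^{\vec{1}}$ with $\fnorm{X-\1} \leq 1/2$ lies in $\CUT_\infty^n$. If $X = \1$ there is nothing to show, since $\1 = \vec{1}\vec{1}^\dagger \in \CUT_\infty^n$. Otherwise write $X = \1 + M$ with $M \coloneqq X - \1$, which is a nonzero hollow Hermitian matrix with $\fnorm{M} \leq 1/2$, and assume $n \geq 2$.

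\textbf{Step 1 (encode $M$ as an instance).} Regard $M$ as the effective matrix of an instance. For example, take $A$ to be the all-ones hollow matrix and $B \coloneqq M$, so that $\nz(A)$ contains all off-diagonal entries and the effective matrix \eqref{eq:effective_matrix} is exactly $M$. The analysis of \cref{sec:bounds} never used any property of $M$ beyond being nonzero, hollow and Hermitian, so \cref{th:ray_bound} applies to it. This gives
\begin{equation}
\hat{\gamma} \geq \gamma_{\mathrm{lo}} = \frac{L_\infty}{\fnorm{M}}\sqrt{\frac{n}{n-1}} \geq \frac{\pi/4}{1/2} = \frac{\pi}{2} > 1 .
\end{equation}
By \cref{th:opt_upper}, $\1 + \hat{\gamma} M \in \CUT_\infty^n$. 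The reason is that it is the expectation of rank-one extreme points $\vec{x}\vec{x}^\dagger$ produced by rounding $\1 + \RDF_\infty^{\elementwise -1}(\hat{\gamma} M) \in \elliptope_n$. Strictly, it suffices to use $\gamma_{\mathrm{lo}}$ itself, for which the proof of \cref{th:ray_bound} directly shows positive semidefiniteness of the bent point. It also shows well-definedness of $\RDF_\infty^{\elementwise-1}$ via item 5 of \cref{th:expectation}, using $\gamma_{\mathrm{lo}}\lpnorm[\infty]{M} \leq L_\infty$.

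\textbf{Step 2 (convexity).} Since $1/\gamma_{\mathrm{lo}} \leq 2/\pi < 1$, we can write
\begin{equation}
X = \1 + M = \Bigl(1 - \tfrac{1}{\gamma_{\mathrm{lo}}}\Bigr)\1 + \tfrac{1}{\gamma_{\mathrm{lo}}}\bigl(\1 + \gamma_{\mathrm{lo}} M\bigr).
\end{equation}
This expresses $X$ as a convex combination of two points of the convex set $\CUT_\infty^n$. Hence $X \in \CUT_\infty^n$, i.e.\ the Frobenius ball of radius $1/2$ around $\1$ within $\Herm_n^{\vec{1}}$ is contained in $\CUT_\infty^n$. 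The argument actually yields radius $\pi/4$; the stated $r = 1/2$ is a convenient rational lower bound, which is what the subsequent weak-problem reductions need.

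\textbf{Main obstacle.} The only delicate point is justifying that \cref{th:ray_bound} applies to an arbitrary hollow Hermitian $M$ rather than only to matrices arising from physical $A, B$. The encoding in Step 1 settles this, but I will double-check two things. First, that the Frobenius bound step $\abs{\RDF_\infty^{-1}(z)} \leq \abs{z}/L_\infty$ is valid on the whole disk of radius $L_\infty$ (items 4 and 5 of \cref{th:expectation}). Second, that the rounding-expectation identity $\EE[\vec{x}\vec{x}^\dagger] = \RDF_\infty^{\elementwise}(\cdot)$ holds for every point of $\elliptope_n$, so that the membership $\1+\gamma_{\mathrm{lo}}M \in \CUT_\infty^n$ is rigorous without reference to the binary search.
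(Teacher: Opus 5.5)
Your route is sound except for one false justification: $\1 \neq \vec{1}\vec{1}^\dagger$. In the paper $\1$ is the identity matrix, while $\vec{1}\vec{1}^\dagger$ is the all-ones matrix $J$, so for $n \geq 2$ the claim is wrong. This matters because your trivial case $X = \1$ and the convex combination in Step 2 both need $\1 \in \CUT_\infty^n$. The fact itself is true and easy to supply. Since $\urootset_2 \subseteq \urootset_\infty$, one has $\1 = 2^{-n} \sum_{\vec{x} \in \urootset_2^n} \vec{x}\vec{x}^\dagger$, because each off-diagonal entry $x_i x_j$ averages to zero. Equivalently, $\1 = \RDF_\infty^{\elementwise}(\1)$ is the rounding expectation for covariance $\1$, i.e.\ for independent uniform phases. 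With this replacement your proof is complete. Your other checks are fine: the entries of $\gamma_{\mathrm{lo}} M$ lie in the disk of radius $L_\infty$, items 4 and 5 of \cref{th:expectation} give $\abs{\RDF_\infty^{-1}(w)} \leq \abs{w}/L_\infty$ there, and $\RDF_\infty^{\elementwise}$ maps $\elliptope_n$ into $\CUT_\infty^n$.

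The paper takes a slightly shorter path that never needs the center as a separate input. It observes that the positive-semidefiniteness argument in the proof of \cref{th:ray_bound} only uses $\gamma \fnorm{M} \leq L_\infty \sqrt{n/(n-1)}$, which also implies the entrywise bound needed for invertibility. That argument therefore holds verbatim for every $\gamma \leq \gamma_{\mathrm{lo}}$, in particular for $\gamma = 1$. Hence $\1 + \RDF_\infty^{\elementwise -1}(M) \in \elliptope_n$ directly, and $X = \RDF_\infty^{\elementwise}(\1 + \RDF_\infty^{\elementwise -1}(M))$ is a rounding expectation, so it lies in $\CUT_\infty^n$ as in \cref{th:opt_upper}. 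You instead apply \cref{th:ray_bound} exactly at $\gamma_{\mathrm{lo}}$ and scale back toward $\1$ by convexity. This uses the lemma as stated rather than a remark about its proof, at the cost of needing $\1 \in \CUT_\infty^n$, which is precisely the $M = 0$ instance of the paper's computation. Both routes give the same radius, in fact $\frac{\pi}{4}\sqrt{n/(n-1)} \geq \frac{\pi}{4} > \frac12$.
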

    \begin{proof}
        Let $X = \1 + M \in \Herm_{n}^{\vec{1}}$ with $\fnorm{X - \1} \leq 1/2$.
        Then,
        \begin{equation}
            \fnorm{M} = \fnorm{X - \1} \leq \frac{1}{2} < \frac{\pi}{4} < L_{\infty} \sqrt{\frac{n}{n-1}},
        \end{equation}
        so $1 \leq \gamma_{\mathrm{lo}}$, and the argument of \cref{th:ray_bound}, which applies verbatim to every $\gamma \leq \gamma_{\mathrm{lo}}$, gives $\1 + \RDF_{\infty}^{\elementwise -1}(M) \in \elliptope_n$, and $X = \RDF_{\infty}^{\elementwise}(\1 + \RDF_{\infty}^{\elementwise -1}(M)) \in \CUT_{\infty}^{n}$ by \cref{th:opt_upper}.
    \end{proof}

    \begin{lemma}
        \label{th:weak_opt_hardness}
        \Cref{prob:weak_cut_opt} is \NP-hard.
    \end{lemma}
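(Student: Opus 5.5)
We reduce the strong optimization problem over $\CUT_{\infty}^n$, which is \NP-hard by \citet{zhang2006}, to \cref{prob:weak_cut_opt}. The hard problems in \citep{zhang2006} are complex quadratic programs $\max_{\vec{x} \in \urootset_{\infty}^n} \vec{x}^\dagger C \vec{x} = \max_{X \in \CUT_{\infty}^n} \innerp{C}{X}$. Concretely, we use the family there that encodes a combinatorial problem, for instance \textsc{Partition} via $\max_{\vec x} \abs{\vec a^T \vec x}^2$ type objectives or a MAX-CUT-type Hermitian $C$. In this family the optimal value is rational, or at least decides a yes/no question, and yes- and no-instances are separated by a gap. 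We must show that this gap is at least $1/\mathrm{poly}$ in the encoding length of $C$; integrality of the combinatorial data should give this. We then show that an $\varepsilon$-weak optimizer with $\varepsilon$ a small rational of polynomial encoding length determines the answer.

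\textbf{Step 1.} Relate the weak optimum to the true optimum $\mathrm{OPT} = \max_{X \in \CUT_{\infty}^n}\innerp{C}{X}$. Any $X \in S(\CUT_{\infty}^n,\varepsilon)$ satisfies $\innerp{C}{X} \le \mathrm{OPT} + \fnorm{C}\varepsilon$ by Cauchy--Schwarz. For the reverse direction, use \cref{th:inradius}, which gives a ball of radius $1/2$ around $\1$ in $\Herm_n^{\vec 1}$. For a maximizer $Y^*$, shrink toward the center: $Y_t \coloneqq (1-t)Y^* + t\1$. By convexity, $B(Y_t, t/2) \subseteq \CUT_{\infty}^n$, so choosing $t = 2\varepsilon$ gives $Y_t \in S(\CUT_{\infty}^n,-\varepsilon)$. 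Since the diameter of $\CUT_{\infty}^n$ is at most $2n$, we get $\innerp{C}{Y_t} \ge \mathrm{OPT} - 2\varepsilon \fnorm{C}\, \mathrm{diam} \ge \mathrm{OPT} - 4n\varepsilon\fnorm{C}$. Hence a weak optimizer $X$ satisfies
\begin{equation}
\abs{\innerp{C}{X} - \mathrm{OPT}} \le (4n\fnorm{C} + \fnorm{C} + 1)\varepsilon .
\end{equation}

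\textbf{Step 2.} Choose $\varepsilon$ so that this error is below half the gap. Computing $\innerp{C}{X}$ for the returned rational $X$ then decides the \NP-hard question, which yields a polynomial-time Turing reduction.

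\textbf{Main obstacle.} The delicate part is the gap, that is, extracting from \citep{zhang2006} an \NP-hard instance family whose optimal values are separated by an inverse-polynomial margin. If their reduction only establishes exact hardness, we will substitute a direct encoding. One candidate is a \textsc{Partition}-type objective $\max_{\vec x\in\urootset_\infty^n}\Re$-based quadratic form whose value is $\lpnorm[1]{\vec a}^2$ exactly on balanced partitions and smaller by at least a fixed polynomial amount otherwise. Proving that polynomial separation, including for non-discrete phases, is where most of the work lies. The inradius and shrinking argument in Step 1 are routine given \cref{th:inradius}.
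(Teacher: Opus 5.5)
Your Step 1 is sound and matches how the paper moves between weak and exact values. Points of $S(\CUT_{\infty}^n,\varepsilon)$ are handled by Cauchy--Schwarz, and an exact optimizer is shrunk toward $\1$ by $2\varepsilon$, which puts it in $S(\CUT_{\infty}^n,-\varepsilon)$ because of the inradius $1/2$ from \cref{th:inradius}.

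The genuine gap is the part you defer, and it is the whole substance of the lemma. \citet{zhang2006} only prove that deciding whether $\min_{\vec x \in \urootset_{\infty}^n} \vec x^\dagger Q \vec x = 0$ is hard, with no separation. Exact hardness does not by itself transfer to an $\varepsilon$-approximate oracle. You never exhibit an instance family with a provable gap, and the candidates you name fail as stated:
\begin{itemize}
\item Over $\urootset_{\infty}^n$, $\max_{\vec x}\abs{\vec a^T\vec x}^2 = \lpnorm[1]{\vec a}^2$ for every $\vec a$, attained by aligning the phases of all $a_i x_i$.
\item $\min_{\vec x}\abs{\vec a^T\vec x}^2 = 0$ whenever $\vec a \geq 0$ satisfies $2\max_i a_i \leq \sum_i a_i$, i.e.\ the $a_i$ close a polygon. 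For example, $\vec a = (1,1,1)$ has no partition, yet $1 + \e^{2\pi\i/3} + \e^{4\pi\i/3} = 0$.
\item A MAX-CUT Laplacian maximized over continuous phases does not return the cut value.
\end{itemize}
So integrality of the data gives no gap on its own. It only becomes usable once the continuous phases have been pinned near a discrete set.

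What is missing is a discretization gadget together with a stability estimate for it. The paper's instance is $n = 2N+1$, $Q = A^T A$, built from the matrix partition problem. Normalize $x_1 = 1$ and write $y_i = x_{i+1}$, $y'_i = x_{N+i+1}$.
\begin{itemize}
\item The rows $y_i + y'_i - 1$ of $A$ vanish only if $y_i \in \Set{\e^{\pm \i\pi/3}}$.
\item In that case $\Im \sum_i G_i y_i = \frac{\sqrt3}{2}\sum_i \epsilon_i G_i$. In a no-instance this is $\frac{\sqrt3}{2}$ times a nonzero integer vector.
\end{itemize}
The quantitative version sets $t_i = \Re y_i - \frac12$, $T = \sum_i \abs{t_i}$ and $\Gamma = \max_i \lpnorm[2]{G_i}$. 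It uses two estimates: $\abs{y_i + y'_i - 1}^2 \geq \frac49 t_i^2$, by the reverse triangle inequality, and $\abs{\abs{\Im y_i} - \frac{\sqrt3}{2}} \leq \sqrt3\abs{t_i}$. It then splits into two cases:
\begin{itemize}
\item If $T \leq 1/(4\Gamma)$, the second row block has squared norm at least $3/16$.
\item If $T > 1/(4\Gamma)$, Cauchy--Schwarz makes the first-block contribution at least $\frac{1}{36 N\Gamma^2}$.
\end{itemize}
Together these give $\innerp{Q}{X} \geq \frac{1}{36N\Gamma^2}$ on all of $\CUT_{\infty}^n$ for no-instances. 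Only with such a bound does your Step 2 have something to choose $\varepsilon$ against.

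Finally, you only need the bit length of the reciprocal gap to be polynomial, since $\varepsilon$ is passed in binary. The paper's gap can be exponentially small in the bit length of $G$, so insisting on an inverse-polynomial separation is unnecessary.
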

    \begin{proof}
        We follow the reduction from \citep[Prop.~3.5]{zhang2006} (proof due to Tom Luo), which we restate for completeness, and then establish the gap, which is not asserted there.

        The reduction is from the following \NP-complete matrix partition problem:
        Given a matrix $G = [G_1, \dots, G_N] \in \ZZ^{M \times N}$, decide whether a subset $I \subseteq [N]$ exists, such that $\sum_{i \in I} G_i = \frac{1}{2} \sum_{i=1}^{N} G_i$.
        Its hardness follows from the fact that for $M = 1$, it recovers the famous \NP-complete partition problem, see e.g.\ \citep[p.~223]{garey1979}.
        Take $n = 2N + 1$, 
        \begin{equation}
            A \coloneqq \begin{pmatrix}
                - \vec{1}_{N} & \1_{N} & \1_{N} \\
                - \frac{1}{2} G \vec{1}_{N} & G & 0_{M \times N}
            \end{pmatrix} \in \QQ^{(M + N) \times n},
        \end{equation}
        and $Q = A^T A$.
        Then, $G$ has a matrix partition if and only if there is a $\vec{x} \in \urootset_{\infty}^{n}$ s.t.\ $\vec{x}^\dagger Q \vec{x} = 0$ (claim shown in \citep{zhang2006}).
        Notice that $\vec{x}^\dagger Q \vec{x} = \innerp{Q}{\vec{x} \vec{x} ^\dagger}$, so the claim can be stated equivalently over $\CUT_{\infty}^{n}$.
        
        We now show that if $G$ does not admit a matrix partition, then the optimization objective is lower bounded by
        \begin{equation}
            \label{eq:gap}
            \innerp{Q}{X} \geq \frac{1}{36 N \Gamma^2},
        \end{equation}
        for all $X \in \CUT_{\infty}^n$, where $\Gamma \coloneqq \max_{i} \lpnorm[2]{G_i}$.
        
        Observe first that, setting $\epsilon_i \coloneqq 1$ for $i \in I$ and $\epsilon_i \coloneqq -1$ otherwise, the subsets $I \subseteq [N]$ with $\sum_{i \in I} G_i = \frac{1}{2} \sum_{i=1}^N G_i$ correspond bijectively to the sign vectors $\vec{\epsilon} \in \Set{\pm 1}^N$ with $\sum_{i=1}^N \epsilon_i G_i = \vec{0}$. 
        Hence, $\sum_{i=1}^N \epsilon_i G_i \neq \vec{0}$ for every $\vec{\epsilon} \in \Set{\pm 1}^N$.

        Since $\innerp{Q}{\cdot}$ is linear and $\CUT_\infty^n$ is the convex hull of the matrices $\vec{x} \vec{x}^\dagger$ with $\vec{x} \in \urootset_\infty^n$, its minimum over $\CUT_\infty^n$ is attained at such a matrix, and $\innerp{Q}{\vec{x}\vec{x}^\dagger} = \vec{x}^\dagger A^T A \vec{x} = \lpnorm[2]{A\vec{x}}^2$ as $A$ is real.
        It therefore suffices to show $\lpnorm[2]{A \vec{x}}^2 \geq \frac{1}{36 N \Gamma^2}$ for all $\vec{x} \in \urootset_\infty^n$.

        Fix such a $\vec{x}$. 
        As $\lpnorm[2]{A \vec{x}}$ is invariant under $\vec{x} \mapsto \conj{x_1} \vec{x}$, we may assume $x_1 = 1$. 
        Writing $\vec{g} \coloneqq \sum_{i=1}^N G_i$ and splitting off the two row blocks of $A$,
        \begin{equation}
            \label{eq:split}
            \lpnorm[2]{A\vec{x}}^2
            = \underbrace{\sum_{j=2}^{N+1} \abs{x_j + x_{N+j} - 1}^2}_{\eqqcolon \beta}
            + \underbrace{\lpnorm[2]{\sum_{i=1}^{N} G_i x_{i+1} - \frac{1}{2} \vec{g}}^2}_{\eqqcolon \delta}.
        \end{equation}
        For $i \in [N]$ write $y_i \coloneqq x_{i+1}$ and $y'_i \coloneqq x_{N+i+1}$.
        Let $\phi_i \coloneqq \arg y_i$ and $t_i \coloneqq \cos (\phi_i) - \frac{1}{2} \in [-\frac{3}{2}, \frac{1}{2}]$ for $i \in [N]$, and set $T \coloneqq \sum_{i=1}^{N} \abs{t_i}$.

        First, we claim $\beta \geq \frac{4}{9} \sum_{j=1}^N t_j^2$.
        For each $j$ the reverse triangle inequality gives
        \begin{equation}
            \abs{y_j + y'_j - 1} = \abs{y'_j - (1 - y_j)} \geq \abs{1 - c_j},
        \end{equation}
        where $c_j \coloneqq \abs{1 - y_j}$. 
        Now $c_j^2 = 2 - 2\cos\phi_j = 1 - 2 t_j \in [0,4]$, so $c_j \leq 2$ and
        \begin{equation}
            (1 - c_j)^2 = \frac{(1-c_j^2)^2}{(1+c_j)^2} = \frac{4 t_j^2}{(1+c_j)^2}
            \geq \frac{4}{9} t_j^2.
        \end{equation}

        Now, we claim $\delta \geq \bigl( \frac{\sqrt{3}}{2} - \sqrt{3}\, \Gamma T \bigr)^2$ whenever the right-hand side is nonnegative.
        Bounding $\delta$ below by the squared norm of the imaginary part, $\delta \geq \lpnorm[2]{\sum_{i=1}^{N} G_i \sin (\phi_{i})}^2$.
        Put $h(t) \coloneqq \sqrt{1 - (\frac{1}{2} + t)^2} \geq 0$, so that $\abs{\sin (\phi_i)} = h(t_i)$, and choose $\epsilon_i \in \Set{\pm 1}$ with $\sin (\phi_i) = \epsilon_i h(t_i)$ (either sign if $h(t_i) = 0$). 
        From $h(t)^2 - \frac{3}{4} = -t(1+t)$ and $\abs{1+t} \leq \frac{3}{2}$ on $[-\frac{3}{2},\frac{1}{2}]$ we get
        \begin{equation}
            \abs*{ h(t) - \frac{\sqrt{3}}{2} } = \frac{\abs{h(t)^2 - \frac{3}{4}}}{h(t) + \frac{\sqrt{3}}{2}} \leq \frac{2}{\sqrt{3}} \abs{t} \abs{1+t} \leq \sqrt{3} \abs{t}.
        \end{equation}
        Hence, with $\vec{v} \coloneqq \sum_{i=1}^N \epsilon_i G_i \in \ZZ^M$ and
        $\vec{w} \coloneqq \sum_{i=1}^N \epsilon_i \left( h(t_i) - \frac{\sqrt{3}}{2} \right) G_i$,
        \begin{equation}
            \sum_{i=1}^N G_i \sin \phi_i = \frac{\sqrt{3}}{2} \vec{v} + \vec{w},
        \end{equation}
        and
        \begin{equation}
            \lpnorm[2]{\vec{w}} \leq \sqrt{3} \, \Gamma \sum_{i=1}^N \abs{t_i} = \sqrt{3}\, \Gamma T.
        \end{equation}
        By assumption, we have $\vec{v} \neq \vec{0}$, and $\vec{v}$ is integral, so $\lpnorm[2]{\vec{v}} \geq 1$ and the claim follows from the triangle inequality.

        Now, if $T \leq \frac{1}{4\Gamma}$, then the second claim gives 
        \begin{equation}
            \delta \geq \left( \frac{\sqrt{3}}{2} - \frac{\sqrt{3}}{4} \right)^2 = \frac{3}{16}.
        \end{equation}
        If $T > \frac{1}{4\Gamma}$, then Cauchy--Schwarz yields
        \begin{equation}
            \sum_{i=1}^N t_i^2 \geq T^2 / N > \frac{1}{16 N \Gamma^2},
        \end{equation}
        so the first claim gives 
        \begin{equation}
            \beta \geq \frac{4}{9} \cdot \frac{1}{16 N \Gamma^2} = \frac{1}{36 N \Gamma^2}.
        \end{equation}
        As $N \geq 1$ and $\Gamma \geq 1$ we have
        \begin{equation}
            \frac{1}{36 N \Gamma^2} \leq \frac{1}{36} < \frac{3}{16}, 
        \end{equation}
        so in both cases \cref{eq:split} gives $\lpnorm[2]{A \vec{x}}^2 \geq \frac{1}{36 N \Gamma^2}$.

        Now, to decide an instance $G \in \ZZ^{M \times N}$ of the matrix partition problem, we set $Q$ as before.
        If $G = 0$, then $I = \emptyset$ is a matrix partition and we answer yes.
        Otherwise, some $G_i \not = 0$, so $\Gamma \geq 1$ by integrality.
        We define
        \begin{equation}
            \varepsilon \coloneqq \frac{1}{144 N \Gamma^2 (n\kappa_Q + 1)},
        \end{equation}
        with $\kappa_Q \coloneqq \sum_{i, j} \abs{Q_{ij}} \in \QQ_{\geq 0}$, $\kappa_Q \geq \fnorm{Q}$, and note that $\Gamma^2$, $Q$ and $\varepsilon$ are rational and computable from $G$ in polynomial time, and that their encoding lengths are polynomial in that of $G$.
        We call the weak optimization oracle for $\CUT_{\infty}^{n}$ (\cref{prob:weak_cut_opt}) on the objective $C \coloneqq - Q$ with accuracy $\varepsilon$.
        This is admissible, as $\varepsilon \leq 1 / 144 < 1 / 2 = r$, with $r$ the inradius of $\CUT_{\infty}^{n}$ (\cref{th:inradius}), so $S(\CUT_{\infty}^{n}, - \varepsilon) \not = \emptyset$.
        Let $\hat{X}$ be the returned matrix and put $\nu \coloneqq \innerp{Q}{\hat{X}}$.
        We claim $G$ admits a matrix partition if and only if $\nu < \frac{1}{72 N \Gamma^2}$.

        We use that for $X' \in S(\CUT_\infty^n, \varepsilon)$ there is a $Y' \in \CUT_\infty^n$ with $\fnorm{X' - Y'} \leq \varepsilon$, whence by Cauchy--Schwarz
        \begin{equation}
            \label{eq:outer}
            \abs{ \innerp{Q}{X'} - \innerp{Q}{Y'} } \leq \varepsilon \fnorm{Q}.
        \end{equation}

        For necessity, suppose $G$ admits a matrix partition.
        By the construction, there is $\vec{x} \in \urootset_{\infty}^n$ with $\innerp{Q}{\vec{x} \vec{x}^\dagger} = 0$.
        Then, $Z \coloneqq (1 - \lambda) \vec{x} \vec{x}^\dagger + \lambda \1 \in \CUT_{\infty}^{n}$, with $\lambda = \varepsilon / r = 2 \varepsilon$, fulfills for every $H \in \Herm_{n}^{\vec{0}}$, $\fnorm{H} \leq \varepsilon$:
        \begin{equation}
            Z + H = (1 - \lambda) \vec{x} \vec{x}^\dagger + \lambda \left(\1 + \frac{1}{\lambda} H\right) \in \CUT_{\infty}^{n},
        \end{equation}
        since $\fnorm{H} / \lambda \leq r$.
        That is, $Z \in S(\CUT_{\infty}^{n}, - \varepsilon)$ and $\fnorm{Z - \vec{x} \vec{x}^\dagger} \leq 2 n \varepsilon$, so $\innerp{Q}{Z} \leq 2 n \varepsilon \fnorm{Q}$.
        The guarantee of \cref{prob:weak_cut_opt} then gives $\innerp{C}{\hat{X}} \geq \innerp{C}{Z} - \varepsilon$, that is
        \begin{equation}
            \begin{split}
                \nu \leq \innerp{Q}{Z} + \varepsilon
                &\leq (2 n \fnorm{Q} + 1) \varepsilon \\
                &\leq (2 n \kappa_{Q} + 1) \varepsilon \\
                &< 2 (n \kappa_{Q} + 1) \varepsilon = \frac{1}{72 N \Gamma^2}.
            \end{split}
        \end{equation}

        For sufficiency, suppose $G$ admits no matrix partition.
        As $\hat{X} \in S(\CUT_{\infty}^{n}, \varepsilon)$, \cref{eq:outer} and the shown gap~\eqref{eq:gap} yield a $Y \in \CUT_{\infty}^{n}$ with
        \begin{equation}
            \begin{split}
                \nu \geq \innerp{Q}{Y} - \varepsilon \fnorm{Q}
                &\geq \frac{1}{36 N \Gamma^2} - \varepsilon \fnorm{Q} \\
                &\geq \frac{1}{36 N \Gamma^2} - \varepsilon \kappa_Q \\
                &\geq \frac{3}{4} \cdot \frac{1}{36 N \Gamma^2} \\
                &> \frac{1}{72 N \Gamma^2}.
            \end{split}
        \end{equation}

        Hence, a polynomial-time algorithm for \cref{prob:weak_cut_opt} decides matrix partition, so \cref{prob:weak_cut_opt} is \NP-hard.
    \end{proof}

    Now we can finally show the \NP-hardness of the $\CUT_{\infty}^n$ membership problem.
    \begin{proof}[Proof of \cref{th:membership_hardness} ($k = \infty$)]
        By \cref{th:weak_opt_hardness}, the weak optimization problem \cref{prob:weak_cut_opt} is \NP-hard.
        Note that additionally to the inradius $r$ from \cref{th:inradius}, $\CUT_{\infty}^{n}$ has an outer Frobenius radius in $\Herm_n^{\vec{1}}$ and around $\1$ of $R \coloneqq \sqrt{n (n - 1)}$, since each $X \in \CUT_{\infty}^{n}$ fulfills $\abs{X_{ij}} \leq 1$ on all entries.
        Moreover, $\CUT_{\infty}^{n}$ is centered at $\1$.
        It is well known that for centered convex bodies $(K; n, R, r, a_0)$ that have an inradius $r$ and an outer radius $R$, the weak optimization problem reduces under a Turing reduction to the weak membership problem~\citep[Cor.~4.3.12]{grotschel1993}, with run time polynomial in $n$ and in the encoding lengths of $R$, $r$, $a_0 = \1$, and $\varepsilon$, all of which are polynomial here.
        Thus, the weak membership problem over $\CUT_{\infty}^{n}$ (\cref{prob:weak_cut_membership}) is \NP-hard.
        Finally, the weak membership problem trivially reduces to the (strong) membership problem (\cref{prob:cut_membership}).
    \end{proof}

    \section{Properties of the rounding distortion function}
    \label{sec:expectation}
    This section is concerned with proving fundamental properties of the rounding distortion function $\RDF_k$, as outlined in \cref{th:expectation}.
    We first show the weaker properties (5.\ -- 7.) for $k=2$ on real entries, before extending to general $k$ and complex entries.
    
    \begin{proof}[Proof of 5., 6., and 7.]
        Applying standard trig\-o\-no\-met\-ric identities yields a useful representation for the distortion when $k=2$, where the general expression reduces to Grothendieck's identity~\citep{goemans1995,alon2006}:
        \begin{equation}
            \RDF_2(x) = \frac{2}{\pi} \arcsin(x).
        \end{equation}
        From this, 5.\ and 6.\ become apparent.
        Moreover,
        \begin{equation}
            \abs{\arcsin(x)} \geq \abs{x},
        \end{equation}
        concluding 7.
    \end{proof}

    We now show properties 1.\ and 2.\ by straightforward calculation.
    
    \begin{proof}[Proof of 1.\ and 2.]
        We have
        \begin{equation}
            \RDF_k(0) = C_k \sum_{j=0}^{k-1} \uroot_k^{j} \arccos^2(0) = \frac{\pi^2}{4} C_k \sum_{j=0}^{k-1} \uroot_k^j = 0,
        \end{equation}
        for finite $k$, and
        \begin{equation}
            \RDF_{\infty}(0) = \frac{1}{4 \pi} \int_0^{2 \pi} \e^{\i \vartheta} \arccos^2(0) \, \dd \vartheta = \pi \int_0^{2 \pi} e^{\i \vartheta} \, \dd \vartheta = 0.
        \end{equation}

        Moreover, for finite $k$, when $\theta \in \Theta_k$, then $\e^{\i \theta} = \uroot_k^{p}$ for some $p \in \Set{0, \dots, k-1}$, so
        \begin{equation}
            \begin{split}
                \RDF_k(\e^{\i \theta})
                &= C_k \sum_{j=0}^{k-1} \uroot_k^{j} \arccos^2\left(-\Re \left( \uroot_k^{-j} \uroot_k^p \right)\right) \\
                &= C_k \sum_{j=0}^{k-1} \uroot_k^{j} \arccos^2 \left( - \cos\left( 2 \pi \frac{p-j}{k} \right) \right) \\
                &= C_k \sum_{q=0}^{k-1} \uroot_k^{p} \uroot_k^{-q} \arccos^2\left( -\cos \left( 2 \pi \frac{q}{k} \right) \right) \\
                &= \e^{\i \theta} C_k \sum_{q=0}^{k-1} \uroot_k^{-q} \pi^2 \left(1 - \frac{2q}{k}\right)^2,
            \end{split}
        \end{equation}
        with the substitution $q = p - j$.
        We can further simplify to derive
        \begin{equation}
            \begin{split}
                \sum_{q=0}^{k-1} \uroot_k^{-q} \pi^2 \left(1 - \frac{2q}{k}\right)^2
                &= \pi^2 \sum_{q=0}^{k-1} \uroot_k^{-q} - \frac{4 \pi^2}{k} \sum_{q=0}^{k-1} q \uroot_k^{-q}  + \frac{4 \pi^2}{k^2} \sum_{q=0}^{k-1} q^2 \uroot_k^{-q} \\
                &= - \frac{4 \pi^2}{\uroot_k - 1} + \frac{4 \pi^2}{\uroot_k - 1} - \frac{8 \pi^2 \uroot_k}{k^2 (\uroot_k - 1)^2} \\
                &= \frac{8 \pi^2}{k^2 \left( 2 - \uroot_k - \uroot_k^{-1} \right)} \\
                &= \frac{1}{C_k},
            \end{split}
        \end{equation}
        hence concluding
        \begin{equation}
            \RDF_k(\e^{\i \theta}) = \e^{\i \theta}.
        \end{equation}
        
        For $k = \infty$, we use the substitution $\alpha \coloneqq \vartheta - \theta$ to calculate
        \begin{equation}
            \begin{split}
                \RDF_{\infty}(\e^{\i \theta})
                &= \frac{1}{4 \pi} \int_0^{2\pi} e^{\i \vartheta} \arccos^2(- \cos(\vartheta - \theta)) \, \dd \vartheta \\
                &= \e^{\i \theta} \frac{1}{4 \pi} \int_{-\pi}^{\pi} \e^{\i \alpha} \arccos^2(- \cos(\alpha)) \, \dd \alpha \\
                &= \e^{\i \theta} \frac{1}{4 \pi} \int_{-\pi}^{\pi} \e^{\i \alpha} (\pi - |\alpha|)^2 \, \dd \alpha \\
                &= \e^{\i \theta}.
            \end{split}
        \end{equation}

        Finally, we can derive
        \begin{equation}
            \begin{split}
                \RDF_k(\conj{z})
                &= C_k \sum_{j=0}^{k-1} \uroot_k^j \arccos^2\left( - \Re \left( \uroot_k^{-j} \conj{z} \right) \right) \\
                &= C_k \sum_{j=0}^{k-1} \uroot_k^j \arccos^2\left( - \Re \left( \uroot_k^{j} z \right) \right) \\
                &= C_k \sum_{j=0}^{k-1} \uroot_k^{-j} \arccos^2\left( - \Re \left( \uroot_k^{-j} z \right) \right) \\
                &= \conj{\RDF_k(z)},
            \end{split}
        \end{equation}
        for finite $k$, and
        \begin{equation}
            \begin{split}
                \RDF_{\infty}(\conj{z})
                &= \frac{1}{4 \pi} \int_{0}^{2 \pi} \e^{\i \vartheta} \arccos^2(- r \cos(\vartheta + \phi)) \, \dd \vartheta \\
                &= \frac{1}{4 \pi} \int_{0}^{2 \pi} \e^{\i \vartheta} \arccos^2(- r \cos(- \vartheta - \phi)) \, \dd \vartheta \\
                &= \frac{1}{4 \pi} \int_{0}^{2 \pi} \e^{-\i \vartheta} \arccos^2(- r \cos(\vartheta - \phi)) \, \dd \vartheta \\
                &= \conj{\RDF_{\infty}(z)},
            \end{split}
        \end{equation}
        with $z = r \e^{\i \phi}$.
    \end{proof}

    Now, only properties 3.\ and 4.\ remain to be proven, though they turn out to be the most technical to show.
    For $k = \infty$, we will rely on a useful representation of the distortion function in terms of elliptic integrals.
    In the following, we denote by $K$ and $E$ the complete elliptic integrals of the first and second kind, given by
    \begin{equation}
        K(r) \coloneqq \int_{0}^{\pi / 2} \frac{1}{\sqrt{1 - r^2 \sin^2(\vartheta)}} \, \dd \vartheta, \qquad
        E(r) \coloneqq \int_{0}^{\pi / 2} \sqrt{1 - r^2 \sin^2(\vartheta)} \, \dd \vartheta.
    \end{equation}

    \begin{lemma}
        \label{th:elliptic_rep}
        On $z = r \e^{\i \phi}$ with $0 < r < 1$, we can express $\RDF_{\infty}$ in terms of $E$ and $K$ via
        \begin{equation}
            \RDF_{\infty}(z) = \frac{\e^{\i \phi}}{r} \left( E(r) - (1 - r^2) K(r) \right).
        \end{equation}
    \end{lemma}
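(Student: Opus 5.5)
The plan is to reduce the defining integral of $\RDF_{\infty}$ to a single real integral by a phase substitution, and then turn it into elliptic form with one integration by parts and a reflection symmetry. First, substituting $\alpha \coloneqq \vartheta - \phi$ and using $2\pi$-periodicity, we get
\begin{equation}
    \RDF_{\infty}(z) = \frac{\e^{\i \phi}}{4\pi} \int_{-\pi}^{\pi} \e^{\i \alpha} \arccos^2(-r\cos\alpha) \, \dd \alpha .
\end{equation}
The factor $\arccos^2(-r\cos\alpha)$ is even in $\alpha$, so the $\i\sin\alpha$ part integrates to zero. Hence it suffices to show
\begin{equation}
    I(r) \coloneqq \frac{1}{2\pi}\int_0^{\pi} \cos\alpha \, \arccos^2(-r\cos\alpha) \, \dd\alpha = \frac{E(r) - (1-r^2)K(r)}{r}.
\end{equation}

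Next, we integrate by parts with $\cos\alpha = \frac{\dd}{\dd\alpha}\sin\alpha$. The boundary terms vanish at $\alpha = 0, \pi$. Since $0<r<1$, the function $u = -r\cos\alpha$ stays in $(-1,1)$, so $\arccos$ is smooth along the path. Differentiating gives $\frac{\dd}{\dd\alpha}\arccos^2(u) = 2\arccos(u)\cdot\frac{-r\sin\alpha}{\sqrt{1-r^2\cos^2\alpha}}$. This yields
\begin{equation}
    I(r) = \frac{r}{\pi}\int_0^{\pi} \frac{\sin^2\alpha}{\sqrt{1-r^2\cos^2\alpha}}\, \arccos(-r\cos\alpha)\, \dd\alpha .
\end{equation}
The weight $f(\alpha) = \sin^2\alpha/\sqrt{1-r^2\cos^2\alpha}$ is invariant under $\alpha \mapsto \pi - \alpha$. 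That reflection sends $\arccos(-r\cos\alpha)$ to $\arccos(r\cos\alpha) = \pi - \arccos(-r\cos\alpha)$. Averaging the integral with its reflected copy therefore replaces the $\arccos$ factor by the constant $\pi/2$, so $I(r) = \frac{r}{2}\int_0^\pi f(\alpha)\,\dd\alpha$.

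Finally, we use the symmetry of $f$ about $\pi/2$ and substitute $\alpha = \pi/2 - \theta$ (so $\sin\alpha = \cos\theta$ and $\cos\alpha = \sin\theta$). This gives
\begin{equation}
    I(r) = r\int_0^{\pi/2}\frac{\cos^2\theta}{\sqrt{1-r^2\sin^2\theta}}\,\dd\theta .
\end{equation}
Write $\Delta \coloneqq \sqrt{1-r^2\sin^2\theta}$. Then $\cos^2\theta = 1 - \sin^2\theta = \bigl(r^2 - 1 + \Delta^2\bigr)/r^2$. Dividing by $\Delta$ and integrating term by term gives $\bigl((r^2-1)K(r) + E(r)\bigr)/r^2$, and multiplying by $r$ yields the claimed identity.

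I expect the main obstacle to be the integration-by-parts step. One has to justify smoothness and vanishing boundary terms, which is where $r<1$ is used, and keep the signs of the chain rule through $\arccos$ correct. Spotting the reflection trick that eliminates the remaining $\arccos$ factor is the only non-mechanical idea. The rest is routine elliptic-integral bookkeeping.
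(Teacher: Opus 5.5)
Your proof is correct and follows essentially the same route as the paper: the phase substitution, one integration by parts, the reflection $\alpha \mapsto \pi - \alpha$ together with $\arccos(x) + \arccos(-x) = \pi$ to eliminate the $\arccos$ factor, and the substitution $\alpha = \pi/2 - \theta$ to reach $E$ and $K$. The only differences are cosmetic. You fold onto $[0,\pi]$ before integrating by parts and average over the whole half-period, whereas the paper integrates by parts on $[-\pi,\pi]$ and splits the resulting integral at $\pi/2$.
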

    \begin{proof}
        We again make use of the substitution $\alpha \coloneqq \vartheta - \phi$ to derive
        \begin{equation}
            \RDF_{\infty}(z) = \frac{1}{4 \pi} \int_{0}^{2 \pi} \e^{i \vartheta} \arccos^2 (-r \cos(\vartheta - \phi)) \, \dd \vartheta = e^{i \phi} \frac{1}{4 \pi} \int_{- \pi}^{\pi} \e^{\i \alpha} \arccos^2(-r \cos(\alpha)) \, \dd \alpha.
        \end{equation}
        Because $\sin(\alpha) \arccos^2(-r \cos(\alpha))$ is an odd function, the imaginary part of the integral vanishes.
        We evaluate the real part via integration by parts:
        \begin{equation}
            \int_{-\pi}^{\pi} \cos(\alpha) \arccos^2(-r \cos(\alpha)) \, \dd \alpha
            = 4 r \int_{0}^{\pi} \frac{\sin^2(\alpha) \arccos(-r \cos(\alpha))}{\sqrt{1 - r^2 \cos^2(\alpha)}} \, \dd \alpha.
        \end{equation}
        The resulting integral can be split at $\pi / 2$.
        On the upper half, we apply the substitution $\beta \coloneqq \pi - \alpha$ to find
        \begin{equation}
            \int_{\pi / 2}^{\pi} \frac{\sin^2(\alpha) \arccos(-r \cos(\alpha))}{\sqrt{1 - r^2 \cos^2(\alpha)}} \, \dd \alpha
            = \int_{0}^{\pi / 2} \frac{\sin^2(\beta) \arccos(r \cos(\beta))}{\sqrt{1 - r^2 \cos^2(\beta)}} \, \dd \beta.
        \end{equation}
        Because $\arccos(x) + \arccos(-x) = \pi$, recombining yields
        \begin{equation}
            \int_{0}^{\pi} \frac{\sin^2(\alpha) \arccos(-r \cos(\alpha))}{\sqrt{1 - r^2 \cos^2(\alpha)}} \, \dd \alpha
            = \pi \int_{0}^{\pi / 2} \frac{\sin^2(\alpha)}{\sqrt{1 - r^2 \cos^2(\alpha)}} \, \dd \alpha.
        \end{equation}
        Finally, via substituting $\gamma \coloneqq \frac{\pi}{2} - \alpha$ we derive
        \begin{equation}
            \label{eq:inj_proof_R_over_r}
            \begin{split}
                \int_{0}^{\pi / 2} \frac{\sin^2(\alpha)}{\sqrt{1 - r^2 \cos^2(\alpha)}} \, \dd \alpha
                &= \int_{0}^{\pi / 2} \frac{\cos^2(\gamma)}{\sqrt{1 - r^2 \sin^2(\gamma)}} \, \dd \gamma \\
                &= \frac{1}{r^2} \int_{0}^{\pi / 2} \frac{1 - r^2 \sin^2(\gamma) - (1 - r^2)}{\sqrt{1 - r^2 \sin^2(\gamma)}} \, \dd \gamma \\
                &= \frac{1}{r^2} \left( E(r) - (1 - r^2) K(r) \right).
            \end{split}
        \end{equation}
        Combining all results, we find the desired identity
        \begin{equation}
            \RDF_{\infty}(z) = \frac{\e^{\i \phi}}{r} \left( E(r) - (1 - r^2) K(r) \right).
        \end{equation}
    \end{proof}

    We will now show property 3., that is injectivity on the closed disk $\DD$.

    \begin{proof}[Proof of 3.]
        For finite $k$, we prove the claim by showing that the expression
        \begin{equation}
            \Re\left( (\RDF_k(z_1) - \RDF_k(z_2)) \conj{(z_1 - z_2)} \right)
        \end{equation}
        does not vanish for $z_1 \not = z_2 \in \DD$.
        We define the substitutions
        \begin{equation}
            h(t) \coloneqq \arccos^2(-t),
        \end{equation}
        as well as
        \begin{equation}
            u_j \coloneqq \Re \left( \uroot_k^{-j} z_1 \right),
            \quad
            v_j \coloneqq \Re \left( \uroot_k^{-j} z_2 \right).
        \end{equation}
        Then,
        \begin{equation}
            \Re \left( (\RDF_k(z_1) - \RDF_k(z_2)) \conj{(z_1 - z_2)} \right)
            = C_k \sum_{j=0}^{k-1} (h(u_j) - h(v_j)) (u_j - v_j).
        \end{equation}
        Because $h(t)$ is strictly monotonically increasing on $t \in [-1, 1]$,
        \begin{equation}
            (h(u_j) - h(v_j))(u_j - v_j) \geq 0,
        \end{equation}
        with equality if and only if $u_j = v_j$.
        Because $k \geq 3$, the $k$-th roots of unity $\urootset_k$ span $\CC$, so there must be a $j \in \Set{0, \dots, k-1}$ such that $u_j \not = v_j$.
        Hence,
        \begin{equation}
            \Re\left((\RDF_k(z_1) - \RDF_k(z_2))\conj{(z_1 - z_2)}\right) > 0.
        \end{equation}

        For $k = \infty$, we derive from \cref{th:elliptic_rep} the radial magnitude function
        \begin{equation}
            \abs{\RDF_k(z)} = \frac{E(r) - (1 - r^2) K(r)}{r} =: R(r)
        \end{equation}
        on $0 < r < 1$.
        We show the claim, by proving that $R(r)$ is strictly monotonically increasing.
        We remind the reader of the derivatives of the complete elliptic integrals:
        \begin{equation}
            \frac{\dd K(r)}{\dd r} = \frac{E(r) - (1 - r^2) K(r)}{r (1 - r^2)}, \qquad
            \frac{\dd E(r)}{\dd r} = \frac{E(r) - K(r)}{r}.
        \end{equation}
        The derivative of the radial magnitude function is hence given by
        \begin{equation}
            \frac{\dd R(r)}{\dd r} 
            = \frac{K(r) - E(r)}{r^2}.
        \end{equation}
        Because $K(r) > E(r) > 0$ on $0 < r < 1$, this concludes that $R(r)$ is strictly monotonically increasing.
        Moreover,
        \begin{equation}
            \lim_{r \to 0^+} R(r) = 0,
            \quad
            \lim_{r \to 1^-} R(r) = 1,
        \end{equation}
        so $R(r)$ forms a bijection from $(0, 1)$ to $(0, 1)$.
        
        This allows us finally to conclude the injectivity of $\RDF_{\infty}$.
        Let $\RDF_{\infty}(z_1) = \RDF_{\infty}(z_2)$ with $z_{i} = r_{i} \e^{\i \phi_i}$.
        If 
        \begin{equation}
            \RDF_{\infty}(z_1) = \RDF_{\infty}(z_2) = 0,
        \end{equation}
        then we must have $r_1 = r_2 = 0$, so $z_1 = z_2$.
        If 
        \begin{equation}
            \abs{\RDF_{\infty}(z_1)} = \abs{\RDF_{\infty}(z_2)} = 1,
        \end{equation}
        then $r_1 = r_2 = 1$.
        But then by property 1., $\RDF_{\infty}(z_i) = z_i$, so again $z_1 = z_2$.
        Finally, if
        \begin{equation}
            0 < \abs{\RDF_{\infty}(z_1)} = \abs{\RDF_{\infty}(z_2)} < 1,
        \end{equation}
        the bijectivity of $R(r)$ implies that $r_1 = r_2$.
        Dividing by $R(r_1) = R(r_2)$ then gives, that also $\e^{\i \phi_1} = \e^{\i \phi_2}$.
        Hence, $z_1 = z_2$.
    \end{proof}

    We now show property 4., i.e.\ a linear lower bound on the magnitude $\abs{\RDF_{k}(z)}$.
    \begin{proof}[Proof of 4.]
        We first show the claim for the more technical case of $k$ being finite.
        For this, we lower bound the expression
        \begin{equation}
            \Re(\conj{z} \RDF_{k}(z)) 
            = C_k \sum_{j=0}^{k-1} x_j \arccos^2(- x_j),
        \end{equation}
        with 
        \begin{equation}
            x_j \coloneqq \Re\left(\uroot_k^{-j} z\right).
        \end{equation}
        Using the fact that $\arccos(-x) = \frac{\pi}{2} + \arcsin(x)$, we may consider three individual sums
        \begin{equation}
            \sum_{j=0}^{k-1} x_{j} \arccos^2(- x_j)
            = \frac{\pi^2}{4} \sum_{j=0}^{k-1} x_{j} + \pi \sum_{j=0}^{k-1} x_{j} \arcsin(x_j) + \sum_{j=0}^{k-1} x_j \arcsin^2(x_j).
        \end{equation}
        The first sum vanishes, as $k$-th roots of unity sum to zero.
        For the second sum, we use the fact that $x_j \arcsin(x_j) \geq x_j^2$, so
        \begin{equation}
            \begin{aligned}
                \sum_{j=0}^{k-1} x_j \arcsin(x_j)
                \geq \sum_{j=0}^{k-1} x_j^2
                &= \frac{k}{2} \abs{z}^2 + \frac{1}{4} \left( z^2 \sum_{j=0}^{k-1} \uroot_k^{-2j} + (\conj{z})^2 \sum_{j=0}^{k-1} \uroot_k^{2j} \right) \\
                &= \frac{k}{2} \abs{z}^2.
            \end{aligned}
        \end{equation}
        For even $k$, the third sum vanishes, since $x_j \arcsin^2(x_j)$ is an odd function, and 
        \begin{equation}
            x_j = - x_{j + k/2}.
        \end{equation}

        If $k$ is odd, we must bound the negative contribution of the third sum from above.
        We use the Taylor series
        \begin{equation}
            x_j \arcsin^2(x_j) = \sum_{\ell = 1}^{\infty} c_{\ell} x_j^{2 \ell + 1}
        \end{equation}
        with coefficients
        \begin{equation}
            c_{\ell} = \frac{2^{2 \ell - 1} ((\ell - 1)!)^2}{(2 \ell)!} > 0,
        \end{equation}
        which is convergent for all $\abs{x_j} \leq \abs{z} \leq 1$.
        We may consider the sum over each exponent
        \begin{equation}
            x_j^{2\ell+1}
            = \frac{1}{2^{2\ell+1}} \sum_{p=0}^{2\ell+1} \binom{2\ell+1}{p} z^p (\conj{z})^{2\ell+1 - p} \uroot_k^{j (2\ell+1 - 2p)}
        \end{equation}
        separately, giving
        \begin{equation}
            \label{eq:Lk_proof_sum_exponents}
            \sum_{j=0}^{k-1} x_j^{2\ell+1}
            = \frac{1}{2^{2\ell+1}} \sum_{p=0}^{2\ell+1} \binom{2\ell+1}{p} z^{p} \left( \conj{z} \right)^{2\ell + 1 - p} \cdot \sum_{j=0}^{k-1} \uroot_k^{j (2\ell + 1 - 2p)}.
        \end{equation}
        However,
        \begin{equation}
            \sum_{j=0}^{k-1} \uroot_k^{j(2\ell + 1 - 2p)} = \begin{cases}
                k, & \textnormal{if } 2\ell+1 \equiv 2p \mod k, \\
                0, & \textnormal{else},
            \end{cases}
        \end{equation}
        so in particular \cref{eq:Lk_proof_sum_exponents} vanishes for $2\ell + 1 < k$.
        Hence, we may bound
        \begin{equation}
            \begin{split}
                \abs*{\sum_{j=0}^{k-1} x_j \arcsin^2(x_j)}
                &= \abs*{\sum_{j=0}^{k-1} \sum_{\ell = (k - 1)/2}^{\infty} c_{\ell} x_j^{2 \ell + 1} } \\
                &\leq \sum_{j=0}^{k-1} \abs*{\sum_{\ell=(k-1)/2}^{\infty} c_{\ell} x_j^{2\ell+1}} \\
                &\eqqcolon  \sum_{j=0}^{k-1} \abs{S_k(x_j)}
            \end{split}
        \end{equation}

        For odd $k \geq 5$, we can exploit a surplus in the bound of the second sum.
        That is, we define the remainder term
        \begin{equation}
            R(x_j) \coloneqq x_j (\arcsin(x_j) - x_j) \geq 0,
        \end{equation}
        such that the second sum evaluates to
        \begin{equation}
            \pi \sum_{j=0}^{k-1} x_j \arcsin(x_j) = \pi \frac{k}{2} \abs{z}^2 + \pi \sum_{j=0}^{k-1} R(x_j).
        \end{equation}
        Then, if the surplus dominates the bound of the third sum, we derive an equivalent bound as for even $k$.
        That is, we want to show that
        \begin{equation}
            \pi \sum_{j=0}^{k-1} R(x_j) \geq \sum_{j=0}^{k-1} \abs{S_k(x_j)},
        \end{equation}
        which we will do by showing the summand-wise bound
        \begin{equation}
            \pi R(x_j) \geq \abs{S_k(x_j)}.
        \end{equation}
        As $R(x_j)$ is even and $S_k(x_j)$ is odd, both $\pi R(x_j)$ and $S_k(x_j)$ are even functions, and it is sufficient to show the bound on $x_j \in [0, 1]$, where 
        \begin{equation}
            \abs{S_k(x_j)} = S_k(x_j) \geq 0.
        \end{equation}
        Since $k \geq 5$, we may express
        \begin{equation}
            \label{eq:s_k_bound_greater_three}
            S_k(x_j) = x_j \arcsin^2(x_j) - \sum_{\ell=1}^{(k-3)/2} c_{\ell} x_j^{2 \ell + 1} \leq x_j \arcsin^2(x_j) - x_j^3,
        \end{equation}
        allowing to directly compute
        \begin{equation}
            \begin{split}
                \pi R(x_j) - S_k(x_j)
                &\geq \pi x_j (\arcsin(x_j) - x_j) - x_j \arcsin^2(x_j) + x_j^3 \\
                &= x_j (\pi - \arcsin(x_j) - x_j) (\arcsin(x_j) - x_j) \\
                &\geq 0
            \end{split}
        \end{equation}
        on $x_j \in [0, 1]$.

        Only for $k=3$ does \cref{eq:s_k_bound_greater_three} not hold, so we must directly bound
        \begin{equation}
            \sum_{j=0}^{k-1} \abs{S_3(x_j)}
            \leq \sum_{j=0}^{k-1} \sum_{\ell = 1}^{\infty} c_{\ell} \abs*{x_j^{2 \ell + 1}}
            \leq \sum_{\ell = 1}^{\infty} c_{\ell} \sum_{j=0}^{k-1} x_j^2
            = \pi \frac{k}{2} \abs{z}^2 \cdot \frac{\pi}{4}.
        \end{equation}

        Thus, for all $k \in \NN_{\geq 3}$, we may combine the derived bounds to
        \begin{equation}
            \begin{split}
                \sum_{j=0}^{k-1} x_j \arccos^2(-x_j)
                &\geq \pi \frac{k}{2} \abs{z}^2 \begin{cases}
                    1, & \textnormal{if } k > 3, \\
                    1 - \frac{\pi}{4}, & \textnormal{if } k = 3,
                \end{cases} \\
                &= \frac{L_k}{C_k} \abs{z}^2.
            \end{split}
        \end{equation}
        Thus, we may conclude
        \begin{equation}
            \Re(\conj{z} \RDF_k(z)) \geq L_k \abs{z}^2,
        \end{equation}
        and
        \begin{equation}
            \abs{\RDF_k(z)} \geq \frac{\Re(\conj{z} \RDF_k(z))}{\abs{z}} = L_k \abs{z}.
        \end{equation}

        For the case of $k = \infty$, we consider the function
        \begin{equation}
            \frac{R(r)}{r} = \int_{0}^{\pi / 2} \frac{\cos^2(\vartheta)}{\sqrt{1 - r^2 \sin^2(\vartheta)}} \, \dd \vartheta,
        \end{equation}
        which we derived in \cref{eq:inj_proof_R_over_r}.
        It is apparent, that this function is monotonically increasing on $r \in (0, 1)$, so it must assume its infimum at
        \begin{equation}
            \lim_{r \to 0^+} \frac{R(r)}{r} = \int_0^{\pi / 2} \cos^2(\vartheta) \dd \vartheta = \frac{\pi}{4}.
        \end{equation}
        Hence,
        \begin{equation}
            \abs{\RDF_{\infty}(z)} \geq \frac{\pi}{4} \abs{z}
        \end{equation}
        for all $0 < \abs{z} < 1$.
        Because the bound is also satisfied for $\abs{z} = 0$ and $\abs{z} = 1$, this concludes the proof.
    \end{proof}

    Finally, we show 5., which turns out to be a corollary of the other properties.
    \begin{proof}[Proof of 5.]
        $\RDF_k(\DD)$ is compact, hence closed, so it suffices to cover the open disk $D_k \coloneqq \Set{\abs{w} < L_k}$.
        Since $\RDF_k$ is continuous and injective on a compact set, it is a homeomorphism onto its image.
        By invariance of domain $U \coloneqq \RDF_k(\interior \DD)$ is open with $\partial U \subseteq \RDF_k(\partial \DD)$.
        Since $\abs{\RDF_k(z)} \geq L_k$, for $\abs{z} = 1$, $D_k \cap \RDF_k(\partial \DD) = \emptyset$, so $D_k \cap \partial U = \emptyset$.
        Since $D_k$ is connected and $0 = \RDF_k(0) \in D_k \cap U$, this forces $D_k \subseteq U$.
    \end{proof}

    \section{Bounds to the probability of feasibility for the informed LP algorithm} 
    \label{sec:feasibility_proofs}
    In this section we will derive \cref{th:feasibility_window}.
    We consider the i.i.d.\ centered correlation vectors
    \begin{equation}
        \vec{y}^{(\ell)} \coloneqq \vec{\chi}(\vec{x}^{(\ell)}) - \hat{\gamma} \vec{m} \in V \cong \RR^{D_{\mathrm{eff}}}
    \end{equation}
    with distribution $\nu$ and convex hull
    \begin{equation}
        K_s \coloneqq \conv \Set{\vec{y}^{(1)}, \dots, \vec{y}^{(s)}}.
    \end{equation}
    Then, \cref{eq:conjecture_lp} is robustly feasible if and only if
    \begin{equation}
        \vec{0} \in \interior_V K_s.
    \end{equation}
    We call the corresponding event
    \begin{equation}
        A \coloneqq \Set{\vec{0} \in \interior_V K_s}.
    \end{equation}
    Moreover, we use the support function
    \begin{equation}
        \label{eq:support_function_h}
        h(\vec{v}) \coloneqq \max_{\ell} \innerp{\vec{v}}{\vec{y}^{(\ell)}},
    \end{equation}
    so that for every $\rho > 0$, $\bar{B}(0, \rho) \subseteq K_s$ if and only if $h(\tilde{\vec{v}}) \geq \rho$ for all unit $\tilde{\vec{v}} \in V$.

    Replacing the distribution $\nu$ with one that is centrally symmetric about the origin and almost surely yields samples in general position gives the problem introduced by Wendel in \citep{wendel1962}.
    In that case, it is well known that the failure probability is given by
    \begin{equation}
        \PP[\vec{0} \not \in K_s] = 2^{-s + 1} \sum_{i = 0}^{D_{\mathrm{eff}} - 1} \binom{s - 1}{i} \eqqcolon W(s, D_{\mathrm{eff}}).
    \end{equation}
    In fact, over absolutely continuous distributions, $W(s, D_{\mathrm{eff}})$ serves as a lower bound to the failure probability \citep{wagner2001}.
    
    Treating $W(s, D_{\mathrm{eff}})$ as the cumulative distribution function of a Binomial random variable $B \sim \binomial(s - 1, 1/2)$ yields
    \begin{equation}
        W(s, D_{\mathrm{eff}}) = \PP[B \leq D_{\mathrm{eff}} - 1] = 1 - \PP[B \geq D_{\mathrm{eff}}].
    \end{equation}
    Note that $\EE[B] = (s - 1) / 2$.
    By Hoeffding's inequality, for $s \geq 2$,
    \begin{equation}
        \PP[B \geq D_{\mathrm{eff}}] = \PP\left[B - \EE[B] \geq D_{\mathrm{eff}} - \frac{s-1}{2}\right] \leq \exp \left( - \frac{(2 D_{\mathrm{eff}} - s + 1)^2}{2 (s - 1)} \right).
    \end{equation}
    Hence, if $s \leq (2 - \eta) D_{\mathrm{eff}}$,
    \begin{equation}
        \label{eq:wendel_lower_bound}
        W(s, D_{\mathrm{eff}}) \geq 1 - \exp \left( - \frac{\eta^2 D_{\mathrm{eff}}}{4} \right).
    \end{equation}

    \begin{proof}[Proof of \cref{th:feasibility_window} (i).]
        We show the claim by upper bounding $\PP[\vec{0} \in \interior_V K_s].$
        If $\vec{y}^{(\ell)}$ were absolutely continuous, then by \citet{wagner2001}
        \begin{equation}
            \PP[\vec{0} \in K_s] \leq 1 - W(s, D_{\mathrm{eff}}).
        \end{equation}
        Our distribution is not absolutely continuous (discrete for finite $k$).
        However, we will show via smoothing, that the same bound holds in our case, if we replace $K_s$ by $\interior_V K_s$.

        For $\rho > 0$ put
        \begin{equation}
            A_{\rho} \coloneqq \Set*{\bar{B}(0, \rho) \subseteq K_s},
        \end{equation}
        where $\bar{B}$ is the closed ball with respect to $V$.
        $A_\rho$ is measurable, as via the support function $h$ we have
        \begin{equation}
            A_{\rho} = \Set*{\min_{\lpnorm[2]{\tilde{\vec{v}}} = 1} h(\tilde{\vec{v}}) \geq \rho},
        \end{equation}
        and $h$ is continuous in $\vec{y}^{(1)}, \dots, \vec{y}^{(s)}$.
        Moreover, $A = \bigcup_{j \in \NN} A_{1 / j}$ is an increasing union, so $A$ is measurable as well.
        By continuity of measure from below,
        \begin{equation}
            \label{eq:feasibility_neccessity_prob_union}
            \PP[A] = \sup_{\rho > 0} \PP[A_{\rho}].
        \end{equation}

        Fix $\rho > 0$ and $t > 0$, and let $\vec{g}^{(1)}, \dots, \vec{g}^{(s)}$ be i.i.d.\ standard Gaussian vectors on $\RR^{D_{\mathrm{eff}}}$, independent of all $\vec{y}^{(\ell)}$.
        We define
        \begin{equation}
            \vec{y}^{(\ell)}(t) \coloneqq \vec{y}^{(\ell)} + t \vec{g}^{(\ell)}
        \end{equation}
        and $G \coloneqq \max_{\ell} \lpnorm[2]{\vec{g}^{(\ell)}}$.
        Then, $\vec{y}^{(\ell)}(t)$ are i.i.d.\ with distribution ${\nu*\mathcal{N}(0, t^2 \1)}$, which has a density.
        Hence, the bound of \citet{wagner2001} applies and
        \begin{equation}
            \PP[\vec{0} \in \conv \Set{\vec{y}^{(1)}(t), \dots, \vec{y}^{(s)}(t)} ]
            \leq 1 - W(s, D_{\mathrm{eff}}).
        \end{equation}
        Let $h_{t}(\vec{v}) \coloneqq \max_{\ell} \innerp{\vec{v}}{\vec{y}^{(\ell)}(t)}$.
        Then, for all unit vectors $\tilde{\vec{v}}$,
        \begin{equation}
            h_{t}(\tilde{\vec{v}}) \geq h(\tilde{\vec{v}}) - t G.
        \end{equation}
        If $\bar{B}(0, \rho) \subseteq K_s$, then $h(\tilde{\vec{v}}) \geq \rho$, and thus either $t G \geq \rho$ or $h_{t}(\tilde{\vec{v}}) > 0$ for all unit~$\tilde{\vec{v}}$.
        Consequently,
        \begin{equation}
            A_{\rho} \subseteq \Set{ \vec{0} \in \conv \Set{\vec{y}^{(1)}(t), \dots, \vec{y}^{(s)}(t)} } \cup \Set{t G \geq \rho},
        \end{equation}
        and
        \begin{equation}
            \PP[A_{\rho}] \leq 1 - W(s, D_{\mathrm{eff}}) + \PP[G \geq \rho / t].
        \end{equation}
        Since the distribution of $G$ does not depend on $t$, and $G < \infty$ almost surely,
        \begin{equation}
            \PP[G \geq \rho / t] \to 0
        \end{equation}
        as $t \to 0$ at fixed $\rho$.
        Hence,
        \begin{equation}
            \PP[A_{\rho}] \leq 1 - W(s, D_{\mathrm{eff}})
        \end{equation}
        for every $\rho > 0$, and \cref{eq:feasibility_neccessity_prob_union} together with \cref{eq:wendel_lower_bound} gives the claim.
    \end{proof}

    To show part (ii) of \cref{th:feasibility_window}, we first prove that $\beta > 0$ on all instances.

    \begin{lemma}
        \label{th:positive_beta}
        Unconditionally, the positivity index fulfills $\beta > 0$.
        Moreover, for $D_{\mathrm{eff}} \geq 1$, $\beta \leq 1 / 2$.
    \end{lemma}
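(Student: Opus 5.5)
We work with the centered vectors $\vec{y} = \vec{\chi}(\vec{x}) - \hat{\gamma}\vec{m}$, which satisfy $\EE[\vec{y}] = \vec{0}$ and $\Cov(\vec{y}) = \Sigma$. By \cref{eq:constant_overlap}, $\vec{y} \in V$ almost surely. For a unit vector $\tilde{\vec{v}} \in V$, set $p(\tilde{\vec{v}}) \coloneqq \PP[\innerp{\tilde{\vec{v}}}{\vec{y}} > 0]$, so that $\beta = \inf p$. If $D_{\mathrm{eff}} = 0$, then $\beta = 1$ by definition and nothing needs to be shown. Otherwise the unit sphere of $V$ is nonempty and compact, and we prove two things: $p$ is strictly positive at every point, and $p$ is lower semicontinuous. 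Together these give $\beta = \min p > 0$.

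\textbf{Pointwise positivity.} Fix a unit $\tilde{\vec{v}} \in V$. The scalar $Y \coloneqq \innerp{\tilde{\vec{v}}}{\vec{y}}$ has mean zero and variance $\tilde{\vec{v}}^T \Sigma \tilde{\vec{v}} > 0$, since $\tilde{\vec{v}} \in \ran\Sigma$ and $\Sigma$ is positive definite on its range. A mean-zero random variable with positive variance cannot satisfy $Y \le 0$ almost surely, because that would force $Y = 0$ almost surely. Hence $p(\tilde{\vec{v}}) > 0$.

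\textbf{Lower semicontinuity.} The set $\{\tilde{\vec{v}} : \innerp{\tilde{\vec{v}}}{\vec{y}} > 0\}$ is open, so $\mathds{1}[\innerp{\tilde{\vec{v}}}{\vec{y}} > 0]$ is lower semicontinuous in $\tilde{\vec{v}}$ for each realization of $\vec{y}$. Fatou's lemma then gives $\liminf_{\tilde{\vec{v}}_j \to \tilde{\vec{v}}} p(\tilde{\vec{v}}_j) \ge p(\tilde{\vec{v}})$. A lower semicontinuous function on a compact set attains its infimum, and that minimum is positive by the previous step. For finite $k$ there is a simpler route: the distribution is finitely supported, so $p$ takes finitely many values, each positive. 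The continuity argument covers $k = \infty$ as well, however.

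\textbf{Upper bound $\beta \le 1/2$.} Consider the pair $\pm\tilde{\vec{v}}$. Then
\begin{equation*}
p(\tilde{\vec{v}}) + p(-\tilde{\vec{v}}) = 1 - \PP[\innerp{\tilde{\vec{v}}}{\vec{y}} = 0] \le 1,
\end{equation*}
so $\min\{p(\tilde{\vec{v}}), p(-\tilde{\vec{v}})\} \le 1/2$, which yields $\beta \le 1/2$.

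The main care point is the semicontinuity step, and specifically the direction of Fatou's inequality for the strict inequality $> 0$. Because the event is defined by a strict inequality, its indicator is lower semicontinuous, which is exactly the direction needed to pass to the infimum.
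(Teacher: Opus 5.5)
Your proof is correct and follows essentially the same route as the paper. Both arguments get pointwise positivity from the mean-zero projection with positive variance. Both use lower semicontinuity plus compactness of the unit sphere of $V$ to attain the infimum, and both pair $\pm\tilde{\vec{v}}$ to get $\beta \leq 1/2$.

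The only difference is how lower semicontinuity is obtained. You apply Fatou's lemma to the pointwise lower semicontinuous indicator. The paper instead uses the almost-sure bound $\lpnorm[2]{\vec{y}} \leq 2\sqrt{m}$ together with continuity of measure from below. Your version is slightly cleaner and does not need $\vec{y}$ to be bounded.
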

    \begin{proof}
        W.l.o.g.\ assume $D_{\mathrm{eff}} \geq 1$ and set
        \begin{equation}
            g(\vec{v}) \coloneqq \PP[\innerp{\vec{v}}{\vec{y}} > 0]
        \end{equation}
        such that 
        \begin{equation}
            \beta = \inf \Set{ g(\tilde{\vec{v}}) \given \tilde{\vec{v}} \in V, \ \lpnorm[2]{\tilde{\vec{v}}} = 1 }.
        \end{equation}
        Let $(\vec{v}_{j})_{j \in \NN}$ be any sequence in $\RR^{D}$ with $\vec{v}_{j} \to \vec{v}$, and let $\varepsilon > 0$.
        $\lpnorm[2]{\vec{\chi}(\vec{x})} = \sqrt{m}$ deterministically, and $\hat{\gamma}\lpnorm[2]{\vec{m}}=\lpnorm[2]{\EE[\vec{\chi}(\vec{x})]} \leq \sqrt{m}$, so $\lpnorm[2]{\vec{y}} \leq 2\sqrt{m}$.
        Thus,
        \begin{equation}
            \abs{\innerp{\vec{v}_j - \vec{v}}{\vec{y}}} \leq 2 \sqrt{m} \lpnorm[2]{\vec{v}_j - \vec{v}}
        \end{equation}
        almost surely.
        Hence, for every $j$ with $\lpnorm[2]{\vec{v}_j - \vec{v}} \leq \varepsilon / (2 \sqrt{m})$, we have the almost sure inclusion of events
        \begin{equation}
            \Set{\innerp{\vec{v}}{\vec{y}} > 2 \varepsilon} \subseteq \Set{\innerp{\vec{v}_j}{\vec{y}} > \varepsilon}.
        \end{equation}
        Since $\varepsilon > 0$, this gives
        \begin{equation}
            g(\vec{v}_j) \geq \PP[\innerp{\vec{v}_j}{\vec{y}} > \varepsilon] \geq \PP[\innerp{\vec{v}}{\vec{y}} > 2 \varepsilon]
        \end{equation}
        for all large $j$, so
        \begin{equation}
            \liminf_{j \to \infty} g(\vec{v}_j) \geq \PP[\innerp{\vec{v}}{\vec{y}} > 2 \varepsilon]
        \end{equation}
        for every $\varepsilon > 0$.
        The events $\Set{\innerp{\vec{v}}{\vec{y}} > 2 \varepsilon}$ increase to $\Set{\innerp{\vec{v}}{\vec{y}} > 0}$ as $\varepsilon \to 0$, so continuity of measure from below yields
        \begin{equation}
            \liminf_{j \to \infty} g(\vec{v}_j) \geq g(\vec{v}).
        \end{equation}
        Hence, $g$ is lower semicontinuous.

        Let $\tilde{\vec{v}} \in V$ with $\lpnorm[2]{\tilde{\vec{v}}} = 1$.
        Then $\EE[\innerp{\tilde{\vec{v}}}{\vec{y}}] = 0$ and $\EE[\innerp{\tilde{\vec{v}}}{\vec{y}}^2] = \tilde{\vec{v}}^T \Sigma \tilde{\vec{v}} > 0$, because $\tilde{\vec{v}} \in (\ker \Sigma)^{\bot}$.
        If $\PP[\innerp{\tilde{\vec{v}}}{\vec{y}} > 0] = 0$, then $\innerp{\tilde{\vec{v}}}{\vec{y}} \leq 0$ almost surely, which, together with the vanishing expectation, forces $\innerp{\tilde{\vec{v}}}{\vec{y}} = 0$ almost surely, a contradiction.

        A lower semicontinuous function on the compact unit sphere of $V$ attains its infimum, so $\beta = g(\tilde{\vec{v}}^*) > 0$ for some unit vector $\tilde{\vec{v}}^* \in V$, yielding the lower bound.
        For the upper bound, notice that $g(\tilde{\vec{v}}) + g(- \tilde{\vec{v}}) \leq 1$, hence $\beta \leq 1/2$.
    \end{proof}

    The sufficiency bound on $s$ is derived via the well-known epsilon net theorem, which we briefly recall.
    A \emph{range space} $(X, R)$ consists of a set $X$ and a family $R$ of subsets of $X$, called \emph{ranges}.
    A finite set $Y \subseteq X$ is \emph{shattered} by $R$ if every subset of $Y$ is of the form $Y \cap r$ for some $r \in R$.
    The \emph{VC dimension} of $(X, R)$ is the largest cardinality of a finite set shattered by $R$, or infinite if no such maximum exists.
    Given a probability measure $\nu$ on $X$ and $\varepsilon > 0$, a set $N \subseteq X$ is an \emph{$\varepsilon$-net} of $(X, R)$ for $\nu$ if $N \cap r \neq \emptyset$ for every $r \in R$ with $\nu(r) \geq \varepsilon$.
    See \citep{matousek2002} for details on $\varepsilon$-nets and VC dimensions.

    \begin{theorem}[Epsilon net theorem]
        \label{th:eps_net}
        Let $(X, R)$ be a range space of finite VC dimension $D \geq 1$, let $\nu$ be a probability measure on $X$, let $\varepsilon > 0$, and let $s \geq 8 / \varepsilon$ with $2 s \geq D$.
        Then the set $N$ of $s$ random independent draws from $\nu$ fails to be an $\varepsilon$-net of $(X, R)$ for $\nu$ with probability at most
        \begin{equation}
            2 \left( \frac{2 \e s}{D} \right)^{D} 2^{- \varepsilon s / 2}.
        \end{equation}
    \end{theorem}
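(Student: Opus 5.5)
The statement is the classical $\varepsilon$-net theorem of Haussler and Welzl, and I would prove it by the standard double-sampling (symmetrization) argument combined with the Sauer--Shelah lemma; see \citep{matousek2002}. Write $N = (x_1, \dots, x_s)$ for the $s$ i.i.d.\ draws from $\nu$, and introduce an independent \emph{ghost sample} $T = (x_{s+1}, \dots, x_{2s})$ of the same law. I will compare two events:
\begin{itemize}
    \item $E_1$: some range $r \in R$ has $\nu(r) \geq \varepsilon$ and $N \cap r = \emptyset$, i.e.\ $N$ fails to be an $\varepsilon$-net;
    \item $E_2$: some range $r \in R$ has $\nu(r) \geq \varepsilon$, $N \cap r = \emptyset$, and $\abs{T \cap r} \geq \varepsilon s / 2$.
\end{itemize}
In $E_2$, and below, $\abs{T \cap r}$ counts the indices $\ell$ with $x_\ell \in r$, with multiplicity.

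The first step is to show $\PP[E_2] \geq \frac{1}{2} \PP[E_1]$. On $E_1$, choose a witnessing range $r$ measurably from $N$; since $T$ is independent of $N$, this choice does not affect $T$. Conditionally on $N$, the count $\abs{T \cap r}$ is $\binomial(s, p)$ with $p = \nu(r) \geq \varepsilon$. Chebyshev's inequality gives
\begin{equation}
    \PP\left[\abs{T \cap r} < \frac{sp}{2}\right] \leq \frac{sp(1-p)}{(sp/2)^2} \leq \frac{4}{sp} \leq \frac{4}{s \varepsilon} \leq \frac{1}{2},
\end{equation}
using $s \geq 8 / \varepsilon$. Since $sp/2 \geq \varepsilon s/2$, this yields $\abs{T \cap r} \geq \varepsilon s/2$ with conditional probability at least $1/2$. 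Hence $\PP[E_1] \leq 2 \PP[E_2]$.

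The second step bounds $E_2$ by exchangeability. Condition on the multiset $Z = \Set{x_1, \dots, x_{2s}}$. Given $Z$, the split into $N$ and $T$ is a uniformly random partition of the $2s$ indices into two halves. The event $E_2$ only depends on the traces $r \cap Z$, and we may drop the condition $\nu(r) \geq \varepsilon$ to obtain an upper bound. Because $2s \geq D$, the Sauer--Shelah lemma bounds the number of distinct traces on $2s$ points by
\begin{equation}
    \sum_{i=0}^{D} \binom{2s}{i} \leq \left( \frac{2 \e s}{D} \right)^{D}.
\end{equation}
Now fix one trace containing $k \geq \varepsilon s/2$ of the indices. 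The probability that all $k$ of them fall into $T$ is
\begin{equation}
    \frac{\binom{2s-k}{s}}{\binom{2s}{s}} = \prod_{i=0}^{k-1} \frac{s - i}{2s - i} \leq 2^{-k} \leq 2^{-\varepsilon s/2}.
\end{equation}
A union bound over traces gives $\PP[E_2 \mid Z] \leq (2\e s/D)^D \, 2^{-\varepsilon s/2}$. Taking expectations and combining with the first step yields the claimed bound $2 (2\e s/D)^D \, 2^{-\varepsilon s/2}$.

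The main obstacle is measure-theoretic rather than combinatorial. Both the measurable selection of a witness range in the first step and the measurability of $E_1$ and $E_2$ can fail for pathological range spaces. I would handle this the way the paper handles $A_\rho$: impose the standard permissibility assumption, or note that in the application below the ranges are open halfspaces in $V$, and for those the events can be written through countable unions over rational directions.
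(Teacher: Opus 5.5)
Your proof is correct. It is the standard double-sampling argument of Haussler and Welzl, and all the constants check out:
\begin{itemize}
\item Chebyshev with $p=\nu(r)\geq\varepsilon$ and $s\geq 8/\varepsilon$ gives the factor $2$.
\item The ratio $\prod_{i<k}(s-i)/(2s-i)\leq 2^{-k}$ gives $2^{-\varepsilon s/2}$.
\item Sauer--Shelah together with $\Phi_D(2s)\leq(2\e s/D)^D$ for $2s\geq D\geq 1$ gives the prefactor.
\end{itemize}

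Your route differs from the paper's. The paper does not reprove the theorem; it quotes Haussler--Welzl and Blumer et al., whose statements cover ranges with $\nu(r)>\varepsilon$ and have $\Phi_D(2s)$ in the prefactor. It bounds $\Phi_D(2s)$ exactly as you do. To pass from $\nu(r)>\varepsilon$ to $\nu(r)\geq\varepsilon$, it applies the cited bound at $\varepsilon'\in[\varepsilon/4,\varepsilon)$ and lets $\varepsilon'\to\varepsilon$, which relies on the cited bound already holding for $s\geq 2/\varepsilon'$. Your self-contained argument makes this limiting step unnecessary, because your Chebyshev step works directly with $p\geq\varepsilon$ under exactly the stated hypothesis $s\geq 8/\varepsilon$. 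The cost is that the measurability questions become your responsibility. The paper delegates these to the ``well-behaved'' hypothesis of the cited sources and takes them for granted.

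Two small corrections to your last paragraph on measurability.
\begin{itemize}
\item You do not need a measurable selection of the witness. For each fixed $N\in E_1$, pick any witness $r_N$. The $N$-section of $E_2$ contains $\{T : \abs{T\cap r_N}\geq\varepsilon s/2\}$, so Tonelli gives $\PP[E_2]\geq\frac12\PP[E_1]$ once $E_1$ and $E_2$ are measurable, for example under the permissibility assumption you mention.
\item The reduction to countably many rational directions does not work as stated. The directions whose open half-space misses every sample can form a single irrational ray; for instance, the samples $(1,-\sqrt2)$ and $(-1,\sqrt2)$ in $\RR^2$ leave only the direction of $(\sqrt2,1)$.
\end{itemize}

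In both of the paper's applications this is harmless, because every range there has mass at least $\varepsilon$. This holds by the definition of $\beta$ in one case and by \cref{th:prob_over_rho} in the other. The failure event is therefore $\{\min_{\tilde{\vec v}}\max_\ell\innerp{\tilde{\vec v}}{\vec y^{(\ell)}}\leq 0\}$, respectively $\{\min_{\tilde{\vec v}}\max_{\vec x\in S_{\mathrm{u}}}\innerp{\tilde{\vec v}}{\vec\chi(\vec x)}<\rho_0\}$. Each is measurable because the min--max is continuous in the samples, and $E_2$ yields to the same compactness argument. Note also that the second application, in the proof of \cref{th:uniform_ball}, uses closed off-center half-spaces, not open homogeneous ones.
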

    \noindent
    The proof is due to \citet{haussler1987} for $\nu$ uniform on a finite set, and to \citet{blumer1989} for arbitrary $\nu$, subject to mild measurability conditions on $R$, which we take for granted for the half-spaces considered below.
    Both state the bound for the ranges with $\nu(r) > \varepsilon$, and with $\Phi_D(2 s)$ in place of $(2 \e s / D)^D$, where $\Phi_D(n) \coloneqq \sum_{i=0}^{D} \binom{n}{i}$.
    For $n \geq D \geq 1$, $\Phi_D(n) \leq (\e n / D)^D$ \citep[Prop.~A2.1]{blumer1989}.
    Moreover, the bound of \citet{blumer1989} holds for $s \geq 2 / \varepsilon'$, so it applies with every $\varepsilon' \in [\varepsilon / 4, \varepsilon)$, and letting $\varepsilon' \to \varepsilon$ yields the stated version for the ranges with $\nu(r) \geq \varepsilon$.

    \begin{proof}[Proof of \cref{th:feasibility_window} (ii).]
        $\beta > 0$ by \cref{th:positive_beta}.
        If $\vec{0} \not \in \interior_V K_s$, then by the supporting hyperplane theorem applied inside $V$, there is a homogeneous open half-space $\mathcal{H} \subseteq V$ s.t.\
        \begin{equation}
            \mathcal{H} \cap \Set{\vec{y}^{(1)}, \dots, \vec{y}^{(s)}} = \emptyset.
        \end{equation}
        Moreover, the set of homogeneous open half-spaces over $V \cong \RR^{D_{\mathrm{eff}}}$ has VC dimension $D_{\mathrm{eff}}$.
        By definition of $\beta$, every such $\mathcal{H}$ has $\nu(\mathcal{H}) \geq \beta$.
        Thus, applying the epsilon net theorem~(\cref{th:eps_net}) with $\varepsilon = \beta$ gives, whenever $s \geq 8 / \beta$ and $2 s \geq D_{\mathrm{eff}}$,
        \begin{equation}
            \PP[\vec{0} \not \in \interior_V K_s] \leq 2 \left( \frac{2 \e s}{D_{\mathrm{eff}}} \right)^{D_{\mathrm{eff}}} 2^{- \beta s / 2}.
        \end{equation}

        Let $\delta \in (0, 1)$ and let
        \begin{equation}
            \label{eq:sufficient_s}
            s \geq 6 \beta^{-1} (D_{\mathrm{eff}} \ln(12 / \beta) + \ln(2 / \delta)).
        \end{equation}
        Then, since $\ln(12 / \beta) \geq \ln(24) > 3$, $s > 18 D_{\mathrm{eff}} / \beta$, so that $s > 8 / \beta$ and $2 s > D_{\mathrm{eff}}$.
        It remains to show, that $\PP[\vec{0} \not \in \interior_V K_s] \leq \delta$, i.e.\
        \begin{equation}
            \label{eq:feasibility_prob_bound}
            D_{\mathrm{eff}} \ln \left( \frac{2 \e s}{D_{\mathrm{eff}}} \right) + \ln (2 / \delta) \leq \frac{\beta s}{2} \ln(2).
        \end{equation}
        By concavity of the logarithm, $\ln(x) \leq \alpha x - \ln(\alpha) - 1$ for all $x, \alpha > 0$.
        Taking $x = 2 \e s / D_{\mathrm{eff}}$ and $\alpha = \beta \ln(2) / (8 \e)$ yields
        \begin{equation}
            D_{\mathrm{eff}} \ln\left( \frac{2 \e s}{D_{\mathrm{eff}}} \right) \leq \frac{\beta s}{4} \ln(2) + D_{\mathrm{eff}} \ln \left(\frac{8}{\ln(2) \beta}\right).
        \end{equation}
        Moreover, by \cref{eq:sufficient_s},
        \begin{equation}
            \frac{\beta s}{4} \ln(2) \geq \frac{6 \ln(2)}{4} \left( D_{\mathrm{eff}} \ln\left(\frac{12}{\beta}\right) + \ln(2 / \delta) \right) \geq D_{\mathrm{eff}} \ln\left( \frac{8}{\ln(2) \beta} \right) + \ln(2 / \delta),
        \end{equation}
        yielding \cref{eq:feasibility_prob_bound}.
    \end{proof}

    \section{Numerical characterization of the feasibility transition} 
    \label{sec:feasibility_numerics}
    \Cref{th:feasibility_window}~(i) rules out robust feasibility below $s = 2 D_{\mathrm{eff}}$, whereas part~(ii) guarantees it only for a considerably larger number of pulses, and only in terms of the positivity index $\beta$.
    In this appendix we locate the transition numerically and illustrate how the $\beta$ dependency can impact the necessary number of pulses.

    A connectivity graph on $n$ vertices is drawn from one of six families: the complete graph, a complete bipartite graph, an Erdős--Rényi graph, a random $3$-regular graph, a periodic square lattice, or a random spanning tree.
    The entries of $M$ on the edges of the graph are drawn from one of several laws.
    For $k = 2$, where $M$ is real, these are Rademacher signs, standard normal values, uniform values on $[-1, 1]$, Student-$t$ values with three degrees of freedom, and constant values.
    For complex $k$, they are unit moduli with uniform phase, unit moduli with phase drawn from $\Theta_k$ for finite $k$, standard complex normal values, or Student-$t$ moduli with three degrees of freedom and uniform phase. 
    On the lattice, we additionally use the Hofstadter phases of \cref{sec:fermion_application}.
    Constant entries on the complete graph make $\1 + M = \vec{1} \vec{1}^{\dagger}$ a vertex of $\CUT_k^n$, which a single pulse realizes.
    Such instances were excluded.
    Finally, $M$ is completed Hermitian and normalized to $\lpnorm[\infty]{M} = 1$.
    This base ensemble covers $n \in \Set{24, 28, 32, 36}$ and the pulse types $k = 2, 3, 4, \infty$, and is complete for every instance with $D \leq 630$. 

    To examine how the transition depends on the saturation ${\varsigma \coloneqq \hat{\gamma} \lpnorm[\infty]{M}}$, we further draw \num{20} instances on the complete graph with $n = 24$ and $k = 4$.
    Each of them is interpolated toward the extreme point $P_{ij} = z_i \conj{z_j}$ of a fixed $\vec{z} \in \urootset_k^n$, which drives up the saturation.

    For all instances, we run the ray binary search (\cref{alg:ray_binary_search}), sample and round pulses on the resulting $X$, and record the smallest $s$ for which the \ac{LP}~\eqref{eq:restricted_lp} admits a solution with $\lpnorm[1]{\vec{\lambda}} \leq 1 / \hat{\gamma}$.
    \Cref{fig:feasibility} shows the resulting success probability against $s / D$, for the base ensemble at every pulse type and for the interpolated instances at $k = 4$.
    On the base ensemble, the transition sits at $s \approx 2 D$ for all four pulse types (see the inset), so that the necessary condition of \cref{th:feasibility_window}~(i) is essentially also sufficient, and the much larger pulse numbers of part~(ii) are not observed.
    For the interpolated instances, the transition instead moves sharply to the right as the saturation grows, reaching $s \approx 7 D$ at $\varsigma \approx 0.8$ and exceeding $s = 12 D$ for $\varsigma \gtrsim 0.9$.

    This behavior is expected.
    As $\varsigma \to 1$, the elliptope point approaches $X = \vec{z} \vec{z}^{\dagger}$ and the rounding distribution concentrates on the single pulse $\vec{z}$.
    The sampled correlation vectors $\vec{\chi}(\vec{x})$ then cluster tightly around their mean $\hat{\gamma} \vec{m}$, so that along most directions only a small fraction of the pulses reach past the target.
    Hence, the positivity index $\beta$ is small, and by \cref{th:feasibility_window}~(ii) the sufficient number of pulses grows as $\beta$ shrinks.
    Such instances are, however, atypical: 
    Across the base ensemble, the saturation stays well below the regime in which the shift sets in.
    We do not claim, that a high saturation is necessary for a small $\beta$.

    For this ensemble, the ray binary search is run to a bracket width of \num{1e-12}, which places the target further along the ray and makes the linear program harder to satisfy.
    Success is decided by a variant of the linear program with a slack variable on each equality constraint, which is feasible by construction and whose optimal residual vanishes precisely when a solution with $\lpnorm[1]{\vec{\lambda}} \leq 1 / \hat{\gamma}$ exists.
    In practice, this residual stays below \num{e-9} on feasible and above \num{e-4} on infeasible pulse sets.
    The smallest successful $s$ is located by bisection over nested prefixes of a single pulse stream, repeated independently at least \num{25} times per instance.
    Of the \num{140} interpolated instances, one at $\varsigma \approx \num{0.8}$ is missing, because several of its linear programs stalled in the simplex method.

    \begin{figure}
        \centering
        \includegraphics{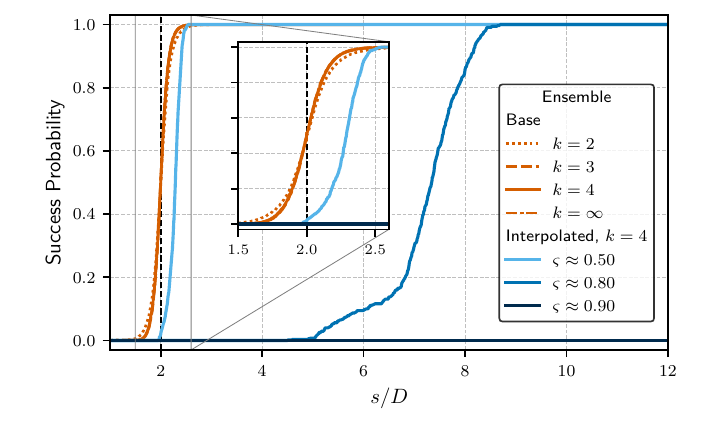}
        \caption[The feasibility transition and its dependence on the saturation.]{
            The feasibility transition and its dependence on the saturation.
            Plotted is the probability that the informed LP algorithm (\cref{alg:informed_lp}) returns a solution with quantum run time $\lpnorm[1]{\vec{\lambda}} \leq 1 / \hat{\gamma}$, against the number of sampled pulses per dimension $s / D$.
            The base ensemble curves collect all instances with $D \leq 630$, one curve per pulse type $k$.
            For these, the saturation $\varsigma = \hat{\gamma} \lpnorm[\infty]{M}$ averages $\varsigma \approx \num{0.3}$.
            The remaining curves are the interpolated instances on the complete graph with $n = 24$ and $k = 4$, at the stated saturation.
            The vertical dashed line marks $s = 2 D$.
            The inset magnifies the range $1.5 \leq s / D \leq 2.6$, marked by the gray box, in which the curves of the base ensemble for the four pulse types lie almost on top of each other.
            For $\varsigma \approx \num{0.9}$, no run succeeded within the sampled range, so the curve is constant at zero.
        }
        \label{fig:feasibility}
    \end{figure}

    \section{Sufficient pulses for the mixed LP algorithm}
    \label{sec:mixed_proofs}
    \Cref{th:mixed_algo} provides guarantees for the feasibility of the mixed LP algorithm (\cref{alg:mixed}) if a sufficient number of pulses $s = s_{\mathrm{i}} + s_{\mathrm{u}} \in \LandauO(m)$ are sampled.
    Its proof relies on \cref{th:residual,th:uniform_ball}.
    \Cref{th:residual} states that if sufficiently many informed pulses are generated, then the scaled sample mean of their action will be close to the effective target.
    That is, $\vec{r} = \vec{m} - \frac{1}{\hat{\gamma}} \mean{\vec{\chi}(\vec{x})}$ will be small with high probability.
    Conversely, \cref{th:uniform_ball} assures that for sufficiently many uniform pulses, a ball with constant radius $\rho_0$ around the origin is likely to be enclosed in the convex hull $\conv \vec{\chi}(S_{\mathrm{u}})$.
    In this section we prove both \cref{th:residual,th:uniform_ball}.

    \begin{proof}[Proof of \cref{th:residual}.]
        All samples are i.i.d, so $\EE\left[\mean{\vec{\chi}(\vec{x})}\right] = \EE[\vec{\chi}(\vec{x})] = \hat{\gamma} \vec{m}$.
        Hence,
        \begin{equation}
            \PP[\lpnorm[2]{\vec{r}} \geq t] = \PP\left[\lpnorma[2]{\mean{\vec{\chi}(\vec{x})} - \EE\left[\mean{\vec{\chi}(\vec{x})}\right]} \geq \hat{\gamma} t\right].
        \end{equation}
        The sample mean is a function of $s_{\mathrm{i}}$ i.i.d.\ random variables and has bounded differences $2 \sqrt{m} / s_{\mathrm{i}}$, since for any two $\vec{x}, \vec{y} \in \urootset_k^n$, $\lpnorm[2]{\vec{\chi}(\vec{x}) - \vec{\chi}(\vec{y})} \leq 2 \sqrt{m}$.
        Moreover, by Jensen's inequality,
        \begin{equation}
            \EE\left[ \lpnorma[2]{\mean{\vec{\chi}(\vec{x})} - \EE\left[\mean{\vec{\chi}(\vec{x})}\right]} \right]
            \leq \sqrt{\EE\left[ \lpnorma[2]{\mean{\vec{\chi}(\vec{x})} - \EE\left[\mean{\vec{\chi}(\vec{x})}\right]}^2 \right]}
            = \sqrt{ \frac{\Tr(\Sigma)}{s_{\mathrm{i}}} }
            \leq \sqrt{\frac{m}{s_{\mathrm{i}}}},
        \end{equation}
        since 
        \begin{equation}
            \Tr(\Sigma) = \EE[\lpnorm[2]{\vec{\chi}(\vec{x})}^2] - \lpnorm[2]{\EE[\vec{\chi}]}^2 = m - \hat{\gamma}^2 \lpnorm[2]{\vec{m}}^2 \leq m.
        \end{equation}
        Thus, for $t \geq \frac{\sqrt{m}}{\hat{\gamma} \sqrt{s_{\mathrm{i}}}}$, McDiarmid's inequality applies and yields
        \begin{equation}
            \PP\left[\lpnorma[2]{\mean{\vec{\chi}(\vec{x})} - \EE\left[\mean{\vec{\chi}(\vec{x})}\right]} \geq \hat{\gamma} t\right]
            \leq \exp \left( - \frac{s_{\mathrm{i}} (\hat{\gamma} t - \sqrt{m / s_{\mathrm{i}}})^2}{2 m} \right).
        \end{equation}
        This shows the claim.
    \end{proof}

    To show \cref{th:uniform_ball}, we use the epsilon net theorem~(\cref{th:eps_net}) together with a VC dimension argument, similar to the proof of \cref{th:feasibility_window} (see \cref{sec:feasibility_proofs}).
    For this purpose, we again study the random projection $\innerp{\vec{v}}{\vec{\chi}(\vec{x})}$, $\vec{v} \in \RR^{D}$, but now for $\vec{x}$ uniform over $\urootset_k^n$.
    Recall from \cref{th:uniform_ball} the definitions $c = m / D$ and $\kappa = 15$.
    We bound the first, second, and fourth moment of $\innerp{\vec{v}}{\vec{\chi}(\vec{x})}$.

    \begin{lemma}
        \label{th:Z_bounded_moments}
        For every unit vector $\tilde{\vec{v}} \in \RR^D$, if $\vec{x}$ is sampled uniformly over $\urootset_k^n$, then $Z \coloneqq \innerp{\tilde{\vec{v}}}{\vec{\chi}(\vec{x})}$ has moments $\EE[Z] = 0$, $\EE[Z^2] = c$, and $\EE[Z^4] \leq \kappa c^2$.
    \end{lemma}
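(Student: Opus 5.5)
The plan is to write $Z$ as a sum over interaction terms and reduce every moment to a count of index coincidences, using the independence of the uniform phases. For $\vec{x}$ uniform on $\urootset_k^n$ the phases $x_i$ are i.i.d.\ uniform on $\urootset_k$. For each edge $e = \Set{i,j}$ with $i<j$ put $w_e \coloneqq x_i \conj{x_j}$. Then
\begin{equation}
    Z = \sum_{e} \Re\left( \conj{a_e} w_e \right),
\end{equation}
with complex coefficients $a_e \coloneqq \tilde{v}^{\Re}_e + \i \tilde{v}^{\Im}_e$ for $k \geq 3$, and $a_e \coloneqq \tilde{v}_e$ real for $k = 2$. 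In both cases $\sum_e \abs{a_e}^2 = 1$. Writing $\Re(\conj{a_e} w_e) = \frac{1}{2}(\conj{a_e} w_e + a_e \conj{w_e})$ turns every moment into a sum of expectations of products of the $w_e^{\pm 1}$. Such a product is a monomial $\prod_i x_i^{p_i}$, and its expectation is $1$ if every $p_i \equiv 0 \pmod k$ and $0$ otherwise.

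\textbf{First moment.} Since $\EE[x_i] = 0$ for $k \geq 2$ and $i \neq j$, we get $\EE[w_e] = 0$ and hence $\EE[Z] = 0$.

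\textbf{Second moment.} I will show that for $e \neq e'$ both $\EE[w_e \conj{w_{e'}}]$ and $\EE[w_e w_{e'}]$ vanish. The two edges differ in some vertex, and that vertex appears with exponent $\pm 1$, which is nonzero mod $k$. For $e = e'$ we have $\EE[\abs{w_e}^2] = 1$. We also have $\EE[w_e^2] = \EE[x_i^2]\,\EE[\conj{x_j}^2]$, which is $0$ for $k \geq 3$ and $1$ for $k = 2$.
\begin{itemize}
    \item For $k \geq 3$ this gives $\EE[Z^2] = \sum_e \frac{1}{2}\abs{a_e}^2 = \frac{1}{2} = c$.
    \item For $k = 2$, where $w_e = \pm 1$, it gives $\EE[Z^2] = \sum_e a_e^2 = 1 = c$.
\end{itemize}

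\textbf{Fourth moment.} This is the main obstacle. Expanding $Z^4$ gives a sum over ordered quadruples of edges, each edge carrying a sign $\pm$ for $w$ or $\conj{w}$. A term survives only if every vertex has total exponent $\equiv 0 \pmod k$. I would classify the surviving configurations:
\begin{enumerate}
    \item \emph{Paired configurations}, where the edges match in two pairs $e, e, f, f$. By the usual $3(\sum\abs{a_e}^2)^2$ pairing count these contribute at most $3c^2$ up to the factor $\frac{1}{16}$ bookkeeping. For $k = 2$ and $k = 4$ there are extra surviving patterns such as $w_e^4$ or $w_e^2 w_f^2$; these give at most a constant multiple as well.
    \item \emph{Genuinely four-distinct-edge configurations}, where the four edges form a cycle of length four ($i$–$j$–$l$–$p$–$i$) with compatible orientations, so that all vertex exponents cancel. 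In a $4$-cycle every vertex has degree $2$, and the orientations force cancellation for any $k$. For $k = 3$ one must also check that no triangle-type configuration can cancel, since a triangle has only three edges.
\end{enumerate}

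The cycle terms cannot simply be dropped, so I would bound them directly. Writing $T_{ij} \coloneqq \abs{a_{\Set{i,j}}}$ as a symmetric matrix, their total contribution is bounded by a constant times
\begin{equation}
    \sum_{i,j,l,p} T_{ij} T_{jl} T_{lp} T_{pi} = \Tr(T^4) \leq \fnorm{T}^4 = \left(2 \sum_e \abs{a_e}^2\right)^2 = 4.
\end{equation}
The step needing the most care is keeping the constants explicit across all cases:
\begin{itemize}
    \item the pairing combinatorics, with $\frac{4!}{2!\,2!\,2} = 3$ pairings of positions;
    \item the orientation choices on cycles, at most $2$;
    \item the factor $2^{-4}$ from the real part;
    \item the normalization by $c^2 = \frac{1}{4}$ for $k \geq 3$ and $c^2 = 1$ for $k = 2$, where for $k = 2$ the cycle bound $\Tr(T^4) \leq 1$ uses $\sum_e a_e^2 = 1$.
\end{itemize}
I expect these to total comfortably below $15\,c^2$, justifying $\kappa = 15$. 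The cases $k = 2$, $k = 4$, $k \geq 5$ and $k = \infty$ differ only in which extra self-pairings survive, which affects only the additive constant.
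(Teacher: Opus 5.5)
Your route is the paper's: expand, use that $\prod_i x_i^{p_i}$ has expectation one iff $k \mid p_i$ for all $i$, classify the surviving four-factor configurations, and bound the $4$-cycle sum by a trace. Your first and second moments are fine. The gap is the case $k = 3$. You argue that no triangle-type configuration survives ``since a triangle has only three edges'', but an edge may occur twice among the four factors. Take the edge $\Set{i,j}$ twice and the path $i - v - j$ once: $x_i \conj{x_j} \cdot x_i \conj{x_j} \cdot x_i \conj{x_v} \cdot x_v \conj{x_j} = x_i^{3} (\conj{x_j})^{3}$, whose expectation is $1$ exactly when $k = 3$. Let $Q$ be the hollow Hermitian matrix with $Q_{ij} = \conj{a_{\Set{i,j}}}$ and $Q_{ji} = a_{\Set{i,j}}$ for $i<j$, so that $Z = \frac{1}{2}\sum_{i \neq j} Q_{ij} x_i \conj{x_j}$ and $\fnorm{Q}^2 = 2$. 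These configurations then add $\frac{3}{4}\sum_{i \neq j} Q_{ij}^2 (Q^2)_{ij}$ to $\EE[Z^4]$. This term matters for the constant. Cauchy--Schwarz bounds it by $\frac{3}{2}\sqrt{2q}$ with $q = \sum_e \abs{a_e}^4$. Combined with your crude estimate $\Tr(T^4) \leq 4$ for the $4$-cycles, this gives $(36 - 6q + 24\sqrt{2q})/16$, which exceeds $15c^2 = 15/4$ for $q$ near $1$. The paper instead uses the sharper bound $S_4 = \Tr(Q^4) - 2\sum_i (Q^2)_{ii}^2 + 2q \leq 4 - 2q$ on the distinct-index $4$-cycle sum, and then maximizes $(36 - 18q + 24\sqrt{2q})/16 \leq 13/4$ over $q$. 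Alternatively, $Q^2 \succeq 0$ and $\Tr(Q^2) = 2$ give $\abs{(Q^2)_{ij}} \leq 1$. That bounds the extra term by $3/2$ and lands exactly on $15c^2$.

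Your expectation that everything totals ``comfortably below $15c^2$'' is also wrong, and so is the bullet $\Tr(T^4) \leq 1$ for $k = 2$. For $k = 2$ one still has $\fnorm{T}^2 = 2\sum_e a_e^2 = 2$, so only $\Tr(T^4) \leq 4$ holds. The correct bound is $\EE[Z^4] \leq q + 3(1-q) + 3 \cdot 4 = 15 - 2q$, and it is asymptotically attained. For the complete graph with equal weights, $Z \approx (g^2 - 1)/\sqrt{2}$ with $g$ standard normal, and $\EE[Z^4] \to 15$ as $n \to \infty$. So $\kappa = 15$ is forced by $k = 2$ with no slack, and the constants must be tracked exactly in every case, including the $k = 3$ term above.
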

    \begin{proof}
        For an edge $(i, j) \in \nz(A)$, $i < j$, write $z_{\Set{i,j}} \coloneqq x_i \conj{x_j}$, so that $\chi^{\Re}_{\Set{i,j}} = \Re (z_{\Set{i,j}})$ and $\chi^{\Im}_{\Set{i,j}} = \Im (z_{\Set{i,j}})$.
        Conditioned on $x_j$, $z_{\Set{i,j}}$ is a rotation of the uniform $x_i$ by a group element, hence uniform on $\urootset_k$ itself.
        Two edge variables on distinct edges are independent:
        On disjoint edges they are functions of disjoint entries, and on edges sharing a vertex $i$ they are, conditionally on $x_i$, functions of two distinct independent entries and each conditionally uniform, so their joint law does not depend on $x_i$.
        In particular $\EE[\vec{\chi}(\vec{x})] = \vec{0}$, and by linearity $\EE[Z] = 0$.

        For uniform $z \in \urootset_k$ we have $\EE[z] = 0$, and $\EE[z^2] = 0$ whenever $k \in \NN_{\geq 3} \cup \Set{\infty}$.
        With $\abs{z}^2 = 1$, $\Re(z^2) = \Re (z)^2 - \Im (z)^2$, and $\Im(z^2) = 2 \Re (z) \Im (z)$ this gives
        $\EE[\Re (z)^2] = \EE[\Im (z)^2] = 1/2$ and $\EE[\Re (z) \Im (z)] = 0$, whereas for $k = 2$ the variable
        $z = \pm 1$ is real with $\EE[z^2] = 1$.
        Pairwise independence makes all entries of the covariance across distinct edges vanish, so
        \begin{equation}
            \Cov \vec{\chi}(\vec{x}) = \begin{cases}
                \1_D, & \textnormal{if } k = 2, \\
                \frac{1}{2} \1_D, & \textnormal{else},
            \end{cases}
            = c \1_D.
        \end{equation}
        Thus, $\EE[Z^2] = \Var Z = \tilde{\vec{v}}^T \left(\Cov \vec{\chi}(\vec{x})\right) \tilde{\vec{v}} = c$.

        Recall that we label the vector entries in $\RR^D$ with the elements of $\nz(A)$, as given by \cref{eq:correlation_var}.
        Hence, we may associate $\tilde{\vec{v}}$ with a hollow Hermitian matrix $Q \in \Herm_n^{\vec{0}}$.
        For $(i, j) \in \nz(A)$, $i < j$, we take
        \begin{equation}
            Q_{ij} \coloneqq \tilde{v}^{\Re}_{\Set{i,j}} - \i \tilde{v}^{\Im}_{\Set{i,j}}, \qquad
            Q_{ji} \coloneqq \tilde{v}^{\Re}_{\Set{i,j}} + \i \tilde{v}^{\Im}_{\Set{i,j}},
        \end{equation}
        and $Q_{ij} = 0$ otherwise, such that $\fnorm{Q}^2 = 2 \lpnorm[2]{\tilde{\vec{v}}}^2 = 2$, and $Z = \frac{1}{2} \sum_{i \not = j} Q_{ij} x_i \conj{x_j}$.

        To compute $\EE[Z^4]$, we abbreviate $q \coloneqq \sum_{i < j} \abs{Q_{ij}}^4 \in (0, 1]$ and
        \begin{equation}
            S_4 \coloneqq \sum_{a,b,c,d \textnormal{ distinct}} Q_{ab} Q_{bc} Q_{cd} Q_{da} = \Tr(Q^4) - 2 \sum_{i=1}^{n} (Q^2)_{ii}^2 + 2q.
        \end{equation}
        Since $(Q^2)_{ii}^2 \geq \sum_{j=1}^{n} \abs{Q_{ij}}^4$, we have $\sum_{i=1}^{n} (Q^2)_{ii}^2 \geq 2q$, and $\Tr(Q^4) = \fnorm{Q^2}^2 \leq \snorm{Q}^2 \fnorm{Q}^2 \leq \fnorm{Q}^4 = 4$.
        Hence
        \begin{equation}
            S_4 \leq 4 - 4q + 2q = 4 - 2q.
        \end{equation}

        For $k \not = 2$, we may associate $x_i \conj{x_j}$ with an arc $i \to j$.
        For a set of such arcs, we define the \emph{degree difference} vector $\vec{d} \in \ZZ^n$ by
        \begin{equation}
            d_i \coloneqq \#\textnormal{arcs leaving } i - \#\textnormal{arcs entering } i.
        \end{equation}
        For the product of a set of $\ell$ arcs $\Set{i_1 \to j_1, \dots, i_{\ell} \to j_{\ell}}$ with degree difference vector $\vec{d}$ we write 
        \begin{equation}
            \prod_{\tau = 1}^{\ell} x_{i_{\tau}} x_{j_{\tau}}^* = \prod_{i = 1}^{n} x_i^{d_i} \eqqcolon \vec{x}^{\vec{d}}.
        \end{equation}
        Then, by independence of the entries of $\vec{x}$, $\EE[\vec{x}^{\vec{d}}] = 1$ if $k \mid d_i$ for every $i \in [n]$ (meaning $\vec{d} = \vec{0}$ if $k = \infty$), and $\EE[\vec{x}^{\vec{d}}] = 0$ otherwise.
        We write $k \mid \vec{d}$ if the condition holds.

        Now, we may express $Z^4$ as a sum of products of four arcs, yielding the fourth moment
        \begin{equation}
            \EE[Z^4] = \frac{1}{16} \sum_{\substack{i_1 \to j_1 \\ \cdots \\ i_4 \to j_4}} \prod_{\tau = 1}^{4} Q_{i_{\tau} j_{\tau}} \EE\left[ \prod_{\tau = 1}^{4} x_{i_{\tau}} \conj{x_{j_{\tau}}} \right]
        \end{equation}
        By the previous observation, only sets of four arcs with degree vector $\vec{d}$ fulfilling $k \mid \vec{d}$ contribute to this sum.
        Since $\abs{d_i} \leq 4$, $\sum_{i=1}^{n} \abs{d_i} \leq 8$, and $\sum_{i=1}^{n} d_i = 0$, either $\vec{d} = \vec{0}$ or $k \in \Set{3, 4}$ and $\vec{d} = \pm k (\vec{e}_i - \vec{e}_j)$ for some $i \not = j$, and $\vec{e}_i, \vec{e}_j$ canonical basis vectors (three or more nonzero entries would need $\sum_{i=1}^{n} \abs{d_i} \geq 4k > 8$).

        If $\vec{d} = 0$, then the arcs decompose either into two cycles of length $2$, or one cycle of length $4$.
        Two $2$-cycles on distinct edges give $4!$ orderings of the arcs, with each pair of edges arising from exactly two ordered pairs $(\Set{i_1, j_1}, \Set{i_2, j_2})$, two $2$-cycles on the same vertex pair gives $4! / (2! 2!) = 6$ orderings, and a $4$-cycle gives $4!$ orderings with each arc set arising from exactly four ordered quadruples $(a, b, c, d)$.
        Therefore, the total contribution of sets of arcs with $\vec{d} = 0$ is
        \begin{equation}
            \begin{split}
                &12 \sum_{\substack{i_1 < j_1, \ i_2 < j_2 \\ i_1 \not = i_2 \textnormal{ or } j_1 \not = j_2}} \abs{Q_{i_1 j_1}}^2 \abs{Q_{i_2 j_2}}^2 + 6 \sum_{i < j} \abs{Q_{ij}}^4 + 6 \sum_{a, b, c, d \textnormal{ distinct}} Q_{ab} Q_{bc} Q_{cd} Q_{da} \\
                &\quad = 12 (1-q) + 6 q + 6 S_4 \\
                &\quad \leq 36 - 18q.
            \end{split}
        \end{equation}
        Hence, for $k \in \NN_{\geq 5} \cup \Set{\infty}$, we can already bound the fourth moment as
        \begin{equation}
            \EE[Z^4] \leq \frac{36 - 18q}{16} \leq \frac{9}{4} = 9c^2.
        \end{equation}

        If $k = 3$ and $\vec{d} = 3 (\vec{e}_i - \vec{e}_j)$, the arc set is of the form $\Set{i \to j, i \to j, i \to v, v \to j}$ with some $v \not \in \Set{i, j}$, which gives $4! / 2! = 12$ possible orderings.
        Combining with the $12$ orderings from $\vec{d} = -3 (\vec{e}_i - \vec{e}_j)$ gives contribution
        \begin{equation}
            12 \sum_{i,j,v \textnormal{ distinct}} Q_{ij}^2 Q_{iv} Q_{vj} = 12 \sum_{i \not = j} Q_{ij}^2 (Q^2)_{ij}.
        \end{equation}
        By Cauchy--Schwarz,
        \begin{equation}
            \abs*{ \sum_{i \not = j} Q_{ij}^2 (Q^2)_{ij} } \leq \sqrt{\sum_{i \not = j} \abs{Q_{ij}}^4} \sqrt{\sum_{i \not = j} \abs{(Q^2)_{ij}}^2} \leq \sqrt{2q} \fnorm{Q^2} \leq 2 \sqrt{2q}.
        \end{equation}
        Thus, together with the contribution from $\vec{d} = \vec{0}$, the fourth moment is bounded by
        \begin{equation}
            \EE[Z^4] \leq \frac{24 \sqrt{2q} + 36 - 18 q}{16} \leq \frac{13}{4} = 13c^2,
        \end{equation}
        where the second inequality is derived by maximizing over $q \in (0, 1]$ (attained at $q = 8/9$).

        If $k = 4$ and $\vec{d} = 4(\vec{e}_i - \vec{e}_j)$, then $i$ is the tail and $j$ is the head of all four arcs and the contribution is
        \begin{equation}
            \sum_{i \not = j} Q_{ij}^4 \leq \sum_{i \not = j} \abs{Q_{ij}}^4 = 2q.
        \end{equation}
        Hence, together with the contribution from $\vec{d} = \vec{0}$, we bound
        \begin{equation}
            \EE[Z^4] \leq \frac{2q + 36 - 18q}{16} \leq \frac{9}{4} = 9c^2.
        \end{equation}

        The remaining case is $k = 2$.
        We switch from the directed arcs $i \to j$ to undirected edges $\Set{i, j}$, since $\conj{x_i} = x_i$.
        Then, $Z = \sum_{i < j} Q_{ij} x_i x_j$ with $Q_{ij} \in \RR$ and for four edges $\Set{i_1, j_1}, \dots, \Set{i_4, j_4}$, $\EE[\prod_{\tau = 1}^{4} x_{i_\tau} x_{j_\tau}] = 1$ exactly when every vertex has even degree, and $0$ otherwise.
        The even multigraphs on four edges are $\Set{e, e, e, e}$ (one ordering), $\Set{e, e, e', e'}$ ($4! / (2!2!) = 6$ orderings, with each pair $\Set{e, e'}$ arising from exactly two ordered pairs) and 4-cycles ($4!$ orderings, each arising from exactly eight ordered quadruples $(a, b, c, d)$).
        Hence,
        \begin{equation}
            \begin{split}
                \EE[Z^4]
                &= \sum_{i < j} Q_{ij}^4 + 3 \sum_{\substack{i_1 < j_1, \ i_2 < j_2 \\ i_1 \not = i_2 \textnormal{ or } j_1 \not = j_2}} Q_{i_1 j_1}^2 Q_{i_2 j_2}^2 + 3 \sum_{a, b, c, d \textnormal{ distinct}} Q_{ab} Q_{bc} Q_{cd} Q_{da} \\
                &= q + 3 (1-q) + 3 S_4 \\
                &\leq 15 - 8q \leq 15 = 15 c^2.
            \end{split}
        \end{equation}
        This concludes that for every $k$ and every $\tilde{\vec{v}}$, $\EE[Z^4] \leq \kappa c^2 = 15 c^2$.
    \end{proof}

    The bounded moments now allow us to lower bound the probability of $\innerp{\tilde{\vec{v}}}{\vec{\chi}(\vec{x})}$ exceeding $\rho_0 = \frac{\sqrt{c}}{4 \sqrt{\kappa}}$.

    \begin{lemma}
        \label{th:prob_over_rho}
        For every unit vector $\tilde{\vec{v}} \in \RR^D$, if $\vec{x}$ is sampled uniformly over $\urootset_k^n$, then $Z \coloneqq \innerp{\tilde{\vec{v}}}{\vec{\chi}(\vec{x})}$ fulfills
        \begin{equation}
            \PP[Z \geq \rho_0] \geq \frac{4^{-4 / 3}}{\kappa}.
        \end{equation}
    \end{lemma}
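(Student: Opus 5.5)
The plan is a Paley--Zygmund-type argument that uses only the three moment facts of \cref{th:Z_bounded_moments}: $\EE[Z] = 0$, $\EE[Z^2] = c$, and $\EE[Z^4] \leq \kappa c^2$. The threshold $\rho_0 = \sqrt{c}/(4\sqrt{\kappa})$ and the constant $4^{-4/3}/\kappa$ appear to be tuned exactly to this chain of inequalities.

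\emph{Step 1: lower-bound the first absolute moment.} Write $Z^2 = \abs{Z}^{2/3}\abs{Z}^{4/3}$ and apply Hölder with exponents $3/2$ and $3$. This gives
\begin{equation*}
\EE[Z^2] \leq \EE[\abs{Z}]^{2/3}\,\EE[Z^4]^{1/3}.
\end{equation*}
Rearranging,
\begin{equation*}
\EE\abs{Z} \geq \frac{\EE[Z^2]^{3/2}}{\EE[Z^4]^{1/2}} \geq \frac{c^{3/2}}{\sqrt{\kappa}\,c} = \sqrt{c/\kappa}.
\end{equation*}

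\emph{Step 2: pass to the positive part.} Since $\EE[Z] = 0$, the positive and negative parts have equal mean, so $\EE[Z_+] = \tfrac12 \EE\abs{Z} \geq \sqrt{c}/(2\sqrt{\kappa})$. Only this use of the centering is needed.

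\emph{Step 3: split at the threshold.} Split $Z_+$ according to whether $Z$ is below or at least $\rho_0$:
\begin{equation*}
\EE[Z_+] \leq \rho_0 + \EE\bigl[Z\,\mathds{1}\{Z \geq \rho_0\}\bigr] \leq \rho_0 + \EE[Z^4]^{1/4}\,\PP[Z \geq \rho_0]^{3/4}.
\end{equation*}
The last inequality is Hölder with exponents $4$ and $4/3$ applied to $Z$ and the indicator.

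\emph{Step 4: combine.} Substituting the bounds from Steps 1--3 and $\EE[Z^4]^{1/4} \leq \kappa^{1/4}c^{1/2}$ gives
\begin{equation*}
\PP[Z \geq \rho_0]^{3/4} \geq \frac{\sqrt{c}/(2\sqrt{\kappa}) - \sqrt{c}/(4\sqrt{\kappa})}{\kappa^{1/4}\sqrt{c}} = \frac{1}{4\kappa^{3/4}}.
\end{equation*}
Raising both sides to the power $4/3$ yields $\PP[Z \geq \rho_0] \geq 4^{-4/3}/\kappa$, which is the claim.

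No step is a genuine obstacle. The only point needing care is that the bound in Step 1 goes the right way: Hölder must interpolate the second moment between the first and fourth moments, rather than applying Paley--Zygmund directly to $Z^2$, which would only control $\abs{Z}$ and not the one-sided event $\{Z \geq \rho_0\}$. Using the centering in Step 2 is what converts the two-sided bound into a one-sided one.
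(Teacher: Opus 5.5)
Your proposal is correct and follows the paper's proof essentially step for step. Both use Hölder with exponents $3/2$ and $3$ to get $\EE\abs{Z} \geq \sqrt{c/\kappa}$, then the centering $\EE[Z_+] = \tfrac12 \EE\abs{Z}$, then the split $Z_+ \leq \rho_0 + Z\indicator_A$ with Hölder at exponents $4$ and $4/3$, and finally the same arithmetic yielding $4^{-4/3}/\kappa$.
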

    \begin{proof}
        We have the moment bounds $\EE[Z] = 0$, $\EE[Z^2] = c$, and $\EE[Z^4] \leq \kappa c^2$ by \cref{th:Z_bounded_moments}.
        By Hölder with exponents $3/2$ and $3$,
        \begin{equation}
            c = \EE[Z^2] = \EE\left[\abs{Z}^{2 / 3} \abs{Z}^{4 / 3}\right] \leq (\EE[\abs{Z}])^{2 / 3} (\EE[Z^4])^{1 / 3},
        \end{equation}
        so 
        \begin{equation}
            \EE[\abs{Z}] \geq \frac{\EE[Z^2]^{3/2}}{\EE[Z^4]^{1/2}} \geq \sqrt{\frac{c}{\kappa}}.
        \end{equation}
        Since $\EE[Z] = 0$, we moreover have $\EE[Z_+] = \frac{1}{2} \EE[\abs{Z}] \geq \frac{\sqrt{c}}{2 \sqrt{\kappa}}$, where $Z_+ \coloneqq \max(Z, 0)$.

        Now, let $A \coloneqq \Set{Z \geq \rho_0}$ be the desired event with indicator variable $\indicator_A$.
        Then, pointwise $Z_+ \leq Z \indicator_A + \rho_0$, since on $A$ one has $Z_+ = Z \indicator_A$, and otherwise $Z_{+} < \rho_0$.
        Using Hölder with exponents $4$ and $4/3$, we can upper bound
        \begin{equation}
            \EE[Z \indicator_A] \leq \EE[Z^4]^{1 / 4} \EE[\indicator_A]^{3/4} \leq \sqrt{c} \kappa^{1/4} \EE[\indicator_A]^{3/4}.
        \end{equation}
        Hence,
        \begin{equation}
            \begin{split}
                \PP[Z \geq \rho_0]
                = \EE[\indicator_A]
                &\geq \frac{\EE[Z \indicator_A]^{4/3}}{c^{2/3} \kappa^{1/3}}\\
                &\geq \frac{(\EE[Z_+] - \rho_0)^{4/3}}{c^{2/3} \kappa^{1/3}} \\
                &\geq \frac{c^{2/3}}{4^{4/3}c^{2/3} \kappa} = \frac{4^{-4/3}}{\kappa}.
            \end{split}
        \end{equation}
    \end{proof}

    With \cref{th:prob_over_rho}, we can finally state the proof of \cref{th:uniform_ball}.

    \begin{proof}[Proof of \cref{th:uniform_ball}.]
        Recall that we consider the set $S_{\mathrm{u}}$ of $s_{\mathrm{u}}$ uniformly sampled $\vec{x} \in \urootset_k^n$.

        Let the failure event be 
        \begin{equation}
            F = \Set*{\min_{\lpnorm[2]{\tilde{\vec{v}}} = 1} \max_{\vec{x} \in S_{\mathrm{u}}} \innerp{\tilde{\vec{v}}}{\vec{\chi}(\vec{x})} < \rho_0}.
        \end{equation}
        On the complement of $F$ the support function of $\conv \vec{\chi}(S_{\mathrm{u}})$ satisfies
        \begin{equation}
            h(\tilde{\vec{v}}) = \max_{\vec{x} \in S_{\mathrm{u}}} \innerp{\tilde{\vec{v}}}{\vec{\chi}(\vec{x})} \geq \rho_0
        \end{equation}
        for every unit vector $\tilde{\vec{v}}$, and since $\conv \vec{\chi}(S_{\mathrm{u}})$ is compact and convex,
        \begin{equation}
            \conv \vec{\chi}(S_{\mathrm{u}}) = \bigcap_{\tilde{\vec{v}}} \Set{\vec{\chi} \given \innerp{\tilde{\vec{v}}}{\vec{\chi}} \leq h(\tilde{\vec{v}})} \supseteq \bar{B}(\vec{0}, \rho_0).
        \end{equation}
        Hence, we aim to upper bound $\PP[F]$.

        Consider the closed, off-center half-spaces
        \begin{equation}
            \mathcal{H}_{\tilde{\vec{v}}} \coloneqq \Set{\vec{\chi} \in \RR^D \given \innerp{\tilde{\vec{v}}}{\vec{\chi}} \geq \rho_0}.
        \end{equation}
        We claim that the set of such half-spaces has VC dimension at most $D$:
        If the closed, off-center half-spaces shattered $D+1$ points, then general affine half-spaces would shatter those $D+1$ points in addition to the origin, a contradiction as it is well known that they have VC dimension exactly $D + 1$ \citep{matousek2002}.

        By \cref{th:prob_over_rho}, 
        \begin{equation}
            \PP[\vec{\chi}(\vec{x}) \in \mathcal{H}_{\tilde{\vec{v}}}] \geq \frac{4^{-4/3}}{\kappa} \eqqcolon \varepsilon
        \end{equation}
        for every unit vector $\tilde{\vec{v}}$.
        Moreover, if $F$, then there is some $\tilde{\vec{v}}$, s.t.\ $\mathcal{H}_{\tilde{\vec{v}}} \cap \vec{\chi}(S_{\mathrm{u}}) = \emptyset$.
        Thus, applying \cref{th:eps_net} gives, whenever $s_{\mathrm{u}} \geq 8 / \varepsilon$ and $2 s_{\mathrm{u}} \geq D$,
        \begin{equation}
            \PP[F] \leq 2 \left( \frac{2 \e s_{\mathrm{u}}}{D} \right)^{D} 2^{- \varepsilon s_{\mathrm{u}} / 2}.
        \end{equation}

        $s_0 \coloneqq 200 \kappa (D + \ln(1 / \delta))$ satisfies $s_0 > 8 / \varepsilon$ and $2 s_0 > D$, so it is left to show that for $s_{\mathrm{u}} \geq s_0$,
        \begin{equation}
            2 \left( \frac{2 \e s_{\mathrm{u}}}{D} \right)^{D} 2^{- \varepsilon s_{\mathrm{u}} / 2} \leq \delta,
        \end{equation}
        that is
        \begin{equation}
            \frac{\varepsilon s_{\mathrm{u}}}{2} \ln (2) - D \ln \left( \frac{2 \e s_{\mathrm{u}}}{D} \right) - \ln (2) - \ln (1 / \delta) \geq 0.
        \end{equation}
        Differentiating with respect to $s_{\mathrm{u}}$ yields
        \begin{equation}
            \frac{\varepsilon \ln(2)}{2} - \frac{D}{s_{\mathrm{u}}} \geq 0,
        \end{equation}
        for $s_{\mathrm{u}} \geq s_0 > \frac{2 D}{\varepsilon \ln(2)}$, so it suffices to check on $s_0$.
        Using $D \ln(\frac{D + \ln(1 / \delta)}{D}) \leq \ln(1 / \delta)$, we bound
        \begin{equation}
            \begin{split}
                &\frac{\varepsilon s_0}{2} \ln (2) - D \ln \left( \frac{2 \e s_0}{D} \right) - \ln (2) - \ln (1 / \delta) \\
                &\quad \geq \frac{100}{4^{4/3}} \ln(2) (D + \ln(1 / \delta)) - D \ln \left( \frac{400 \kappa \e (D + \ln(1 / \delta))}{D} \right) - \ln(2) - \ln(1 / \delta) \\
                &\quad \geq D \left( \frac{100}{4^{4 / 3}} \ln(2) - \ln(800 \kappa \e) \right) + \ln(1 / \delta) \left( \frac{100}{4^{4/3}} \ln(2) - 2 \right).
            \end{split}
        \end{equation}
        Computing the values yields $\frac{100}{4^{4/3}} \ln(2) > 10.9$ and $\ln (800 \kappa \e) < 10.4$, so both brackets are positive, which completes the proof.
    \end{proof}

    \section{Numerical methods}
    \label{sec:numerics}
    This appendix collects the implementation details of the algorithms of \cref{sec:approx_algo} and the protocol underlying the benchmarks of \cref{sec:applications}.
    The implementation of all algorithms described here is part of the accompanying code~\citep{friese2026}.

    The ray binary search (\cref{alg:ray_binary_search}), and hence both linear program algorithms, requires the evaluation of $\RDF_k^{\elementwise -1}$ on the entries of $\gamma M$.
    By \cref{th:expectation}, $\RDF_k$ is injective on $\DD$ and its image contains the disk of radius $L_k$, so this inverse is well-defined wherever the search evaluates it.
    For $k = 2$ it is available in closed form, since $\RDF_2(x) = \frac{2}{\pi} \arcsin(x)$ and therefore $\RDF_2^{-1}(x) = \sin \left( \frac{\pi}{2} x \right)$.
    The same holds for $k = 4$, where
    \begin{equation}
        \RDF_4(z) = \frac{2}{\pi} \arcsin(\Re (z)) + \i \frac{2}{\pi} \arcsin(\Im (z))
    \end{equation}
    separates into its real and imaginary part, so that
    \begin{equation}
        \RDF_4^{-1}(w) = \sin \left( \frac{\pi}{2} \Re (w) \right) + \i \sin \left( \frac{\pi}{2} \Im (w) \right).
    \end{equation}
    For $k = 3$ and $k = \infty$ no closed form of the inverse is available, and we instead precompute $\RDF_k^{-1}$ itself on a grid and interpolate between the tabulated values.
    The grid consists of \num{1000} equally spaced values of $\Re (w)$ and of $\Im (w)$ over $[-1, 1]$ each, and every entry is obtained by solving $\RDF_k(z) = w$ with a hybrid Powell root finder seeded at $z = w$.
    Grid points with $w \notin \RDF_k(\DD)$, for which no solution exists, are marked as undefined.
    An evaluation is then a bilinear interpolation between the four surrounding grid points, and returns undefined whenever one of them is, in which case the search treats $\gamma M$ as having left the image of $\RDF_k$.
    The tabulation is thereby conservative by at most one grid cell near the boundary of $\RDF_k(\DD)$.

    Only a bounded part of the grid is ever queried.
    Cauchy interlacing applied to the principal submatrix of $M$ on $\Set{i, j}$ yields $\mu_{\min}(M) \leq - \abs{M_{ij}}$ for every pair $i \neq j$, so the initial upper bracket $\gamma_{\mathrm{hi}} = -1 / \mu_{\min}(M)$ of \cref{alg:ray_binary_search} already enforces $\lpnorm[\infty]{\gamma M} \leq 1$ throughout the search.
    Since the grid is computed once per $k$, and every subsequent evaluation is a table lookup, this does not affect the $\LandauO(n^3 \log(1 / \tau))$ cost of the ray binary search, which remains dominated by the positive semidefiniteness test.

    We run the search to a bracket width of $\tau = \num{1e-9}$ and accept $\gamma$ if and only if the smallest eigenvalue of $\1 + \RDF_k^{\elementwise -1}(\gamma M)$ is at least $\tau$.
    The returned elliptope point $X$ is therefore positive definite, so that the factorization $X = L L^{\dagger}$ required by the sampling step of \cref{alg:informed_lp,alg:mixed} can be computed by a Cholesky decomposition.

    The linear programs are solved with the simplex method of HiGHS~\citep{huangfu2018} through CVXPY~\citep{diamond2016}.
    We solve them in the equivalent scale invariant form, which normalizes the pulse weights to a probability vector and maximizes the scale $\alpha$ of the target, so that $\lpnorm[1]{\vec{\lambda}} = 1 / \alpha$.
    A simplex method is used in favor of an interior point method, because only a basic solution attains the minimal support $\nz(\vec{\lambda}) \approx D_{\mathrm{eff}} + 1$, which is the number of pulses that are actually applied.

    The exact optimum of the \ac{LP}~\eqref{eq:general_lp} is obtained by solving the same program over the full pulse set $\Theta_k^n$, which has $k^n$ variables.
    This is possible up to the $q = \num{20}$ qubit systems of \cref{sec:qubit_application} with roughly \num{e6} variables, and for the qudit benchmark of \cref{sec:qudit_application}, where fixing the phase of the dummy index leaves at most $5^8 \approx \num{3.9e5}$ variables, but not for the fermionic benchmark of \cref{fig:hofstadter}.
\end{document}

%% file: definitions.tex
\renewcommand{\vec}[1]{\pmb{#1}}

\newcommand{\uroot}{\omega}
\newcommand{\e}{\ensuremath\mathrm{e}} 
\renewcommand{\i}{\ensuremath\mathrm{i}} 

\DeclareMathOperator{\Tr}{Tr} 
\renewcommand{\Re}{\operatorname{Re}} 
\renewcommand{\Im}{\operatorname{Im}} 

\DeclareMathOperator{\ran}{ran} 
\DeclareMathOperator{\rank}{rank} 

\DeclareMathOperator{\conv}{conv} 
\DeclareMathOperator{\interior}{int} 

\DeclareMathOperator{\nz}{nz}

\newcommand{\1}{\mathds{1}} 

\newcommand{\EE}{\mathbb{E}} 
\newcommand{\PP}{\mathbb{P}} 
\newcommand{\Var}{\mathrm{Var}}

\newcommand{\indicator}{\textbf{I}} 

\DeclareMathOperator{\binomial}{Binomial}

\newcommand{\elementwise}{\circ} 

\newcommand{\dd}{\mathrm{d}} 

\newcommand{\mean}[1]{\overline{#1}}

\newcommand{\conj}[1]{#1^*} 

\newcommand{\CC}{\mathbb{C}}
\newcommand{\RR}{\mathbb{R}}
\newcommand{\KK}{\mathbb{K}}
\newcommand{\QQ}{\mathbb{Q}}
\newcommand{\ZZ}{\mathbb{Z}}
\newcommand{\NN}{\mathbb{N}}

\newcommand{\DD}{\mathbb{D}}

\newcommand{\urootset}{\mathcal{B}} 

\newcommand{\CUT}{\mathrm{CUT}} 
\newcommand{\elliptope}{\mathcal{E}} 

\newcommand{\Sym}{\mathrm{Sym}} 
\newcommand{\Herm}{\mathrm{Herm}} 

\DeclareMathOperator{\RDF}{\mathrm{RDF}}

\let\mod\relax
\DeclareMathOperator{\mod}{mod}

\DeclareMathOperator{\Cov}{\mathrm{Cov}}

\DeclareMathOperator{\dist}{\mathrm{dist}}

\DeclareMathOperator{\LandauO}{\mathrm{O}} 

\newcommand{\class}[1]{{\ensuremath{\mathsf{#1}}}}

\newcommand{\NP}{\class{NP}}

\DeclarePairedDelimiterX{\abs}[1]{\lvert}{\rvert}{%
  \ifblank{#1}{\,\cdot\,}{#1}
}   

\DeclarePairedDelimiterX\norm[1]\lVert\rVert{%
  \ifblank{#1}{\,\cdot\,}{#1}
}   

\newcommand{\lpnorm}[2][p]{\norm{#2}_{\ell_{#1}}}   
\newcommand{\lpnorma}[2][p]{\norm*{#2}_{\ell_{#1}}}   

\newcommand{\pnorm}[2][p]{\norm{#2}_{#1}} 

\DeclarePairedDelimiterX{\iiiNorm}[1]{\lvert}{\rvert}{%
  \delimsize\lvert\delimsize\lvert#1\delimsize\rvert\delimsize\rvert%
}

\DeclarePairedDelimiterXPP\snorm[1]{}\lVert\rVert{_\infty}{\ifblank{#1}{\,\cdot\,}{#1}}   

\DeclarePairedDelimiterXPP\twonorm[1]{}\lVert\rVert{_2}{\ifblank{#1}{\,\cdot\,}{#1}}   

\DeclarePairedDelimiterXPP\trnorm[1]{}\lVert\rVert{_1}{\ifblank{#1}{\,\cdot\,}{#1}}   

\DeclarePairedDelimiterXPP\fnorm[1]{}\lVert\rVert{_2}{\ifblank{#1}{\,\cdot\,}{#1}}   

\DeclarePairedDelimiterXPP\dnorm[1]{}\lVert\rVert{_\diamond}{\ifblank{#1}{\,\cdot\,}{#1}}   

\DeclarePairedDelimiterXPP\cbnorm[1]{}\lVert\rVert{_\mathrm{cb}}{\ifblank{#1}{\,\cdot\,}{#1}}   
\DeclarePairedDelimiterXPP\onenorm[1]{}\lVert\rVert{_{1\rightarrow 1}}{\ifblank{#1}{\,\cdot\,}{#1}}   
\DeclarePairedDelimiterXPP\ddnorm[1]{}\lVert\rVert{_{\diamond\rightarrow \diamond}}{\ifblank{#1}{\,\cdot\,}{#1}}   
\DeclarePairedDelimiterXPP\ssnorm[1]{}\lVert\rVert{_{\infty\rightarrow\infty}}{\ifblank{#1}{\,\cdot\,}{#1}}   

\providecommand\given{}
\newcommand\SetSymbol[1][]{%
  \nonscript\:#1\vert
  \allowbreak
  \nonscript\:
  \mathopen{}}
\DeclarePairedDelimiterX\Set[1]\{\}{%
  \renewcommand\given{\SetSymbol[\delimsize]}
  #1
}

\DeclarePairedDelimiterX\innerp[2]{\langle}{\rangle}{%
  \ifblank{#1}{\,\cdot\,}{#1} , \ifblank{#2}{\,\cdot\,}{#2}%
}

\DeclarePairedDelimiter{\ket}{\vert}{\rangle}

\DeclarePairedDelimiterX\braket[2]{\langle}{\rangle}%
  {#1\kern0.15ex\delimsize\vert\kern0.15ex\mathopen{}#2}

\DeclarePairedDelimiterX\ketbra[2]{\vert}{\vert}%
  {#1\kern0.15ex\delimsize\rangle\delimsize\langle\kern0.15ex\mathopen{}#2}

\DeclarePairedDelimiterX\sandwich[3]{\langle}{\rangle}%
  {#1\,\delimsize\vert\kern0.15ex\mathopen{}#2\kern0.15ex\delimsize\vert\kern0.15ex\mathopen{}#3}

\DeclarePairedDelimiterX\obraket[2]{(}{)}%
  {#1\kern0.15ex\delimsize\vert\kern0.15ex\mathopen{}#2}

\DeclarePairedDelimiterX\oketbra[2]{\vert}{\vert}%
  {#1\kern0.15ex\delimsize)\delimsize(\kern0.15ex\mathopen{}#2}

\DeclarePairedDelimiterX\osandwich[3]{(}{)}%
  {#1\,\delimsize\vert\kern0.15ex\mathopen{}#2\kern0.15ex\delimsize\vert\kern0.15ex\mathopen{}#3}
